\newtheorem{theorem}{Theorem}
\numberwithin{theorem}{section}
\newtheorem{example}[theorem]{Example}
\newtheorem{lemma}[theorem]{Lemma}
\newtheorem{proposition}[theorem]{Proposition}
\theoremstyle{definition}
\newtheorem{definition}[theorem]{Definition}
\def\th{\theta}
\def\et{\eta}
\def\cH{\mathcal{H}}
\def\bt{{{\bf {\xi}}}}
\def\la{\lambda}
\def\wdg{\wedge}
\def\RP{\mathbb {RP}}
\def\R{\mathbb R}
\def\Z{\mathbb Z}
\def\sl{\mathfrak{sl}}
\def\g{\mathfrak{g}}
\def\HH{\mathcal{H}}
\def\U{\mathcal{U}}
\def\G{\mathcal{G}}
\def\Ro{\mathcal{R}}
\def\K{\mathcal{K}}
\def\PSL{\mathrm{PSL}}
\def\SL{\mathrm{SL}}
\def\N{\mathcal{N}}
\def\F{\mathcal{F}}
\def\P{\mathcal{P}}
\def\Q{\mathcal{Q}}
\def\T{\mathcal{T}}
\def\tr{\mathrm{tr}}
\def\exp{\mathrm{exp}}
\def\kb{\mathbf{k}}
\def\f{{\bf f}}
\def\h{{\mathfrak h}}
\def\a{{\bf a}}
\def\b{{\bf b}}
\def\q{{\bf q}}
\def\m{{\mathfrak m}}
\def\v{{\bf v}}
\def\g{\mathfrak{g}}
\def\ha{f^s_a}
\def\hb{f^s_b}
\def\han{f_a}
\def\hbn{f_b}
\def\s{\varsigma}
\renewcommand{\k}[1]{\kappa_{#1}}
\begin{document}
\title{Hamiltonian evolutions of twisted gons in $\RP^n$}
\author{Gloria Mar\'i Beffa \& Jing Ping Wang\footnote{Supported in part by GMB's NSF grant DMS \#0804541 and JPW's EPSRC grant EP/I038659/1.}}
\date{}
\maketitle
\begin{abstract} In this paper we describe a well-chosen discrete moving frame and their associated invariants along projective polygons in $\RP^n$, and we use them to write explicit general expressions for invariant evolutions of projective $N$-gons. We then use a reduction process inspired by a discrete Drinfeld-Sokolov reduction to obtain a natural Hamiltonian structure on the space of projective invariants, and we establish a close relationship between the projective $N$-gon evolutions and the Hamiltonian evolutions on the invariants of the flow. We prove that {any} Hamiltonian evolution is induced on invariants by an evolution of $N$-gons - what we call a projective realization - and we give the direct connection.
Finally, in the planar case we provide completely integrable evolutions (the Boussinesq lattice related to the lattice $W_3$-algebra), their projective realizations and their Hamiltonian pencil.
We generalize both structures to $n$-dimensions and we prove that they are Poisson. 
We define explicitly the $n$-dimensional generalization of the planar evolution (the discretization of the $W_n$-algebra) and prove that it is completely integrable, providing also its projective realization.
\end{abstract}

\section{Introduction}
Studies have shown a close relationship between evolution of curves invariant under a group action and completely integrable systems. The best known such relation was established by Hasimoto in \cite{Ha} where he showed that the Vortex-Filament flow (VF) - a curve flow in Euclidean space, invariant under the Euclidean group - induces the nonlinear Shr\"odinger evolution (NLS) on the curvature and torsion of the curve flow. One can state this fact by describing VF as an Euclidean realization of NLS. Many such examples followed in \cite{M3, M4, M5, MSW, Q1, Q2, SW, TT, TU} and others, showing realizations in classical geometries, in the space of pure Spinors, Lagragian Grassmannians, and more. The systems realized include KdV, mKdV, Adler-Gel'fand-Dikii (AGD) flows (defined in fact by Lax), Sowada-Koterra systems, NLS, etc, with one equation sometimes realized in several different geometries.

All of these systems are biHamiltonian and at the heart of many of the studies one finds natural Hamiltonian structures defined on the curvatures of the flow. One such structure was obtained and proved to be linked directly to invariant curve evolutions in \cite{M1} and \cite{M2}, ensuring that {\it any} of its Hamiltonian evolutions had a geometric realization. The reduction of a second, compatible structure usually indicated the existence of an associated integrable system. Establishing the relation between Hamiltonian structures on curvatures and invariant curve flows was facilitated by the definition of group-based moving frames and the theoretical framework around this concept (see \cite{FO}). 

In a recent paper (\cite{MMW}) the authors developed the discrete version of group-based moving frames and used it to study induced completely integrable systems on discrete curvatures (or invariants) by invariant evolutions of polygons in different geometric settings. In particular they found projective and centro-affine discrete realizations of the modified Volterra and Toda lattices, and a realization of a Volterra-type equation in the Euclidean 2-sphere. They also obtained Hamiltonian structures although it was not at all clear how these could be obtained in general.

In this paper we study the case of twisted gons in $\RP^n$ (they are twisted to ensure that invariants are periodic). The space of gons, as related to integrable systems, has gained in relevance lately because of its connection to the pentagram map (see \cite{OST}). Indeed, the study of generalizations of the pentagram map was originally the motivation for GMB in the pursue of the discrete interpretation of a moving frame. In \cite{OST} the authors proved that the pentagram map (a very simple map taking a gon to the gon obtained by intersecting the segments that join every other vertex) was a completely integrable discretization of Boussinesq equation (or second AGD flow). Later work (\cite{Kh}) showed that some generalizations defined in \cite{M6} discretizing higher order AGD flows were also completely integrable. Also inspired by the pentagram map, the author of \cite{Ian} defined completely integrable systems on invariants of planar projective polygons, obtained by reduction from the centro-affine case. 
In this 
paper we do not work on maps but on differential-difference 
systems, but it is our hope that the resulting Poisson structures might also be relevant to this area. 

 In the first part of the paper we describe background information on discrete moving frames and Poisson-Lie groups. The paper is aimed at two different audiences, the math-physics/geometry audience and the computational/integrable systems one. Accordingly we have tried to include enough background for both and we have worked out the projective plane case in detail throughout the paper. In section 3 we proceed to describe a projective discrete moving frame along projective $N$-gons, choosing a particular frame that will produce invariants fitting our purposes. Section 4 describes how we can find explicitly, algorithmically and without any previous knowledge of the moving frame,  the evolution induced on the $N$-gon invariants by an invariant evolution of $N$-gons. The case of the projective plane already hints clearly at a direct connection to Hamiltonian systems at this stage. 
 
Section 5 proves that in the discrete case there also exists a naturally defined Hamiltonian structure on the space of $N$-gon invariants. We obtain this structure by linking discrete invariants to a reduction  process similar to that found in \cite{S2} and \cite{S3}, where the authors describe a discrete analogue of the Drinfel'd-Sokolov reduction (the DS reduction was directly used in \cite{Ian} to connect centro-affine and projective reductions and to study the projective plane and line). In these two papers - papers that have received, in our opinion, too little attention - the authors described a Hamiltonian structure obtained by reduction from a Poisson structure on $N$ copies of $G$ (or $G^{(N)})$ to a quotient $\N^{(N)}/B^{(N)}$ where $\N\subset G$ and the subgroup $B$ are chosen according to the finest gradation of the Lie algebra $\g$. (They work in the general semisimple case, while our paper works in the case $G = \PSL(n+1)$.) In this paper we show that by choosing a different gradation and 
enlarging both $\N$ and $B$ according to a certain parabolic choice, the quotient $\SL(n+1)^{(N)}/H^{(N)} \cong \N^{(N)}/B^{(N)}$ does not change and it represents the space of projective discrete invariants of $N$-gons in $\RP^n$. We then show that the Poisson bracket reduced in \cite{S2} can also be reduced with these new choices instead. This allows us to introduce several simplifications after which, surprisingly, one can see that the reduction coincides with that of the Sklyanin bracket associated to a very simple parabolic tensor (not an $R$-matrix). The new choice of quotient not only allow us to simplify the reduction, but it also allow us to connect it directly to invariant evolutions of projective $N$-gons.
 
In section 6 we prove that any evolution that is Hamiltonian with respect to the reduced Poisson bracket  has a geometric realization as invariant evolution of projective $N$-gons. We also provide a very simple relation between them, namely the gradient of the Hamiltonian needs to be equal to the coefficients of the discrete moving frame that defines the evolution of the polygons. 
This last section also describes completely integrable systems in the planar and higher dimensional cases. We first show that the integrable equation induced by
a certain invariant evolution of $N$-gons is the Boussinesq lattice related to the lattice $W_3$-algebra under a Miura transformation. This system appeared previously in { \cite{hiin97}}.
 In a final unexpected twist, we prove that the group right bracket reduces to a Poisson bracket in any dimension {\it even though it is also not Poisson prior to the reduction} (even if one chooses a valid $R$-matrix). We conjecture that both reductions form a Hamiltonian pencil for the integrable discretizations of the $W_n$-algebras and we prove this puzzling fact for the projective plane. To end the paper we define a generalization of our planar Boussinesq lattice to any dimension $n$ - a lattice discretization of the $W_n$-algebra - and we prove that it is completely integrable { by explicitly constructing its local master symmetry \cite{mr86c:58158}}. We also describe its rather simple projective realization.

From the work in \cite{S2} and \cite{S3} and the results in continuous cases there are clear indications that a similar study should be possible in a more general setting: that of parabolic manifolds and of geometric manifolds whose group of transformation is a a semidirect product of $\R^n$ with a semisimple group. These cases include most well-known flat geometries (Euclidean, conformal, Grassmannian, etc). It is remarkable to us that in all the examples we have studied discretizing the geometry produces integrable systems that are discretizations of known continuous integrable systems. This follows the main idea of modern discrete geometry to not discretize only the equation, but the entire geometric context. Given that many continuous integrable systems can be realized as geometric flows, this geometric process seems to give a path to finding integrable discretizations - not an trivial problem in itself - for flows that can be realized as geometric continuous flows. Work in these and other directions is 
currently underway. The authors would like to express their gratitude to Professor Semenov-Tian-Shansky for e-mail exchanges and assistance.

 \section{Background and definitions}
This section has two different, seemingly unrelated, parts. The first part deals with discrete group-based moving frames, the second with Poisson Lie-groups, classical $R$-matrices and the twisted quotient Poisson bracket.

\subsection{Discrete moving frames} 
 In this section we will describe basic definitions and facts needed along this paper on the subject of discrete group-based moving frames. They are taken from \cite{MMW} and occasionally slightly  modified to fit our needs. 

Let $M$ be a manifold and let $G\times M\to M$ be the action of a group $G$ on $M$. Although it is not needed, for simplicity we will assume from now on that $G$ is a subgroup of the general linear group. 
  \begin{definition}[Twisted $N$-gon]
{\it A twisted $N$-gon} in $M$ is a map $\phi:\Z\to M$ such that for some fixed $m\in G$ we have $ \phi(p+N) = m\cdot \phi(p)$ for all $p\in \Z$. (The notation $\cdot$ represents the action of $G$ on $M$.) The element $m\in G$ is called {\it the monodromy} of the gon. 
 \end{definition}
The main reason to work with twisted polygons is our desire to work with periodic invariants.
We will denote by $\P_N$ the space of twisted $N$-gons in $M$ and we will denote a twisted $N$-gon by its image $x = (x_s)$ where $x_s = \phi(s)$. If $G$ acts on $M$, it also has a natural induced action on $\P_N$ given by $g\cdot (x_s) = (g\cdot x_s)$.

 \begin{definition}[Discrete moving frame] Let $G^{(N)}$ denote the Cartesian product of $N$ copies of the group $G$. Allow $G$ to act on the left on $G^{(N)}$ using the diagonal action
\(
g\cdot (g_s) = (gg_s)
\)
(resp. right using the inverse diagonal action $g\cdot (g_s) = (g_sg^{-1})$.

 We say a map 
\[
\rho:\P_N \to G^{(N)}
\] 
is a left (resp. right)  {\it discrete moving frame} if $\rho$ is equivariant with respect to the action of $G$ on $\P_N$ and the left (resp. right inverse) { diagonal} action of $G$ on $G^{(N)}$. Since $\rho(x)\in G^{(N)}$, we will denote by $\rho_s$ its $s$th component; that is $\rho = (\rho_s)$, where $\rho_s(x) \in G$ for all $s$, $x = (x_s)$. Clearly, if $\rho = (\rho_s)$ is a left moving frame, then $\rho^{-1} = (\rho_s^{-1})$ is a right moving frame.
\end{definition}

Discrete moving frames are uniquely determined by a choice of {\it interlocking} transverse sections to the $G$-orbits on $M$. This was explained in \cite{MMW}. Indeed, if we choose sections $(T_s)$, $T_s \subset M$ transverse to the orbits of $G$, then a minimum number of conditions of the form
\begin{equation}\label{transverse}
g_p\cdot x_s \in T_s
\end{equation}
for some choices of $p$ and $s$ will determine $g_p$ completely in terms of $x$. These are called {\it normalization equations}. The authors of \cite{MMW} showed that if a choice of interlocking sections and normalization conditions determines $g = (g_s)$, then $\rho = g$ is a {\it right} discrete moving frame. One can choose normalization equations so that the associated moving frame is invariant under the {\it shift operator} $\T x_s = x_{s+1}$ (see \cite{MMW}).

 \begin{definition}[Discrete invariant] Let $F:\P_N \to \R$ be a function defined on $N$-gons. We say that $F$ is a scalar {\it discrete invariant} if
 \begin{equation}\label{invdef}
 F((g\cdot x_s)) = F((x_s))
 \end{equation}
 for any $g\in G$ and any $x = (x_s)\in \P_n$.
 \end{definition}
 We will naturally refer to vector discrete invariants when considering vectors whose components are discrete scalar invariants.

 \begin{definition}[Maurer-Cartan matrix] 
Let ${\rho}$ be a left (resp. right) discrete moving frame evaluated along a twisted $N$-gon. The element of the group
\[
K_s = \rho^{-1}_s\rho_{s+1} \hskip 2ex (\hbox{resp.}\hskip 1ex \rho_{s+1}\rho_s^{-1})
\]
is called the left (resp. right) $s$-Maurer-Cartan matrix for $\rho$. We will call the equation $\rho_{s+1} = \rho_s K_s$ the discrete $s$-Serret-Frenet equation. The element $K = (K_s) \in G^{(N)}$ is called the left (resp. right) {\it Maurer-Cartan matrix} for $\rho$.
\end{definition}
One can directly check that if $K = (K_s)$ is a left Maurer-Cartan matrix for the left frame $\rho$, then $(K_s^{-1})$ is a right one for the
right frame ${\rho^{-1}}=(\rho^{-1}_s)$, and vice versa. The entries of a Maurer-Cartan matrix are generators of {\it all discrete invariants}, as it was shown in \cite{MMW}. From now on we will assume, for simplicity, that $M = G/H$ is homogeneous and that $G$ acts on $M$ via left multiplication on representatives of the class. 

 Given a discrete {\it right} moving frame $\rho$, and assuming for simplicity that $\rho_s\cdot x_s = o$ for all $s$, one can describe the most general formula for an invariant evolution of polygons of the form
 \begin{equation}\label{invev0}
 (x_s)_t = F_s(x)
 \end{equation}
 in terms of the moving frame. This is reflected in the following theorem, which can be found in (\cite{MMW}). Denote by $\Phi_g:G/H\to G/H$ the map defined by the action of $g\in G$ on $G/H$, that is $\Phi_g(x) = g\cdot x$.
 
\begin{theorem}\label{evclass}
Any $G$-invariant evolution of the form (\ref{invev0}) can be written as
\begin{equation}\label{invev}
(x_s)_t = d\Phi_{\rho^{-1}_s}(o)(\v_s)
\end{equation}
where $o = [H]$, and $\v_s(x)\in T_{x_s} M$ is an invariant vector. 
\end{theorem}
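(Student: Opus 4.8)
The plan is to observe that the stated formula is, for each fixed configuration $x$, nothing more than a change of basis in the tangent space $T_{x_s}M$, so that the substantive content of the theorem is not the \emph{existence} of the decomposition but the \emph{invariance} of the resulting vectors $\v_s$.

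First I would note that the normalization hypothesis $\rho_s\cdot x_s = o$ says precisely that $x_s = \Phi_{\rho_s^{-1}}(o)$, so $\Phi_{\rho_s^{-1}}\colon M\to M$ is a diffeomorphism carrying $o$ to $x_s$ and its differential $d\Phi_{\rho_s^{-1}}(o)\colon T_oM\to T_{x_s}M$ is a linear isomorphism. Given any invariant evolution $(x_s)_t = F_s(x)$, with $F_s(x)\in T_{x_s}M$, I would simply \emph{define}
\[
\v_s \;=\; \bigl(d\Phi_{\rho_s^{-1}}(o)\bigr)^{-1}\!\bigl(F_s(x)\bigr)\;=\;d\Phi_{\rho_s}(x_s)\bigl(F_s(x)\bigr),
\]
which lives in $T_oM$ (the domain of the differential being applied in the statement). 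With this definition the identity $(x_s)_t = d\Phi_{\rho_s^{-1}}(o)(\v_s)$ holds tautologically, so everything reduces to showing that $\v_s$ is an invariant vector.

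The core of the argument is then to verify $\v_s(g\cdot x) = \v_s(x)$ for every $g\in G$, using two inputs. Because $\rho$ is a \emph{right} moving frame it is equivariant for the inverse diagonal action, so $\rho_s(g\cdot x) = \rho_s(x)g^{-1}$. And invariance of the flow means $F_s(g\cdot x) = d\Phi_g(x_s)\bigl(F_s(x)\bigr)$: indeed, if $x_s(t)$ solves the evolution then $g\cdot x_s(t)$ must too, and computing $\tfrac{d}{dt}\Phi_g(x_s(t))$ by the chain rule forces exactly this relation. Substituting both facts into the definition of $\v_s(g\cdot x)$ and invoking the cocycle rule $\Phi_a\circ\Phi_b = \Phi_{ab}$ for the action, the composition $d\Phi_{\rho_s(x)g^{-1}}(g\cdot x_s)\circ d\Phi_g(x_s)$ collapses to $d\Phi_{\rho_s(x)g^{-1}g}(x_s) = d\Phi_{\rho_s(x)}(x_s)$, whence $\v_s(g\cdot x) = d\Phi_{\rho_s(x)}(x_s)\bigl(F_s(x)\bigr) = \v_s(x)$, as required.

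The step I expect to demand the most care is precisely this cancellation: it is the factor $g^{-1}$ coming from the equivariance of the \emph{right} frame that exactly undoes the factor $g$ produced by the invariance of $F_s$, and this is the structural reason a right (rather than left) moving frame is the natural object here. I would therefore track base points and the order of composition in the chain rule scrupulously, and pin down the covariance conventions for $d\Phi$ consistently with $\Phi_a\circ\Phi_b=\Phi_{ab}$; once this bookkeeping is settled the invariance of $\v_s$, and hence the theorem, follows. The converse implication—that any formula of this shape with invariant $\v_s$ produces a $G$-invariant evolution—is obtained by reversing the same computation.
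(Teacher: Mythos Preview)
Your argument is correct and is the standard moving-frame proof of this fact. Note, however, that the paper does not actually supply its own proof of this theorem: the statement is imported from \cite{MMW} and is quoted without argument, so there is no in-paper proof to compare against. Your approach---define $\v_s$ by pushing $F_s$ forward along $d\Phi_{\rho_s}$, then verify invariance from the right-equivariance $\rho_s(g\cdot x)=\rho_s(x)g^{-1}$ together with the transformation law $F_s(g\cdot x)=d\Phi_g(x_s)F_s(x)$---is exactly the mechanism one expects and matches the spirit of the continuous group-based moving frame theory on which \cite{MMW} is modeled.

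One minor point worth flagging: the paper's statement writes $\v_s(x)\in T_{x_s}M$, but as you correctly observe, the domain of $d\Phi_{\rho_s^{-1}}(o)$ is $T_oM$, so $\v_s$ must live in $T_oM$; the paper's ``$T_{x_s}M$'' is a slip. Your handling of this is fine. The chain-rule cancellation you isolate as the delicate step is indeed the crux, and you have the bookkeeping right: $\Phi_{\rho_s(x)g^{-1}}\circ\Phi_g=\Phi_{\rho_s(x)}$ collapses the composition exactly because the frame is right rather than left.
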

If a family of polygons $x(t)$ is evolving according to (\ref{invev}), there is a simple process to describe the evolution induced on the Maurer-Cartan matrices, and hence on the invariants. It is described in the following theorem, which can be found at \cite{MMW} slightly modified.

 \begin{theorem}\label{structeq} Assume $x(t)$ is a flow of polygons solution of (\ref{invev}) and let $\s: G/H \to G$ be a section of $G/H$ such that $\s(o) = e \in G$. Let $\rho$ be a  right moving frame with $\rho_s\cdot x_s = o$ and assume that $\rho_s =\rho_s^H \s(x_s)^{-1}$, for some $\rho_s^H \in H$. Then
\begin{equation}\label{invarev}
(K_s)_t = N_{s+1}K_s -K_s N_s
\end{equation}
where $K_s$ is the right Maurer-Cartan matrix and $N_s =(\rho_s)_t\rho_s^{-1} \in \g$. Furthermore, assume $\g = \m\oplus \h$, where $\g$ is the algebra of $G$, $\h$ is the algebra of $H$ and $\m$ is a linear complement (which can be identified with the tangent to the image of the section $\s$). Then, if $N_s = N_s^\h + N_s^\m$ splits accordingly,
\begin{equation}\label{Ncond}
N_s^\m = - d\s(o) \v_s.
\end{equation}
\end{theorem}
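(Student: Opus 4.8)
The statement to establish is Theorem~\ref{structeq}, which splits into two essentially independent assertions: the structure equation~(\ref{invarev}) governing the $t$-evolution of the right Maurer--Cartan matrix, and the identification~(\ref{Ncond}) of the $\m$-component of $N_s$ with the invariant velocity $\v_s$. The first is a formal consequence of the definitions, obtained by differentiating a product; the second requires differentiating the defining constraint of the moving frame and unwinding the identification $T_o(G/H)\cong\m$ furnished by the section $\s$. I would treat them in that order.

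For~(\ref{invarev}), the plan is to start from $K_s=\rho_{s+1}\rho_s^{-1}$ and differentiate in $t$ using the Leibniz rule together with $(\rho_s^{-1})_t=-\rho_s^{-1}(\rho_s)_t\rho_s^{-1}$. Rewriting $(\rho_s)_t=N_s\rho_s$ and $(\rho_{s+1})_t=N_{s+1}\rho_{s+1}$, which is merely the definition $N_s=(\rho_s)_t\rho_s^{-1}$ read backwards, the two resulting terms collapse to $N_{s+1}\rho_{s+1}\rho_s^{-1}$ and $-\rho_{s+1}\rho_s^{-1}N_s$, that is to $N_{s+1}K_s$ and $-K_sN_s$. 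This part is purely algebraic; it is the discrete analogue of a zero-curvature relation and uses nothing about the homogeneous structure of $M$, so I expect no obstacle there.

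For~(\ref{Ncond}), the plan is to exploit that $\rho_s\cdot x_s=o$ holds for all $t$. Differentiating this identity in $t$, and keeping in mind that both $\rho_s$ and $x_s$ move, produces a sum of two terms: the fundamental vector field $\xi_{N_s}(o):=\frac{d}{d\epsilon}\big|_0\exp(\epsilon N_s)\cdot o$ coming from the variation of the frame, and $d\Phi_{\rho_s}(x_s)\big((x_s)_t\big)$ coming from the variation of the polygon. Substituting the invariant evolution $(x_s)_t=d\Phi_{\rho_s^{-1}}(o)(\v_s)$ from~(\ref{invev}) and using $\Phi_{\rho_s}\circ\Phi_{\rho_s^{-1}}=\mathrm{id}$ together with $\Phi_{\rho_s^{-1}}(o)=x_s$, the second term reduces by the chain rule to $\v_s$, so the constraint forces $\xi_{N_s}(o)=-\v_s$. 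It then remains to relate $\xi_{N_s}(o)$ to $N_s^\m$. Writing $\pi:G\to G/H$, $\pi(g)=g\cdot o$, one has $\xi_{N_s}(o)=d\pi(e)(N_s)=d\pi(e)(N_s^\m)$ since $d\pi(e)$ annihilates $\h$. Because $\s$ is a section with $\s(o)=e$ and $\m$ is identified with the image of $d\s(o)$, the map $d\s(o)$ is a right inverse of $d\pi(e)$ and restricts to the identity on $\m$; applying it to $\xi_{N_s}(o)=-\v_s$ yields $N_s^\m=-d\s(o)\v_s$. The hypothesis $\rho_s=\rho_s^H\s(x_s)^{-1}$ with $\rho_s^H\in H$ is what guarantees the frame is gauged compatibly with $\s$, so that this final identification is the correct one.

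The main obstacle lies entirely in the second part: one must differentiate the action $\rho_s\cdot x_s$ carefully as a map in two simultaneously moving arguments, and then verify both that the frame-variation term is exactly $d\pi(e)(N_s^\m)$ and that $d\s(o)$ inverts $d\pi(e)$ on $\m$. The composition identity $d\Phi_{\rho_s}(x_s)\circ d\Phi_{\rho_s^{-1}}(o)=\mathrm{id}$ on $T_oM$, together with the bookkeeping of which tangent space each differential inhabits, is where sign errors and base-point confusions are most likely to arise. By contrast, the structure equation~(\ref{invarev}) is routine, so I would present it first and briefly before devoting the bulk of the argument to~(\ref{Ncond}).
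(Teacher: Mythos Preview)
The paper does not actually prove Theorem~\ref{structeq}; it quotes the result from \cite{MMW} (``which can be found at \cite{MMW} slightly modified'') without supplying an argument. So there is no in-paper proof to compare against, and your proposal has to be judged on its own merits.

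Your argument is correct. The derivation of~(\ref{invarev}) by differentiating $K_s=\rho_{s+1}\rho_s^{-1}$ and inserting $(\rho_s)_t=N_s\rho_s$ is exactly right and is the standard computation. For~(\ref{Ncond}), your plan---differentiate $\Phi_{\rho_s}(x_s)=o$ in $t$, use $(x_s)_t=d\Phi_{\rho_s^{-1}}(o)\v_s$ and the chain rule $d\Phi_{\rho_s}(x_s)\circ d\Phi_{\rho_s^{-1}}(o)=\mathrm{id}_{T_oM}$ to reduce the polygon term to $\v_s$, then identify the frame term with $d\pi(e)(N_s^\m)$ via $\ker d\pi(e)=\h$, and finally invert using $d\pi(e)\circ d\s(o)=\mathrm{id}$ together with $\m=\mathrm{im}\,d\s(o)$---is sound and complete.

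One small remark: your last sentence attributes the final identification to the hypothesis $\rho_s=\rho_s^H\s(x_s)^{-1}$, but in fact that factorization is automatic once $\rho_s\cdot x_s=o$ (both $\rho_s^{-1}$ and $\s(x_s)$ lie in the fibre $\pi^{-1}(x_s)$). What actually makes the last step work is the assumption on $\m$, namely that $\m$ is the tangent to the image of $\s$ at $e$, so that $d\s(o)\circ d\pi(e)$ restricts to the identity on $\m$. You use this correctly; just adjust the attribution.
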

As we will see in our next section, in the projective case equation (\ref{invarev}) and condition (\ref{Ncond}) completely determine the element $N = (N_s)$ with no need to know the moving frame explicitly.

\subsection{Poisson-Lie groups and the twisted quotient structure}
In this section we will assume that $G$ is semisimple and that there exists a nondegenerate invariant inner product $\langle, \rangle$ in $\g$ that allows us to identify $\g$ and $\g^\ast$. In the case of $G = \SL(n+1)$ the inner product is given by the trace of the product of matrices (with a factor of $1/2$ in the diagonal entries) so that  $E_{i,j}^\ast = E_{j,i}$, if $i\ne j$ and $(E_{i,i}-E_{n+1, n+1})^\ast = E_{i,i}-E_{n+1, n+1}$. The matrix $E_{i,j}$ has zeroes everywhere except of the $(i,j)$ entry, where it has a $1$. The following definitions and descriptions are taken from \cite{S2} and \cite{GS}.

\begin{definition}[Poisson-Lie group] A Poisson-Lie group is a Lie group equipped with a Poisson bracket such that the multiplication map $G\times G\to G$ is a Poisson map, where we consider the manifold $G\times G$ with the product Poisson bracket.
\end{definition}

\begin{definition}[Lie bialgebra] Let $\g$ be a Lie algebra such that $\g^\ast$ also has a Lie algebra structure given by a bracket $[,]_\ast$. Let $\delta: \g \to \Lambda^2\g$ be the dual map to the dual Lie bracket, that is
\[
\langle \delta(v), (\xi \wedge\eta) \rangle = \langle [\xi, \eta]_\ast, v\rangle
\]
for all $\xi, \eta \in \g^\ast$, $v\in \g$. Assume that $\delta$ is a one-cocycle, that is 
\[
\delta([v,w]) = [v\otimes{\bf 1} + {\bf 1}\otimes v, \delta(w)] - [w\otimes{\bf 1} + {\bf 1}\otimes w, \delta(v)]
\]
for all $v,w\in \g$. Then $(\g, \g^\ast)$ is called a {\it Lie bialgebra}.
\end{definition}
 If $G$ is a Lie-Poisson group, the linearization of the Poisson bracket at the identity defines a Lie bracket in $\g^\ast$ and $\delta$ is called the {\it cobracket}. The inverse result (any Lie bialgebra corresponds to a Lie-Poisson group) is also true for connected and simply connected Lie groups, a result due to Drinfel'd (\cite{D}).
 
The following definition will be used to prove our reduction theorem.
 
 \begin{definition}[Admissible subgroup] \label{admissible} Let $M$ be a Poisson manifold, $G$ a Poisson-Lie group and $G\times M\to M$ a Poisson action. A subgroup $H \subset G$ is called {\it admissible} if the space $C^\infty(M)^H$ of $H$-invariant functions on $M$ is a Poisson subalgebra of $C^\infty(M)$.
 \end{definition}
 
 The following proposition describes admissible subgroups.
 
  \begin{proposition}\label{admissibleprop} (\cite{S1}) Let $(\g, \g^\ast)$ be the tangent Lie bialgebra of a Poisson Lie group $G$. A Lie subgroup $H\subset G$ with Lie algebra $\h \subset \g$ is admissible if $\h^0\subset \g^\ast$ is a Lie subalgebra, where $\h^0$ is the annihilator of $\h$.
 \end{proposition}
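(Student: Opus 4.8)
The plan is to reduce everything to an infinitesimal statement and then read off the conclusion from the duality between the cobracket $\delta$ and the dual bracket $[\cdot,\cdot]_\ast$. Assume $H$ is connected (or replace it by its identity component, which suffices since only $\h$ enters). For $\xi\in\g$ let $\xi_M$ denote the fundamental vector field on $M$ generated by the action, and let $\sigma_x:\g\to T_xM$, $\sigma_x(\xi)=(\xi_M)_x$, with dual map $\sigma_x^\ast:T_x^\ast M\to\g^\ast$. For $f\in C^\infty(M)$ define the $\g^\ast$-valued map $\mu_f(x)=\sigma_x^\ast(d_xf)$, so that $\langle\mu_f(x),\xi\rangle=(\xi_M f)(x)$. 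With $H$ connected, $f$ is $H$-invariant precisely when $\xi_M f=0$ for all $\xi\in\h$, that is, when $\mu_f(x)\in\h^0$ for every $x\in M$.

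First I would record the infinitesimal form of the Poisson-action hypothesis. Differentiating the Poisson-map condition $a^\ast\{f,g\}=\{a^\ast f,a^\ast g\}$ for the action $a:G\times M\to M$ (with $G\times M$ carrying the product bracket) along a one-parameter subgroup $\exp(t\xi)$ yields the standard identity $L_{\xi_M}\pi_M=(\sigma\wedge\sigma)(\delta\xi)$, where $\pi_M$ is the Poisson bivector of $M$ and $\sigma\wedge\sigma:\Lambda^2\g\to\Lambda^2TM$ is induced by $\sigma$. Writing $\delta\xi=\sum_i a_i\wedge b_i$, this reads $L_{\xi_M}\pi_M=\sum_i (a_i)_M\wedge(b_i)_M$.

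Next I would compute, for $\xi\in\h$ and $H$-invariant $f,g$, the derivative $\xi_M\{f,g\}$. Using $\{f,g\}=\pi_M(df,dg)$ and the Lie-derivative Leibniz rule,
\[
\xi_M\{f,g\}=(L_{\xi_M}\pi_M)(df,dg)+\pi_M(d(\xi_M f),dg)+\pi_M(df,d(\xi_M g)).
\]
Since $\xi\in\h$ and $f,g$ are $H$-invariant, $\xi_M f=\xi_M g=0$, so the last two terms drop and only $(L_{\xi_M}\pi_M)(df,dg)$ survives. Substituting the infinitesimal Poisson-action identity and recognizing $(a_i)_M f=\langle\mu_f,a_i\rangle$, I would rewrite this pointwise as $\langle\mu_f(x)\wedge\mu_g(x),\delta\xi\rangle$; then the defining duality $\langle\delta\xi,\alpha\wedge\beta\rangle=\langle[\alpha,\beta]_\ast,\xi\rangle$ gives
\[
\xi_M\{f,g\}(x)=\langle[\mu_f(x),\mu_g(x)]_\ast,\xi\rangle.
\]

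Finally I would invoke the hypothesis: since $f,g$ are $H$-invariant we have $\mu_f(x),\mu_g(x)\in\h^0$, and $\h^0$ being a Lie subalgebra of $(\g^\ast,[\cdot,\cdot]_\ast)$ forces $[\mu_f(x),\mu_g(x)]_\ast\in\h^0$. Hence the pairing above vanishes for every $\xi\in\h$, which says exactly that $\{f,g\}$ is again $H$-invariant; thus $C^\infty(M)^H$ is a Poisson subalgebra and $H$ is admissible. I expect the only genuinely delicate point to be the first step, namely deriving the infinitesimal Poisson-action identity with signs consistent with the cobracket convention fixed in the Lie-bialgebra definition; once that normalization is pinned down the rest is a direct pairing argument, and the sign is in any case irrelevant to the vanishing conclusion. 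A minor technical point to address is the passage between $H$-invariance and infinitesimal $\h$-invariance, handled by restricting to the identity component of $H$.
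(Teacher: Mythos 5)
There is nothing in the paper to compare your argument against: Proposition 2.9 is stated with the citation (\cite{S1}) and the paper gives no proof of it, importing it as a known fact. Your proof is correct, and it is in essence the classical argument of Semenov-Tian-Shansky for precisely this criterion: differentiate the Poisson-action property to get the infinitesimal identity $L_{\xi_M}\pi_M=(\sigma\wedge\sigma)(\delta\xi)$, expand $\xi_M\{f,g\}$ by the Leibniz rule so that invariance of $f,g$ kills all terms except $(L_{\xi_M}\pi_M)(df,dg)$, and then use the duality $\langle\delta(\xi),\alpha\wedge\beta\rangle=\langle[\alpha,\beta]_\ast,\xi\rangle$ (which matches the paper's convention for the cobracket) together with $\mu_f,\mu_g\in\h^0$ and the subalgebra hypothesis to conclude the pairing with every $\xi\in\h$ vanishes. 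All the pointwise manipulations are sound, and you are right that the sign ambiguity in the infinitesimal identity is immaterial for the vanishing conclusion.

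The one place where you are too quick is the parenthetical ``or replace it by its identity component, which suffices since only $\h$ enters.'' What your infinitesimal argument proves is that for $f,g\in C^\infty(M)^H$ the bracket $\{f,g\}$ is $\h$-invariant, hence invariant under the identity component $H_0$. If $H$ is disconnected this does not yet give $H$-invariance of $\{f,g\}$: for a nontrivial Poisson-Lie structure the individual maps $a_h$, $h\in H$, are not Poisson maps, so invariance under the component group cannot be deduced from invariance of $f$ and $g$ alone by a soft argument. Thus your proof establishes the proposition for connected $H$, which is how the result is formulated in \cite{S1}; the caveat is not vacuous here, since stabilizers of points in real projective space (the subgroups $H$ actually used in this paper) can fail to be connected. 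This is a defect of the statement as transcribed in the paper as much as of your proof, but it should be acknowledged rather than dismissed.
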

  We will now describe Poisson brackets that will be central to our study.
 \begin{definition}[Factorizable Lie bialgebras and $R$-matrices] A Lie bialgebra $(\g, \g^\ast)$ is called {\it factorizable} if the following two conditions hold:
 \begin{enumerate}
 \item $\g$ is equipped with an invariant bilinear form $\langle, \rangle$ so that $\g^\ast$ can be identified with $g$ via $\xi\in \g^\ast \to v_\xi\in g$ with $\xi(\cdot) = \langle v_\xi, \cdot\rangle$;
 \item the Lie bracket on $\g^\ast\cong\g$ is given by
 \begin{equation}\label{dualbracket}
 [\xi, \eta]_\ast = \frac12 \left([R(\xi), \eta] + [\xi, R(\eta)]\right),
 \end{equation}
 where $R\in \mathrm{End}(\g)$ is a skew-symmetric operator satisfying the {\it modified classical Yang-Baxter equation}
 \[
 [R(\xi), R(\eta)] = R\left([R(\xi), \eta]+[\xi, R(\eta)]\right)-[\xi, \eta].
 \]
 \end{enumerate}
 $R$ is called  a {\it classical $R$-matrix}.  Let $r$ be the $2$-tensor image of $R$ under the identification $\g\otimes\g \cong \g\otimes\g^\ast \cong \mathrm{End}(\g)$. The tensor $r$ is often referred to as the $R$-matrix also. 
\end{definition}
Following \cite{S2}, we will consider factorizable Lie algebras satisfying the following two extra conditions:
\begin{enumerate}
\item there exists a linear map $r_+:\g^\ast \to \g$ such that both $r_+$ and $r_- = -r_+^\ast$ are Lie algebra homeomorphisms;
\item \label{dos} the endomorphism $t = r_+-r_-$ is $\g$-equivariant and {\it induces a linear isomorphism $\g^\ast \to \g$}.
\end{enumerate} 
Assume that $\g$ has a gradation of the form $\g = \g_+ \oplus \g_0 \oplus \g_-$, where $\g_+$ and $\g_-$ are dual of each other and where $\g_0$ is commutative (for example, when $\g_0$ is the Cartan subalgebra). Then it is well-known, and a simple straightforward calculation, that the map $R: \g \to \g$
\begin{equation}\label{R}
R(\xi_++\xi_0+\xi_-) = \xi_+ - \xi_-
\end{equation}
defines a classical $R$-matrix. Still, it fails to satisfy (\ref{dos}). In the projective case this can be remedied by adding a carefully chosen $\g_0$ perturbation. This $\g_0$ perturbation is crucial in the reduction in \cite{S2} and \cite{S3}. On the other hand, we will see that when reduced to the same quotient with a different representation, any such perturbation  vanishes and hence it plays no role in our paper. 

Given a Poisson Lie group and its associated Lie bialgebra, we can define a similar structure on $G^{(N)}$ (this was explained in \cite{S1}). Indeed, we equipped $\g^{(N)} = \displaystyle \bigoplus_N \g$ with a nondegenerate inner product given by 
\[
\langle X, Y\rangle = \sum_{k=1}^N \langle X_k, Y_k\rangle
\]
and we extend $R \in \mathrm{End}(\g)$ to $R\in \mathrm{End}(\g^{(N)})$ using $R((X_s)) = (R(X_s))$.  
Then $G^{(N)}$ is a Poisson Lie-group (with the product Poisson structure) and $(\g^{(N)}, \g^{(N)}_R)$ is its tangent Lie bialgebra, where $\g_R$ denotes $\g$ with Lie bracket (\ref{dualbracket}). 

The Poisson structure in $G^{(N)}$ is called the {\it Sklyanin bracket}, but that is not the bracket we are interested in to start with, although we will come back to it. Indeed, given a factorizable Lie bialgebra, the author of \cite{S2} and \cite{S1} defined what is called a {\it twisted Poisson structure on $G^{(N)}$}. In what follows we will give the definition of this structure and we refer the reader to \cite{S1} for explanations on how to obtain it, and to \cite{S2} (Theorem 1) for the explicit formula. (In an unexpected twist, we will show that the reduction of the twisted quotient Poisson structure with a certain choice of $R$-matrix and that of the Sklyanin bracket with a different - simpler - tensor coincide, but this does not seem clear a-priori.)

\begin{definition}[Left and right gradients] \label{gradients}
Let $F: G^{(N)} \to \R$ be a differentiable function. We define the {\it left gradient} of $F$ at $L = (L_s)\in G^{(N)}$ as the element of $\g^{(N)}$ denoted by $\nabla F(L) = (\nabla_sF(L))$ satisfying
\[
\frac d{d\epsilon}|_{\epsilon = 0} F((\exp(\epsilon \xi_s) L_s)) = \langle \nabla_s F(L), \xi_s\rangle
\]
 for all $s$ and any $\xi \in \g^{(N)}$, and where, abusing notation, we are using $\langle, \rangle$ to denote both the inner product in $\g$ and $\g^{(N)}$.
 
 Analogously, we define the {\it right gradient} of $F$ at $L$ as the element of $\g^{(N)}$ denoted by $\nabla'F(L) = (\nabla_s'F(L))$ satisfying
 \[
\frac d{d\epsilon}|_{\epsilon = 0} F((L_s\exp(\epsilon \xi_s) )) = \langle \nabla_s' F(L), \xi_s\rangle
\]
for all $s$ and any $\xi \in \g^{(N)}$.
 \end{definition}
 
There is a clear relation between left and right gradients. Since 
$$F((L_s\exp(\epsilon \xi_s) )) = F((\exp(\epsilon L_s \xi_sL_s^{-1})L_s )),$$ one sees that $\langle \nabla_s' F(L), \xi_s\rangle = \langle\nabla_sF(L), L_s\xi_sL_s^{-1}\rangle$. From here we get 
 \begin{equation}\label{lrgrad}
 \nabla_s'F(L) = L_s^{-1} \nabla_s F(L) L_s
 \end{equation}
  
The twisted Poisson structure can be explicitly described as follows. Let $F, G:G^{(N)} \to \R$ be two Hamiltonian functions.  Let $\T$ be the shift operator $\T(X_s) = X_{s+1}$. We define the $\T$-twisted Poisson bracket as
\begin{equation}\label{twisted}
\begin{array}{c}\{F, G\}(L) = \sum_{s=1}^Nr(\nabla_s F \wedge \nabla_s G) + \sum_{s=1}^Nr(\nabla_s' F \wedge \nabla_s' G) \\ \\- \sum_{s=1}^N (\T\otimes \mathrm{id})(r)(\nabla_s' F \otimes \nabla_s G) + \sum_{s=1}^N(\T\otimes\mathrm{id})(r)(\nabla_s' G \otimes \nabla_s F). \end{array}
\end{equation}
The authors of {\cite{S2, S1}} proved that, not only is this a Poisson bracket, but  the gauge action of $G^{(N)}$ in itself, that is,  the action $G^{(N)}\times G^{(N)} \to G^{(N)}$
\[
(L_s) \to (g_{s+1} L_s g_s^{-1}),
\]
is a Poisson map and the gauge orbits are Poisson submanifolds. In section \ref{sec5} we will prove that this bracket can be reduced to the space of projective invariants of twisted $N$-gons to produce a Hamiltonian structure which is naturally linked to projectively invariant evolutions of twisted gons. The reduction coincides with that of the Sklyanin bracket
\begin{equation}\label{Sklyanin}
\{F, G\}_S(L) = \sum_{s=1}^N\hat r(\nabla_s F \wedge \nabla_s G) - \sum_{s=1}^N\hat r(\nabla_s' F \wedge \nabla_s' G)
\end{equation}
for a different, simpler, choice of tensor $\hat r$.
\section{Projective moving frames and their Maurer-Cartan matrices}
We now turn our attention to the study of projective discrete moving frames. The readers familiar with the definition of Wilczynski's projective curve invariants (\cite{W}) will recognize the process below as its discrete analogue.

First of all we will describe the gradation of $\sl(n+1)$ that defines $\RP^n$ as parabolic manifold. We can write $\sl(n+1) = \g_1\oplus \g_0\oplus\g_{-1}$, where $\xi_i\in \g_i$, $i=1,0,-1$ are of the form
\begin{equation}\label{gradation}
\xi_1 = \begin{pmatrix} 0&\dots&0&0\\ \vdots&\vdots&\vdots&\vdots\\ 0&\dots&0&0\\ \ast&\dots &\ast&0\end{pmatrix}, \xi_0 = \begin{pmatrix} \ast&\dots&\ast&0\\ \vdots&\vdots&\vdots&\vdots\\ \ast&\dots&\ast&0\\ 0&\dots&0&\ast\end{pmatrix}, \xi_{-1} = \begin{pmatrix} 0&\dots&0&\ast\\ \vdots&\vdots&\vdots&\vdots\\ 0&\dots&0&\ast\\ 0&\dots&0&0\end{pmatrix}
\end{equation}

In what follows we will consider $\RP^n$ as the homogeneous space $\PSL(n+1)/H$, where $\PSL(n+1) = \SL(n+1)/\pm I$ is the projective linear simple group and $H$ is the subgroup corresponding to the Lie subalgebra $\g_1\oplus \g_0$. We will also consider the section of the quotient $\s:\RP^n \to \PSL(n+1)$ given by
\begin{equation}\label{section}
\s (x) = \begin{pmatrix} I_n&x\\ 0&1\end{pmatrix}
\end{equation}
where $x$ are the homogeneous affine coordinates in $\RP^n$ associated to the lift $x \in \RP^n \to (x, 1)\in\R^{n+1}$, and where $I_n$ is the $n\times n$ identity matrix. 
 
Assume $x=(x_s)$ is a twisted  $N$-gon in $\RP^n$ and consider the lift to $\R^{n+1}$ given by 
\begin{equation}\label{V}
V_s = t_s \begin{pmatrix} x_s \\ 1\end{pmatrix}.
\end{equation}
We will say a polygon is {\it nondegenerate} if $V_s, \dots, V_{s+n}$ are independent for all $s$. The following result is known but we could not find a simple reference so we are including a short proof. 

\begin{lemma}\label{invertible} Consider the $N\times N$ matrix
\[
A = \begin{pmatrix} 1 &  \dots& 1 & 0 &0&\dots & 0\\ 0&1&\dots & 1& 0&\dots&0\\ 0&0&1&\dots&1&\dots&0\\ \vdots&\ddots&\ddots&\ddots&\ddots&\ddots&\vdots\\\vdots&\ddots&\ddots&\ddots&\ddots&\ddots&\vdots\\ 1&\dots &0&\dots &0&1&1\\1&\dots &1&0&\dots &0&1\end{pmatrix}
\]
where we have a total of $k$ consecutive $1$'s in each row. Then A is invertible if and only if $N$ and $k$ are coprimes.
\end{lemma} 
\begin{proof} Let $E$ be the matrix
\[
E = \begin{pmatrix} 0&1&0&\dots&0\\ 0&0&1&\dots &0\\ \vdots&\ddots&\ddots&\ddots&\vdots\\ 0&0&\dots&0&1\\ 1&0&\dots&0&0\end{pmatrix}.
\]
Once can easily see that
\begin{equation}\label{expansion}
A = I + E + E^2 + \dots  E^{k-1}.
\end{equation}
Furthermore, the characteristic polynomial for $E$ is $\lambda^N -1$ so that its eigenvalues are $\omega^p$, $p = 0, \dots, N-1$, where $\omega = e^{\frac{2\pi i}N}$ is an $N$th root of unity. Thus, both E and A can be diagonalized, showing that the eigenvalues of A are given by
\[
\alpha_p = 1 + \omega^p+\omega^{2p} \dots+ \omega^{p(k-1)}
\]
$p=0,1,\dots,N-1$. Assume that $\alpha_p = 0$ for some $0<p<N$. Then $\omega^p$ is also solution of the equation $\lambda^k-1 = 0$ since $\alpha_p(\omega^p-1) = \omega^{pk} - 1 = 0$. Therefore, $\frac pN = \frac qk$ for some integer $q=0,1,\dots,k-1$. This occurs if, and only if $N$ and $k$ are not coprime.
\end{proof}
 The following proposition was proved in \cite{OST} for the case $n = 2$. 
\begin{proposition} {\it If $N$ and $n+1$ are coprime}, then $t_s$ can be found so that
\begin{equation}\label{normcond}
\det (V_{s+n}, \dots, V_{s+1}, V_s) = 1
\end{equation}
for any $s$. 
\end{proposition}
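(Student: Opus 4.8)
The plan is to peel the scalars $t_s$ out of the determinant and reduce the normalization to a linear system whose coefficient matrix is exactly the matrix $A$ of Lemma~\ref{invertible} with $k=n+1$. First I would fix the base lift $\widehat V_s = \begin{pmatrix} x_s \\ 1\end{pmatrix}$, so that $V_s = t_s\widehat V_s$, and set $D_s = \det(\widehat V_{s+n},\dots,\widehat V_{s+1},\widehat V_s)$. Nondegeneracy of the polygon means precisely that $D_s\neq 0$ for every $s$. By multilinearity of the determinant in its columns,
\[
\det(V_{s+n},\dots,V_{s+1},V_s) = \left(\prod_{j=0}^{n} t_{s+j}\right) D_s,
\]
so the normalization (\ref{normcond}) is equivalent to the multiplicative system $\prod_{j=0}^{n} t_{s+j} = D_s^{-1}$ for all $s$.

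Next I would linearize. Writing $t_s = \varepsilon_s e^{u_s}$ with $\varepsilon_s\in\{\pm1\}$ and $u_s\in\R$, the magnitudes satisfy
\[
\sum_{j=0}^{n} u_{s+j} = -\log|D_s|,\qquad s=1,\dots,N.
\]
The coefficient matrix of this system is exactly the $N\times N$ matrix $A$ of Lemma~\ref{invertible} with $k=n+1$ consecutive ones per row, since $u_s$ occurs precisely in the $n+1$ equations indexed by $s-n,\dots,s$. Because $N$ and $n+1$ are coprime, Lemma~\ref{invertible} guarantees that $A$ is invertible, so the system has a unique solution $(u_s)$, and hence the magnitudes $|t_s|=e^{u_s}$ are uniquely determined.

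I expect the sign bookkeeping to be the only delicate point, the linear-algebra core being immediate from the Lemma. Taking logarithms discards the signs of the $D_s$, and after fixing the magnitudes the signs must solve a parity system $\sum_{j=0}^n\sigma_{s+j}=\tau_s$ over $\mathbb{F}_2$, where $\sigma_s,\tau_s\in\{0,1\}$ encode $\varepsilon_s$ and $\mathrm{sgn}(D_s)$ and the matrix is $A\bmod 2$; this reduction need not be invertible, so I would argue solvability for the specific right-hand side rather than invertibility. Equivalently, and more cleanly, I would bypass the logarithm entirely by dividing consecutive equations to get the first-order recursion $t_{s+n+1} = t_s\, D_s/D_{s+1}$. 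Since $\gcd(N,n+1)=1$, the shift $s\mapsto s+(n+1)$ is a single $N$-cycle on $\Z/N\Z$, so this recursion propagates one free value through all residues and closes up after $N$ steps; multiplying all $N$ equations shows the free value must satisfy $\left(\prod_s t_s\right)^{n+1} = \left(\prod_s D_s\right)^{-1}$, and the only real obstruction is extracting an $(n+1)$-st root of the right-hand side, which is unconditional when $n+1$ is odd and otherwise a sign condition on $\prod_s D_s$. The coprimality of $N$ and $n+1$ is thus used twice: to invert $A$ (equivalently, to make the cycle hit every index) and to make the recursion close consistently.
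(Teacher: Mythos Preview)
Your approach is essentially identical to the paper's: factor the scalars out by multilinearity to obtain the multiplicative system $\prod_{j=0}^{n} t_{s+j} = D_s^{-1}$, take logarithms to linearize, and invoke Lemma~\ref{invertible} with $k=n+1$ to invert the resulting circulant-type coefficient matrix. The paper simply writes $T_i=\log t_i$, $F_i=\log f_i$ and stops there, without any discussion of signs, so your extra bookkeeping (the $\mathbb{F}_2$ parity system and the alternative first-order recursion) goes beyond what the paper actually proves; the paper treats the normalization as a formality and does not address the parity obstruction you correctly flag when $n+1$ is even.
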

\begin{proof} Substituting (\ref{V}) in (\ref{normcond}) results in a system of equations of the form 
\[
t_{s+n} t_{s+n-1} \dots t_s = f_s
\]
for $s = 1, \dots N$, where 
\[
f_s^{-1} = \det(x_{s+n}-x_{s+n-1}, x_{s+n-1}-x_{s+n-2},\dots, x_{s+1}-x_s).
\]
We need to solve for $t_i$ using this system of equations. Applying logarithms changes the system into
\[
T_{s+n}+ T_{s+n-1} + \dots T_s = F_s
\]
where $T_i = \log t_i$ and $F_i = \log f_i$. Therefore, to prove the proposition we need to prove that the matrix of coefficients of this linear system is invertible. The coefficient matrix is the one displayed in lemma \ref{invertible} for $k = n+1$, ending the proof.
\end{proof}
Once we have defined the appropriate lifts, the definition of the left projective moving frame is rather simple.  Indeed, the map $(x_s) \to (\tilde\rho_s^{-1}) $, where
\begin{equation}\label{projmf}
\tilde\rho^{-1}_s = (V_{s+n}, \dots, V_{s+1}, V_s) \in \SL(n+1)
\end{equation}
defines a {\it left} discrete moving frame for $x=(x_s)$ since it is equivariant (we are using the inverse notation since we will reserve $\rho$ for right moving frames). Furthermore, it is straightforward to see that $\tilde\rho^{-1}_s \cdot o = x_s$ since the lift of $o\in G/H$ is the vector $e_{n+1} \in \R^{n+1}$ and $\tilde\rho^{-1}_s e_{n+1} = V_s$ whose projectivization is $x_s$. 

Once we have the left moving frame, the definition of the {\it left} Maurer-Cartan matrix is straightforward. Indeed, since $\{V_{s+n}, \dots, , V_{s+1}, V_s\}$ generates $\R^{n+1}$, one can always write a relation of the form
\[
V_{s+n+1} = \hat k_s^n V_{s+n} + \dots + \hat k_s^1 V_{s+1} + (-1)^n V_s
\]
where the coefficient of $V_s$ is determined by the condition $\det(V_{s+n}, \dots, V_s) = 1$ for all $s$ (and in particular for $s+1$). Thus, the {\it left} Maurer-Cartan equation is given by 
\begin{equation}
\tilde\rho^{-1}_{s+1} = \tilde\rho^{-1}_s \begin{pmatrix} \hat k_s^n & 1 & 0&\dots &0\\ \hat k_s^{n-1} & 0 & 1& \dots &0\\ \vdots& \vdots&\ddots&\ddots&\vdots\\ \hat k_s^1 & 0&\dots&0&1\\ (-1)^n & 0&\dots&0&0\end{pmatrix}.
\end{equation}
The corresponding {\it right} Maurer-Cartan matrix (generated by $\tilde\rho_s$) is given by the inverse of the matrix above, namely by
\[
\tilde K_s = \begin{pmatrix} 0&0&\dots&0&(-1)^n\\ 1&0&\dots &0&(-1)^{n-1}\hat k_s^n \\ 0&1&\dots&0&(-1)^{n-1} \hat k_s^{n-1}\\ \vdots&\ddots&\ddots&\ddots&\vdots\\ 0&\dots &0&1& (-1)^{n-1} \hat k_s^1\end{pmatrix} = \begin{pmatrix} 0&0&\dots&0&(-1)^n\\ 1&0&\dots &0&\tilde k_s^n \\ 0&1&\dots&0&\tilde k_s^{n-1}\\ \vdots&\ddots&\ddots&\ddots&\vdots\\ 0&\dots &0&1& \tilde k_s^1\end{pmatrix}.
\]
For reasons that will become clear later on, this is not yet our choice of invariants, moving frame and Maurer-Cartan matrix. The final choice is described in our next proposition.

\begin{proposition} There exists an invariant map $h: \RP(n+1)^N \to H^{(N)}$, (i.e., $h(x) = (h_s(x))\in H^{(N)}$ such that $h_s(g\cdot x) = h_s(x)$ for all $g\in \SL(n+1)$) gauging $\tilde K_s$  to
\begin{equation}\label{Ks}
K_s = \begin{pmatrix} 0& 0&\dots & 0&(-1)^n\\ 1&0&\dots &0&0\\ \vdots&\ddots&\ddots&\ddots&\vdots\\ 0&\dots &1&0&0\\k_s^2 & \dots &k_s^n&1& k_s^1\end{pmatrix}.
\end{equation}
That is, such that $K_s = h_{s+1} \tilde K_s h_s^{-1}$.
\end{proposition}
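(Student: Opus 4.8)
\emph{Reduction to a gauge equation.} The plan is to read the target relation $K_s=h_{s+1}\tilde K_s h_s^{-1}$ as the transformation rule for right Maurer--Cartan matrices under a change of right moving frame. Indeed, if $\rho_s=h_s\tilde\rho_s$ with $h=(h_s)$ an invariant $H^{(N)}$-valued map, then $\rho$ is again a right moving frame, its right Maurer--Cartan matrix is $\rho_{s+1}\rho_s^{-1}=h_{s+1}\tilde K_s h_s^{-1}$, and, since $h_s\in H=\mathrm{Stab}(o)$ fixes $o=[H]$, one still has $\rho_s\cdot x_s=o$. Conversely, any two right frames with $\rho_s\cdot x_s=o$ differ by such a factor $h_s=\rho_s\tilde\rho_s^{-1}$, which lies in $H$ (both frames fix $o$) and is automatically invariant (both are equivariant). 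Thus it suffices to solve the gauge equation
\begin{equation*}
h_{s+1}\tilde K_s=K_s h_s,\qquad h_s\in H,
\end{equation*}
for the entries of $h_s$ together with the unknown invariants $k_s^i$ occurring in $K_s$.

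\emph{Explicit construction.} I would solve this equation entrywise, exploiting the common shift (subdiagonal) structure of $\tilde K_s$ and $K_s$. Writing $h_s=\left(\begin{smallmatrix}P_s&0\\ w_s^T&\delta_s\end{smallmatrix}\right)\in H$, the equations arising from the columns that carry the pure shift $e_j\mapsto e_{j+1}$ relate the diagonal (torus) part of $h_{s+1}$ to that of $h_s$ cyclically around the $(n+1)$-cycle; since $N$ and $n+1$ are coprime this forces that part to be constant in $s$, and $\det h_s=1$ then normalizes $h_s$ to be unipotent, so that all the content sits in its strictly lower-triangular entries. These are determined recursively: left multiplication by $h_{s+1}$ performs row operations that, using the top row of $\tilde K_s$ (supported only in the last column), clear the unwanted last-column entries $\tilde k_s^n,\dots,\tilde k_s^2$, while right multiplication by $h_s^{-1}$ performs the column operations that deposit the corresponding quantities into the bottom row as $k_s^2,\dots,k_s^n$, leaving $k_s^1$ in the corner. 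The resulting $h_s$ and $k_s^i$ are explicit (polynomial) expressions in the $\tilde k$'s and their shifts. For $n=2$ this is a one-line check: $h_s=I-\tilde k_{s-1}^2 E_{2,1}$ yields $k_s^2=\tilde k_{s-1}^2$ and $k_s^1=\tilde k_s^1$.

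\emph{Invariance and closure.} Since the entries of $\tilde K_s$ are themselves invariants (the Maurer--Cartan entries generate all discrete invariants), and $h_s$, $k_s^i$ are built from them by finitely many shift-local algebraic operations, the map $h=(h_s)$ is $\SL(n+1)$-invariant and $N$-periodic in $s$. Its construction keeps it block-lower-triangular with unit determinant, hence $h_s\in H$, and $h$ is a well-defined invariant map on the space of nondegenerate twisted $N$-gons into $H^{(N)}$.

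\emph{Main obstacle.} The delicate step is the general-$n$ bookkeeping of this recursion: one must choose the strictly-lower pattern of $h_s$ so that \emph{every} forbidden position vanishes (both the interior of the first column and the upper part of the last column) while the bottom row comes out exactly as $(k_s^2,\dots,k_s^n,1,k_s^1)$, and one must verify that the triangular system is consistent---in particular that the cyclic constraints on the torus part are compatible precisely because $N$ is coprime to $n+1$. Everything else is routine matrix algebra.
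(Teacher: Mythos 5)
You set the problem up correctly: reading $K_s=h_{s+1}\tilde K_s h_s^{-1}$ as a change of right moving frame, noting that any $h$ built algebraically from the $\tilde k$'s is automatically $\SL(n+1)$-invariant, and normalizing away the block-diagonal (torus) part are all sound steps (though the torus argument is not needed for existence: one may simply search among unipotent $h_s$, which is what the paper does). The genuine gap is that the step you yourself flag as the ``main obstacle'' --- exhibiting the strictly-lower pattern of $h_s$ for general $n$ and verifying that the resulting entrywise system is consistent --- is never carried out, and that step is essentially the whole proof. Moreover, your description of the mechanism is inaccurate in a way that hides the difficulty: if $h_{s+1}$, $h_s$ are unipotent with nontrivial entries only in the first column, then left multiplication by $h_{s+1}$ does clear the entries $\tilde k_s^n,\dots,\tilde k_s^2$ from the last column (using the top row), but right multiplication by $h_s^{-1}$ deposits the corresponding once-shifted quantities into the \emph{first} column, not into the bottom row. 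The paper's proof consists precisely of the device that handles this: it composes $n-1$ elementary gauges, the first producing $(-1)^{n}\tilde k_{s-1}^j$ in the first column of the gauged matrix, and each subsequent gauge pushing the surviving block of entries one column to the right while depositing exactly one entry into the bottom row, at the cost of one additional shift in $s$ per step.

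A symptom you can check against your own $n=2$ computation: there $k_s^2=\tilde k_{s-1}^2$ already carries one shift, and in general the invariant landing at position $(n+1,j-1)$ of $K_s$ is, up to the sign $(-1)^n$, the $(j-1)$-fold shift $\tilde k_{s-j+1}^{j}$. So the unknowns of your ``triangular system'' at site $s$ necessarily mix data from $n-1$ different sites, and its consistency is exactly what the paper's column-by-column iteration establishes. Without that iteration (or an explicit closed-form $h_s$, which would have to encode the same cascade of shifts), the proposal is a correct strategy plus an unproven claim --- an outline rather than a proof.
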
  

Notice that this proposition implies the existence of a right discrete moving frame $\rho$ such that $\rho_s\cdot x_s = o$ ($h\in H^{(N)}$, which is the isotropy subgroup of $o$), with $K = (K_s)$ as its Maurer Cartan  matrix. Indeed, if $(\tilde\rho_s)$ is the right discrete moving frame generating $(\tilde K_s)$, then $(\rho_s) = (h_s \tilde \rho_s)$. 
\begin{proof}  If we choose
\[
h_s = \begin{pmatrix} 1 & 0 & 0&\dots &0\\ (-1)^{n-1}\tilde k_{s-1}^n & 1 & 0&\dots & 0\\ \vdots & \ddots & \ddots &\vdots\\ (-1)^{n-1} \tilde k_{s-1}^2 & 0 &\dots &1&0\\ 0&0&\dots &0&1\end{pmatrix}
\]
then 
\[
h_{s+1} \tilde K_s h_s^{-1}= \begin{pmatrix} 0&0&\dots &0&(-1)^n\\ 1&0&\dots &0&0\\ (-1)^n\tilde k_{s-1}^n&1&0&\dots&0\\ \vdots&\ddots&\ddots&\ddots&\vdots\\ (-1)^n \tilde k_{s-1}^2&0&\dots&1& \tilde k_s^1\end{pmatrix}
\]
Furthermore, if a matrix is of the form
\[
M_s = \begin{pmatrix} 0&0&0&0&\dots&0&0&0&(-1)^n\\ 1&0&0&0&\dots&0&0&0&0\\0&1&0&0&\dots&0&0&0&0\\ \vdots&\ddots&\ddots&\ddots&\dots&\dots&\vdots\\ 0&\dots&0&1&0&\ddots&0&0&0\\  0&\dots&0&a_s^n&1&\ddots&0&0&0\\ \vdots&\dots&\dots&\vdots&\ddots&\ddots&\ddots&\vdots\\ 0&\dots&0&a_s^{k+1}&0&\ddots&1&0&0\\a_s^2&\dots&a_s^{k-1}&a_s^k&0&\dots&0&1&a_s^1\end{pmatrix},
\]
then, after gauging it by the matrix 
\[
h_s = \begin{pmatrix} 1 & 0&0&\dots&0&0&0&0\\ 0&1&0&\dots&0&0&0&0\\ \vdots &\ddots&\ddots&\ddots&\ddots&\ddots&\vdots&\vdots\\ 
0&\dots&0&1&0&\dots&0&0\\ 0&\dots&0&-a_{s-1}^n&1&\dots&0&0\\ \vdots&\ddots&\ddots&\vdots&\ddots&\ddots&\vdots&\vdots\\ 0&\dots&0&-a_{s-1}^{k+2}&0&\dots&1&0\\ 0&\dots&0&0&0&\dots&0&1\end{pmatrix}
\]
it transforms into 
\[
h_{s+1} M_s h_s^{-1} =  \begin{pmatrix} 0&0&0&0&\dots&0&0&0&0&(-1)^n\\ 1&0&0&0&\dots&0&0&0&0&0\\ \vdots&\ddots&\ddots&\ddots&\dots&\dots&\ddots&\ddots&\vdots&\vdots\\ 0&\dots&0&1&0&0&\ddots&0&0&0\\  0&\dots&0&0&1&0&\ddots&0&0&0\\  0&\dots&0&0&a_{s-1}^n&1&0&\ddots&0&0\\ \vdots&\dots&\vdots&\vdots&\ddots&\ddots&\ddots&\ddots&\vdots&\vdots\\ 0&\dots&0&0&a_{s-1}^{k+2}&\ddots&0&1&0&0\\a_s^2&\dots&a^{k-1}_s&a_s^k&a_{s-1}^{k+1}&\dots&0&0&1&a_s^1\end{pmatrix}.
\]
We can reiterate this process $n-2$ times to finish the proof of the theorem.
\end{proof}
If we again describe $\sl(n+1) = \g_1\oplus \g_0\oplus \g_{-1}$ as in (\ref{gradation}), where the Lie subalgebra of $H$ is $\h = \g_1\oplus \g_0$ and where $\g_{-1}$ represents the tangent to the section $\s(\RP^n)$ at the identity, then, under this gauge, 
\(
(K_s)_t K_s^{-1} \in \g_1\) for all $s$. This is indeed the main advantage of adopting this choice of Maurer-Cartan matrix, as it will readily allows us to relate the invariant vector $\v_s$ in (\ref{invev}) to the variational derivative of a Hamiltonian function. (This will be explained in section \ref{sec6}.) To finish with this section, we will describe the formula (\ref{invev}) in our new frame $\rho$ and will initiate our running example. Since $\rho^{-1}_s = \tilde\rho_s^{-1} h_s^{-1}$ where $h_s$ is of the form
\[
h_s = \begin{pmatrix} 1& 0 &0&\dots&0\\ h_{21}&1&0&\dots&0\\ \vdots&\ddots&\ddots&\ddots&\vdots\\ h_{n 1}&\dots&h_{n n-1}&1&0\\ 0&\dots&0&0&1\end{pmatrix}
\]
the new left moving frame will be given by
\begin{equation}\label{finalframe}
\rho_s^{-1} = (V_{s+n}, \dots, V_{s+1}, V_s) h_s^{-1} = (W_{s+n},\dots, W_{s+1}, W_s)
\end{equation}
where $W_{s+k} = V_{s+k} + \sum_{i=1}^{k-1} h^{n-i+1, n-k+1} V_{s+i}$, $k\ge 1$ and $W_s = V_s$. Therefore, using (\ref{invev}), a general invariant evolution of polygons can be described in this new frame as the projectivization of the lifted evolution
\begin{equation}\label{lift}
(V_s)_t = \begin{pmatrix}W_{s+n} & \dots &W_{s+1}&W_s\end{pmatrix}\begin{pmatrix} \v_s\\ v^0_s\end{pmatrix}
\end{equation}
for some invariant vector $\v_s$ and with $v^0_s$ chosen uniquely in terms of $\v$ and the invariants so that the normalization condition (\ref{normcond}) is preserved.
\begin{example}\end{example}
 Our running example will be the projective plane. The case $\RP^1$ case was studied in (\cite{MMW}). In the planar case our first left moving frame is given by
 \(
 \tilde\rho_s^{-1} = (V_{s+2}, V_{s+1}, V_s) \) with $V_{s+3} = \hat a_s V_{s+2}+\hat b_sV_{s+1}+V_s$. The left Maurer-Cartan matrix is
 \[
 \hat K_s = \begin{pmatrix} \hat a_s & 1 & 0\\ \hat b_s & 0 & 1\\ 1&0&0\end{pmatrix}
 \]
 and the right one its inverse
 \[
 \tilde K_s =  \begin{pmatrix} 0&0&1\\ 1&0&-\hat a_s\\ 0&1&-\hat b_s\end{pmatrix}.
 \]
 (The reader should not be confused with the notation in \cite{OST}, where $\hat a$ and $\hat b$ are represented by $a$ and $b$.) We can gauge this matrix using 
 \[
 h_s = \begin{pmatrix} 1&0&0\\ \alpha_s & 1 & 0\\ 0&0&1\end{pmatrix} 
 \]
 as
 \[
 \begin{pmatrix} 1&0&0\\ \alpha_{s+1}&1&0\\ 0&0&1\end{pmatrix} \begin{pmatrix} 0&0&1\\ 1&0&-\hat a_s\\ 0&1&-\hat b_s\end{pmatrix} \begin{pmatrix} 1&0&0\\ -\alpha_s&1&0\\ 0&0&1\end{pmatrix}
 = \begin{pmatrix} 0&0&1\\ 1&0&-\hat a_s+\alpha_{s+1}\\ -\alpha_s&1&-\hat b_s\end{pmatrix}
 \]
 which, after choosing $\alpha_s = \hat a_{s-1}$ becomes
 \[
 K_s = \begin{pmatrix} 0&0&1\\ 1&0&0\\ a_s & 1 & b_s\end{pmatrix}
 \]
 where $a_s = -\hat a_{s-1}$ and $b_s = -\hat b_s$.
The new left moving frame is given by
\[
\rho_s^{-1} = \tilde\rho_s^{-1} h_s^{-1} = (V_{s+1}+a_s V_{s+1}, V_{s+1}, V_s)
\]
and an invariant evolution of polygons in this frame will be given by the projectivization of
\begin{equation}\label{planarev}
 (V_s)_t = v^1_s (V_{s+2}+a_s V_{s+1}) + v^2_s V_{s+1} + v^0_s V_s
 \end{equation}
 where $v^1$ and $v^2$ are arbitrary invariants and $v^0$ is uniquely determined by the normalization of the lift.

\section{An explicit formula for the evolution of the invariants}
We now turn to the following question: if we have an invariant evolution of the form (\ref{invev}), where $\rho_s$ is the (right) moving frame associated to $K_s$, how can we effectively obtain the explicit evolution of the invariants $k_s$? A partial answer is given by Theorem \ref{structeq}. If we choose the section $\s$ as in (\ref{section}),  we can write $N(x)=(N_s)\in \sl(n+1)^{(N)}$, with $N_s = (\rho_{s})_t\rho_s^{-1}$, as 
\begin{equation}\label{Ns}
N_s = \begin{pmatrix} A_s & \v_s\\ \a^T_s & -\mathrm{tr}(A_s) \end{pmatrix}
\end{equation}
where $\v_s$ is given by (\ref{invev}). The matrix $N_s$ will satisfy the structure equations
\begin{equation}\label{streq}
(K_s)_t = N_{s+1} K_s - K_s N_s.
\end{equation} 
Luckily, these equations also allow us to solve for $A_s$ and $\a_s$ for all $s$ so that $N_s$ is completely determined by $\v$, $\kb$ and (\ref{streq}).  

\begin{theorem}\label{Ndet} Assume the Maurer-Cartan element $(K_s)$ is defined by (\ref{Ks}), and assume $N_s$ is defined as in (\ref{Ns}). Assume further than the operator 
\begin{equation}\label{hypo}
\T+1+\T^{-1} + \dots + \T^{-(n-1)}
\end{equation}
is invertible. Then, the structure equations (\ref{streq}) determine uniquely $N_s$ and the evolution in time of the invariants $k_s^i$, $s = 1, \dots, N$, $i = 1,\dots n$ as functions of $\v$ and $\kb$.
\end{theorem}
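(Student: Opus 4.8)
The plan is to expand the matrix identity (\ref{streq}) entry by entry and to exploit the companion form of $K_s$ in (\ref{Ks}), turning the structure equations into a system of scalar difference equations that can be solved by a recursion on the matrix indices, with a single inversion of the operator (\ref{hypo}) as the only non-algebraic step. Write $N_s$ as in (\ref{Ns}) with entries $(N_s)_{ij}$, so that $(N_s)_{i,n+1}=(\v_s)_i$ and $(N_s)_{n+1,j}=(\a_s)_j$ for $i,j\le n$, the block $A_s=((N_s)_{ij})_{i,j\le n}$, and $(N_s)_{n+1,n+1}=-\tr(A_s)$. Since every entry of $K_s$ off the last row is the constant $0$, $1$, or the fixed corner $(-1)^n$, the matrix $(K_s)_t$ is supported only on the last row, in the $n$ positions carrying $k_s^1,\dots,k_s^n$. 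Hence all of the remaining equations of (\ref{streq}) read $(N_{s+1}K_s)_{ij}=(K_sN_s)_{ij}$, and only the last row will produce the evolution of the invariants.

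First I would record that, because $K_s$ is the shift matrix away from its last row and corner, right multiplication $N_{s+1}K_s$ sends column $j$ to column $j+1$ of $N_{s+1}$ plus a $k_s$-multiple of its last column, while left multiplication $K_sN_s$ sends row $i$ to row $i-1$ of $N_s$, the corner $(-1)^n$ wrapping column $n+1$ back to column $1$ and row $1$ back to row $n+1$. The interior equations ($2\le i\le n$, $1\le j\le n-1$) then become the recursion
\[
(N_{s+1})_{i,j+1}=(N_s)_{i-1,j}-k_s^{j+1}(\v_{s+1})_i ,
\]
which runs along the diagonals $j-i=\text{const}$ while advancing $s$. It is seeded purely by $\v$: the $j=n$ equations give the super-diagonal seeds $(N_s)_{\ell,n}=(\v_{s+1})_{\ell+1}$, and the column-$(n+1)$ equations give the sub-diagonal seeds $(N_{s+1})_{i,1}=(-1)^n\big((\v_s)_{i-1}-k_s^1(\v_{s+1})_i\big)$ for $2\le i\le n$. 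Propagating, every off-diagonal entry of $A_s$ becomes algebraic in $\v$ and $\kb$, and the first-row equations then read off the whole vector $\a_s$.

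The only entries not reached this way are the diagonal of $A_s$. On it the recursion is $(N_{s+1})_{ii}=(N_s)_{i-1,i-1}-k_s^i(\v_{s+1})_i$, so each diagonal entry is a backward shift of $(N_s)_{11}$ plus known terms and
\[
\tr(A_s)=\big(1+\T^{-1}+\dots+\T^{-(n-1)}\big)(N_s)_{11}+(\text{known}).
\]
Combined with the single $(1,n+1)$ equation, which after using $(N_s)_{n+1,n+1}=-\tr(A_s)$ reads $\T(N_s)_{11}+\tr(A_s)=(\text{known})$, this eliminates $\tr(A_s)$ and produces exactly
\[
\big(\T+1+\T^{-1}+\dots+\T^{-(n-1)}\big)(N_s)_{11}=(\text{known function of }\v,\kb).
\]
By the hypothesis that (\ref{hypo}) is invertible, $(N_s)_{11}$ is determined uniquely, and with it the whole diagonal, $\tr(A_s)$, and therefore all of $N_s$. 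Finally, the last-row equations (the positions where $(K_s)_t$ is supported) are read as definitions of $(k_s^i)_t$, giving the induced evolution explicitly as functions of $\v$ and $\kb$.

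The main obstacle is the bookkeeping — tracking how the shift $\T$ accumulates as the diagonal recursion winds through all $n+1$ levels of the companion structure — together with one consistency check: the equation count leaves a single leftover equation (the $(n+1,n)$ entry) beyond those used to define $A_s$, $\a_s$ and $(k_s^i)_t$. I would dispatch it by noting that existence of a solution is already guaranteed by Theorem \ref{structeq} for any invariant $\v$, so that this leftover equation holds automatically and the recursion merely identifies the solution uniquely and frame-freely. To close, I would remark that (\ref{hypo}) factors as $\T^{-(n-1)}\sum_{j=0}^{n}\T^j$, whose eigenvalues on $N$-periodic sequences are $\omega^{-p(n-1)}\sum_{j=0}^n\omega^{pj}$ with $\omega=e^{2\pi i/N}$; these vanish for some $0<p<N$ precisely when $\gcd(N,n+1)>1$, so the invertibility hypothesis is exactly the coprimality condition already appearing in Lemma \ref{invertible} and in the normalization.
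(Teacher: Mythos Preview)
Your proposal is correct and follows essentially the same route as the paper's own proof: both expand the structure equation $(K_s)_t=N_{s+1}K_s-K_sN_s$ using the companion form of $K_s$, obtain the diagonal recursion $(N_{s+1})_{ii}=(N_s)_{i-1,i-1}-k_s^{i}(\v_{s+1})_i$ seeded by the $\g_{-1}$-column $\v$ and the first-column equations, read off $\a_s$ from the first row, and reduce the determination of the diagonal to the single scalar equation $(\T+1+\T^{-1}+\dots+\T^{-(n-1)})(N_s)_{11}=\text{known}$. The paper carries this out in block form (its equations (\ref{eq1})--(\ref{eq5})) while you work entry by entry, but the logic and the order of elimination are identical; your explicit disposal of the leftover $(n+1,n)$ consistency equation via Theorem~\ref{structeq} and your closing coprimality remark are small additions the paper either leaves implicit or places immediately after the proof.
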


\begin{proof}
Let us write $K_s$ and $N_s$ as
\[
\begin{pmatrix}\Lambda & (-1)^n e_1 \\ \overline{\kb_s}^T+e_n^T & k_s^1 \end{pmatrix}, \hskip 3ex N_s =\begin{pmatrix} A_s & \v_s\\ \a_s^T & -\tr(A_s)\end{pmatrix},
\]
where $\overline{\kb_s}^T = ( k_s^2, k_s^3, \dots, k_s^n, 0)$, and where $A_s$ and $\a_s$ are still to be found.  
With this notation the structure equations can be written as
\[
\begin{pmatrix} 0&0\\ (\overline{\kb_s}^T)_t & (k_s^1)_t\end{pmatrix} = 
\]
\[
\begin{pmatrix}\begin{array}{c} A_{s+1}\Lambda-\Lambda A_s +(-1)^{n-1}e_1 \a_s^T\\+ \v_{s+1}(\overline{\kb_s}^T+e_n^T)\end{array}&\begin{array}{c} (-1)^n (A_{s+1} e_1 +\tr(A_s) e_1) \\+ k_s^1 \v_{s+1}-\Lambda\v_s\end{array}\\\\ \begin{array}{c}\a_{s+1}^T\Lambda - k_s^1 \a_s^T \\-(\overline{\kb_s}^T+e_n^T)(A_s+\tr(A_{s+1}) I)\end{array}&\ast\end{pmatrix}.
\]
This equality implies conditions
\begin{eqnarray}
\label{eq1}A_{s+1}\Lambda - \Lambda A_s + \v_{s+1}(\overline{\kb_s}^T+e_n^T) + (-1)^{n+1}e_1 \a_s^T&=& 0\\\label{eq2} (-1)^n A_{s+1} e_1 + k_s^1\v_{s+1}-\Lambda \v_s + (-1)^n\tr(A_s)e_1 &=& 0.
\end{eqnarray}
and it describes $(k_s^i)_t$ in terms of $A_s$ and $\a_s$, for all $s$ and $i$. The first row of equation (\ref{eq1}) gives us
\[
e_1^TA_{s+1}\Lambda + \v_{s+1}^T e_1(\overline{\kb_s}^T+e_n^T) + (-1)^{n+1} \a_s^T = 0
\]
which will solve for $\a_s$ once $A_s$ has been found. The $p$ column of (\ref{eq1}), $p\ne n$, is given by
\begin{equation}\label{eq3}
A_{s+1} e_{p+1} - \Lambda A_s e_p + \v_{s+1} k_s^{p+1} +(-1)^{n+1} e_1 \a_s^T e_p = 0.
\end{equation}
This will solve for $A_s e_p$, except for the last entry, in terms of $A_{s+1} e_{p+1}$, for all $p\ne n$. Next we notice that the last column of (\ref{eq1}) is given by
\begin{equation}\label{eq4}
-\Lambda A_s e_n + \v_{s+1} + (-1)^{n+1} \a_s^T e_n = 0,
\end{equation}
which solves for all entries of $A_se_n$ with the exception of its last entry. Denote $A_s = (a_{i,j}^s)$. Using (\ref{eq3}) and (\ref{eq4}) we have solved for $a_{i,n}^s$, $i=1,\dots, n-1$, and using (\ref{eq3}) we have solved for $a_{i,j}$ with $i<j$, and we have found the recursion
\begin{equation}\label{rec}
a_{i, j}^s = a_{i+1,j+1}^{s+1} + \v_{s+1}^T e_{i+1}k_s^i
\end{equation}
whenever $i\ge j$.

Finally, we use (\ref{eq2}), whose entries other than the first one solve for $a_{i, 1}$, $i = 2,3,\dots n$. These entries and the recursion above determines $a_{i, j}^s$ for all $i>j$. As a last step, the first entry of (\ref{eq2}) gives us
\begin{equation}\label{eq5}
(-1)^n a_{1,1}^{s+1} + k_s^1\v_{s+1}^T e_1 + (-1)^n\tr(A_s) = 0.
\end{equation}
But according to the recursion (\ref{rec}), $a_{i,i}^s = a_{1,1}^{s-i+1} + F_i^s$, where $F_i^s$ is an expression depending on $\v$ and ${\bf k}$. Therefore
\[
\tr(A_s) = \sum_{i=1}^n a_{i,i}^s = \sum_{i=1}^n a_{1,1}^{s-i+1} + F_s .
\]
 Substituting this relation in equation (\ref{eq5}) we finally have
\[
(\T+1+\T^{-1}+\dots +\T^{-(n-1)}) a_{1,1}^s = G_s
\]
where, again, $G_s$ is a function depending on $\v$ and ${\bf k}$. Using our last hypothesis,  $\T+1+\T^{-1}+\dots+\T^{-(n-1)}$ is invertible so that we can finally solve for $a_{1,1}^s$, and with it all other entries of $N_s$.
\end{proof}

Hypothesis (\ref{hypo}) is not very restrictive. In fact, the operator $\T+1+\T^{-1}+\dots+\T^{-(n-1)}$ is invertible whenever $N$ and $n+1$ are coprime since, as a linear map, it is represented by the matrix $A$ in lemma (\ref{invertible}) - possibly up to some row exchanges.

Although in the general case the explicit expression of the evolution is too involved to be displayed here, working out a particular example can be done algebraically in an algorithmic fashion.
\begin{example} \end{example} 
 When $n=2$ the Maurer Cartan matrix and the matrix $N_s$ are given by
\begin{equation}\label{Ks2}
K_s = \begin{pmatrix} 0&0&1\\ 1&0&0\\ a_s & 1 & b_s\end{pmatrix}, \hskip 2ex N_s = \begin{pmatrix} A_s & B_s & \alpha_s \\ C_s&D_s& \beta_s\\ E_s&F_s&-A_s-D_s\end{pmatrix},
\end{equation}
where $\v_s = \begin{pmatrix} \alpha_s \\ \beta_s\end{pmatrix}$. From now on, and for simplicity's sake, we will denote $a_s$ merely by $a$ and $a_{s+p} = \T^p a$, for any $p$. Likewise with other functions. The structure equations, written as $(K_s)_tK_s^{-1} = N_{s+1}- K_s N_s K_s^{-1}$, become
\[
\begin{pmatrix} 0&0&0\\ 0&0&0\\ (b)_t & (a)_t & 0\end{pmatrix} 
\]
\[
=
\begin{pmatrix} 
(\T+1) A  + D + bF&\T B - E +aF & \T \alpha - F\\ \\
\T C - \alpha + bB &\T D - A + a B&\T \beta - B\\\\ \begin{array}{c}\T E - \beta - a\alpha \\+ b(A + 2D+aB+bF)\end{array} & \begin{array}{c}\T F - C - bE \\+ a(D-A+aB+bF)\end{array} & \ast
\end{pmatrix}.
\]
These equations completely determine the entries of $N_s$ to be given by
\[
\begin{array}{lll} F = \T \alpha & C = \T^{-1} \alpha - \T^{-1} b \T \beta & A = (\T + 1 +\T^{-1})^{-1}(\T^{-1}a\T \beta - b\T \alpha)\\ B = \T \beta & E= \T^2 \beta+a\T \alpha & D = -(\T+1+\T^{-1})^{-1}((\T^{-1}+1)a\T \beta +\T^{-1} b\T \alpha).
\end{array}
\]
They also determine the evolution of $b_s$ and $a_s$. This evolution can be written as
\begin{equation}
\begin{pmatrix} a \\ b\end{pmatrix}_t = \P \begin{pmatrix} \T \beta\\ \T \alpha\end{pmatrix}
\end{equation}
where
\begin{equation}\label{P}
\P = \begin{pmatrix}\begin{array}{c}\T^{-1} b - b\T\\+a(\T-\T^{-1})(\T+1+\T^{-1})^{-1}a \end{array}& \begin{array}{c}\T-\T^{-2}\\ +a(1-\T^{-1})(\T+1+\T^{-1})^{-1} b\end{array}\\\\ \begin{array}{c}\T^2-\T^{-1}\\-b(1-\T)(\T+1+\T^{-1})^{-1} a\end{array} &\begin{array}{c}\T a-a\T^{-1} \\+ b(\T-\T^{-1})(\T+1+\T^{-1})^{-1}b\end{array}\end{pmatrix}.
\end{equation}
Later, in section \ref{sec5}, we will show that $\P$ is a Poisson tensor for any dimension $n$. The tensor (\ref{P}) appeared in \cite{Ian}.
\vskip 2ex

\section{The Projective Hamiltonian structure on the space of invariants}\label{sec5}
In this section we aim to describe a naturally defined Poisson structure on the space of Maurer-Cartan matrices, and to give a precise account of how to obtain such a structure explicitly. The structure will be obtained via a reduction process from the twisted quotient structure on the Poisson-Lie group $\SL(n+1)^{(N)}$. The first step in our reduction process is to describe the space of Maurer-Cartan matrices as a quotient of the Poisson-Lie group $\SL(n+1)^{(N)}$ and to find explicitly the gradients of  functional extensions that are constant of the leaves of the quotient.
\subsection{The space of invariants as a quotient space}\label{quotient}

Assume we have a nondegenerate twisted polygon $x=(x_s)$ in a manifold $M = G/H$ with associated right moving frame $\rho$ such that $\rho_s\cdot x_s = o$ for all $s$. 

The subgroup $H^{(N)}$ acts naturally on $G^{(N)}$ via the gauge transformation 
\[
(g_s) \to (h_{s+1}g_sh_s^{-1} )
\]
(assuming $h_{s+N} = h_s$) and it is natural to ask what the quotient $G^{(N)}/H^{(N)}$ represents. The following result is valid not only for $\RP^n$, but for any homogeneous space.

\begin{theorem}
Locally around a nondegenerate polygon, a section of the quotient $G^{(N)}/H^{(N)}$ is given by the right Maurer-Cartan matrix $K$ associated to the right moving frame $\rho$. That is, let $x\in G^{(N)}/H^{(N)}$ be a nondegenerate twisted polygon, $\U$ an open set of $G^{(N)}/H^{(N)}$ containing nearby nondegenerate twisted polygons to $x$, and let $\K$ be the set of all the Maurer-Cartan matrices in $G^{(N)}$ associated to right moving frames along elements in $\U$ and determined by a fixed transverse section as in (\ref{transverse}). Then the map
\begin{equation}\label{map}
\K \to G^{(N)}/H^{(N)}, \hskip 3ex (K_s) \to [(K_s)]
\end{equation}
is a section of the quotient, a local isomorphism.
\end{theorem}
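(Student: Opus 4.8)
The plan is to produce an explicit inverse to the map (\ref{map}) using the discrete Serret--Frenet reconstruction, and then to check that the two maps are mutually inverse local diffeomorphisms; the whole argument rests on two elementary facts. First, if $\rho$ is the right moving frame determined by the fixed transverse section, its equivariance reads $\rho_s(g\cdot x) = \rho_s(x)g^{-1}$, so the canonical Maurer--Cartan matrix is $G$-invariant: $K(g\cdot x) = K(x)$. Second, two right frames $\rho,\rho'$ with $\rho_s\cdot x_s = \rho'_s\cdot x_s = o$ along the \emph{same} polygon differ by $\rho'_s = h_s\rho_s$ with $h_s\in H$, since $\rho'_s\rho_s^{-1}$ then fixes $o$ and $H$ is its isotropy. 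Both facts are immediate from the definitions, and everything below reduces to them.

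For injectivity, suppose $K,\bar K\in\K$ come from polygons $x,\bar x\in\U$ with canonical frames $\rho,\bar\rho$, and that $\bar K_s = h_{s+1}K_s h_s^{-1}$ for some $(h_s)\in H^{(N)}$. Writing $K_s=\rho_{s+1}\rho_s^{-1}$ shows $(h_s\rho_s)$ is a frame with the same Maurer--Cartan matrix as $\bar\rho$; a short telescoping computation then gives $\bar\rho_s = h_s\rho_s\mu$ for a single constant $\mu\in G$. Evaluating at $\bar x_s$ and using $\bar\rho_s\cdot\bar x_s=o$ together with $h_s\in H$ yields $\bar x_s = \mu^{-1}\cdot x_s$, so $\bar x$ is $G$-congruent to $x$. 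Equivariance (fact one) then forces $\bar\rho_s = \rho_s\mu$ as well, and comparing with $\bar\rho_s = h_s\rho_s\mu$ gives $h_s=e$ for every $s$, hence $\bar K=K$. Thus (\ref{map}) is injective, and it is exactly here that the use of a single fixed transverse section is essential.

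For surjectivity I would construct the inverse $\Psi$ directly. Given $(g_s)\in G^{(N)}$ near the base Maurer--Cartan matrix, reconstruct a frame by $\rho_1=e$, $\rho_{s+1}=g_s\rho_s$, set $x_s=\rho_s^{-1}\cdot o$, and put $\Psi([(g_s)])=K(x)\in\K$, the canonical Maurer--Cartan matrix of the reconstructed polygon. I would first verify $\Psi$ is well defined: changing the base point replaces $x$ by a $G$-congruent polygon, while replacing $(g_s)$ by $(h_{s+1}g_s h_s^{-1})$ replaces $\rho_s$ by $h_s\rho_s h_1^{-1}$ and hence $x$ by $h_1\cdot x$; in both cases $K(x)$ is unchanged by fact one. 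The two maps are then mutually inverse: for $K\in\K$ the reconstructed polygon is $G$-congruent to the original, so $\Psi([K])=K$; and for arbitrary $(g_s)$ the reconstructed frame and the canonical frame of $x$ project to the same polygon, so by fact two they differ by an element of $H^{(N)}$, giving $K(x)=(h_{s+1}g_s h_s^{-1})$ and therefore $[(K(x)_s)]=[(g_s)]$. Since the moving frame depends smoothly on the polygon and the reconstruction is smooth in $(g_s)$, this exhibits (\ref{map}) as a local diffeomorphism onto an open set of the quotient, i.e.\ the asserted section.

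The main obstacle is the bookkeeping of the twisted structure: the reconstruction $\rho_{s+1}=g_s\rho_s$ must be reconciled with the $N$-periodicity $K_{s+N}=K_s$ and with the monodromy of the resulting polygon, and one must confirm that nondegeneracy — needed for $K(x)$ to exist — persists on a neighborhood $\U$ of the base polygon (it does, since nondegeneracy is open and $G$-invariant). A dimension count makes the local bijection plausible and confirms the gauge action is free generically: with $\dim\P_N = N\dim M + \dim G$, the extra $\dim G$ coming from the free monodromy, one has $\dim\K = \dim(\P_N/G) = N\dim M = \dim\bigl(G^{(N)}/H^{(N)}\bigr)$, so both spaces have the same dimension and the local bijection is genuinely onto an open subset of the quotient.
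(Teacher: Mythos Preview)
Your proof is correct and follows essentially the same route as the paper: both reconstruct a polygon from an arbitrary $(g_s)\in G^{(N)}$ via the discrete Serret--Frenet recurrence $\rho_{s+1}=g_s\rho_s$, project to $x_s=\rho_s^{-1}\cdot o$, and then observe that the canonical moving frame of $x$ differs from $(\rho_s)$ by an element of $H^{(N)}$, so $(g_s)$ is gauge-equivalent to a point of $\K$. Your treatment is more thorough than the paper's --- you spell out injectivity carefully (the paper simply asserts the map is ``well defined and 1-to-1''), verify well-definedness of the inverse under both change of base point and gauge, and append a dimension count --- but the underlying mechanism is the same.
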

\begin{proof} Let $\U$ be a neighborhood of a nondegenerate polygon $x$ in $G^{(N)}/H^{(N)}$, small enough to preserve the non-degeneracy, and let's fix sections as in (\ref{transverse}) uniquely determining right moving frames for $x$ so that, $\rho_s\cdot x_s = o$.  Clearly, the map (\ref{map}) is well defined and 1-to-1, so we simply need to show that it is continuous and its image is an open set. 

Assume $M  \in G^{(N)}$ is nearby $K$ for some Maurer-Cartan matrix $K\in G^{(N)}$. We will show that $M$ can be gauged to a Maurer-Cartan element $\hat K$ corresponding to some polygon $\hat x$ nearby $x$. Define the recurrence relation
\[
\eta_{s+1} = M_s \eta_s
\]
for some $\eta_0$ fixed, 
and let $\hat x_s$ be the polygon defined by the vertices $\eta_s^{-1} \cdot o = \hat x_s$. If $M$ is nearby $K$, then $\hat x$ will be nearby $x$. We can use the same transverse sections defining $\rho$ and $K$ to find the left Maurer-Cartan matrix $\hat K$ corresponding to $\hat x$. If $M_s$ is close enough to $K_s$ the equations can always be solved and we can find $\hat K_s$ and its moving frame $\hat \rho_s$ such that $\hat\rho_s\cdot  \hat x_s =o$. Finally denote by $h_s$ the element $h_s =\hat\rho_s \eta_s^{-1}$. Clearly $ h_s\cdot o = o$ and so $h_s \in H$ for all $s$. Also, $\hat\rho_s = h_s\eta_s$ and so
\[
K_s =  h_{s+1} M_sh_s^{-1}
\]
which implies $[(K_s)] = [(M_s)]$. This concludes the proof.
\end{proof}

\subsection{Extensions constant on the $H^{(N)}$-gauge leaves} Let  $f:\K\to \R$ be a differentiable function on $\K\subset G^{(N)}$, viewed as a section of the quotient $G^{(N)}/H^{(N)}$. Assume $\F$ is an extension of $f$ to $G^{(N)}$ such that $\F$ is constant on the gauge leaves of $H^{(N)}$. That is, assume
\[
\F((h_{s+1}K_s h_s^{-1})) = f((K_s))
\]
for any $s$, any $h\in H^{(N)}$ and any $K\in \K$ as in (\ref{Ks}). In this section we aim to explicitly describe the left and right gradients of $\F$ evaluated along $\K$ in terms of the gradient of $f$ and the invariants $k_s^i$. 

\begin{proposition} 
Assume $f:\K\to \R$ is a function on $\K$, seen as a section of the quotient $\SL(n+1)^{(N)}/H^{(N)}$ given by (\ref{Ks}). Assume $\F$ is an extension of $f$ to $\SL(n+1)^{(N)}$. Then, the left gradient of $\F$ along $\K$ is given by
\[
\nabla_s \F(K) = \begin{pmatrix} Q_s &\frac{\partial f}{\partial \kb_s}\\ q_s^T & -\tr(Q_s)\end{pmatrix}
\]
where $\frac{\partial f}{\partial \kb_s} = \left((-1)^n \frac{\partial f}{\partial k^1_s}, \frac{\partial f}{\partial k^2_s},\dots, \frac{\partial f}{\partial k^n_s}\right)^T$.
\end{proposition}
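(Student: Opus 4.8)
The plan is to compute the left gradient of $\F$ directly from its definition, exploiting the fact that $\F$ is constant on the $H^{(N)}$-gauge leaves. The definition of the left gradient (Definition \ref{gradients}) says that $\nabla_s\F(K)$ is the element of $\g$ paired, via the trace form, with the infinitesimal left perturbation $\xi_s$ of $K_s$. So the overall strategy is: first use the constancy of $\F$ along gauge orbits to determine which directions $\xi_s$ must annihilate $\nabla_s\F$; this pins down a large block of $\nabla_s\F$. Then use the dependence of $f$ on the genuine invariants $k_s^i$ to identify the remaining free entries. The structure of the answer — a matrix whose off-diagonal-block entries encode $\partial f/\partial k_s^i$ and whose diagonal block $Q_s$ plus covector $q_s^T$ are forced by the trace normalization — is exactly what one expects once the vanishing constraints are imposed.

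**First I would** differentiate the gauge-invariance relation $\F((h_{s+1}K_sh_s^{-1})) = f((K_s))$ in the $h$-directions. Writing $h_s = \exp(\epsilon\zeta_s)$ with $\zeta_s\in\h = \g_1\oplus\g_0$ and differentiating at $\epsilon=0$, the left-hand side moves $K_s$ by $\zeta_{s+1}K_s - K_s\zeta_s$, while the right-hand side is unchanged. Translating this to a left perturbation of the $s$-th factor (i.e. writing the variation of $K_s$ as $\xi_s K_s$ for the appropriate $\xi_s\in\g$) and pairing against $\nabla_s\F(K)$ yields a family of linear constraints of the form $\langle \nabla_s\F(K), \text{(something built from }\zeta, K)\rangle = 0$ for all $\zeta\in\h^{(N)}$. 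Summing over $s$ and using the telescoping/shift structure, these constraints force the $\g_1\oplus\g_0$-part of $\nabla_s\F(K)$ to be determined (in fact forced to specific values, with the commutative block $Q_s$ and vector $q_s$ emerging as the free-but-trace-constrained data). The condition $\tr(\nabla_s\F) = 0$ (since we work in $\sl(n+1)$) fixes the bottom-right entry as $-\tr(Q_s)$.

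**Next I would** recover the explicit entries $\partial f/\partial k_s^i$ by perturbing $K_s$ in the directions that genuinely change the invariants. Looking at the form (\ref{Ks}) of $K_s$, the free invariant entries $k_s^1,\dots,k_s^n$ sit in the bottom row, so a perturbation $\xi_s K_s$ that alters exactly those entries will correspond, under the trace pairing, to reading off the top-right column $\partial f/\partial\kb_s$ of $\nabla_s\F$. The sign pattern — the factor $(-1)^n$ on the $k_s^1$ derivative — comes from the position of the $(-1)^n$ entry in the top-right corner of $K_s$ and the way the dual basis $E_{i,j}^\ast = E_{j,i}$ interacts with that corner; I would verify it by pairing $\nabla_s\F$ with the specific $E_{i,j}$ generating the perturbation of $k_s^i$. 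This is bookkeeping, not difficulty.

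**The hard part will be** the first step: correctly converting the additive gauge variation $\zeta_{s+1}K_s - K_s\zeta_s$ into a left perturbation $\xi_s$ of the single factor $K_s$, and then disentangling the shift in the index $s\mapsto s+1$ so that the vanishing constraints can be read factor-by-factor rather than only in the summed form. Because the gauge action couples neighboring factors, the naive pairing mixes $\nabla_s$ and $\nabla_{s+1}$; one must use the freedom of choosing $\zeta$ supported on a single index to decouple them, or equivalently invoke the relation (\ref{lrgrad}) between left and right gradients together with the explicit form of $h_s$. Once that decoupling is done cleanly, the shape of $\nabla_s\F$ follows, and the only remaining subtlety is to confirm that the entries not fixed by the $\h$-constraints are precisely the invariant-derivatives $\partial f/\partial k_s^i$ with the stated signs, which the second step supplies.
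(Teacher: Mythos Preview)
You have over-read the proposition. Its actual content is minimal: the blocks $Q_s$ and $q_s^T$ are not determined by anything here --- they are simply names for whatever the remaining entries of $\nabla_s\F(K)$ happen to be, and the bottom-right entry is $-\tr(Q_s)$ just because the gradient lies in $\sl(n+1)$. The only substantive claim is that the $\g_{-1}$ block (the top-right column) equals $\frac{\partial f}{\partial \kb_s}$. Moreover, the hypothesis is only that $\F$ extends $f$; constancy on the $H^{(N)}$-gauge leaves is \emph{not} assumed in this proposition (it enters in the theorem that follows).

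Consequently your entire ``first step'' is both unnecessary and unjustified. Even if gauge invariance were available, it does not make $\nabla_s\F$ vanish in $\h$-directions as you suggest; it yields the coupled relation $\T^{-1}\nabla_s\F - \nabla_s'\F \in \g_1$, which is precisely what the \emph{next} theorem exploits to pin down $Q_s$ and $q_s$. The ``hard part'' you describe --- decoupling the shift between neighboring factors --- belongs to that later argument, not here.

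Your ``second step'' is the whole proof, and it is exactly what the paper does. One takes $\xi_s = \begin{pmatrix} 0 & 0 \\ \v_s^T & 0\end{pmatrix}\in\g_1$, observes that $\exp(\epsilon\xi_s)K_s$ remains in $\K$ (it is just $K_s$ with $k_s^1$ replaced by $k_s^1+(-1)^n v_s^1$ and $k_s^i$ by $k_s^i+v_s^i$ for $i\ge 2$), so $\F(\exp(\epsilon\xi_s)K_s)=f$ evaluated at the shifted invariants. Differentiating at $\epsilon=0$ and reading off the trace pairing gives the $\g_{-1}$ component of $\nabla_s\F(K)$, including the $(-1)^n$ on the first entry. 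That is the entire argument.
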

\begin{proof}
Consider the element of $H$
\[
V_s = \begin{pmatrix}I_n & 0\\ \v_s^T &1\end{pmatrix} = \exp(\begin{pmatrix}0 & 0\\ \v_s^T &0\end{pmatrix})\in H
\]
where $\v_s = (v_s^i)$.  If $K_s$ is given as in (\ref{Ks}), then 
\[
V_s K_s = \begin{pmatrix} 0& 0&\dots & 0&(-1)^n\\ 1&0&\dots &0&0\\ \vdots&\ddots&\ddots&\ddots&\vdots\\ 0&\dots &1&0&0\\ k_s^2+v_s^2 & \dots &k_s^n+v_s^n&1& k_s^1+(-1)^n v_s^1\end{pmatrix}.
\]
Since $\F$ is an extension of $f$, it coincides with $f$ along $\K$ and so $\F((V_s K_s)) = f(k_s^1+(-1)^n v^1_s, k_s^2+v_s^2, \dots, k_s^n+v_s^n)$ . Differentiating we get
\[
\sum _{s=1}^N\langle \nabla_s\F, \begin{pmatrix}0 & 0\\ \v_s^T &0\end{pmatrix})\rangle = \sum_{s=1}^N\left( (-1)^n\frac{\partial f}{\partial k_s^1}v_s^1+\sum_{\ell=2}^n \frac{\partial f}{\partial k_s^\ell} v_s^\ell\right)
\]
and this is true for any values $v_s^i$. The proof of the proposition follows.
\end{proof}
The infinitesimal description of the fact that $\F$ is constant along the gauge leaves of $H$ is obtained by differentiating the relation $\F((h_{s+1}K_s h_s^{-1})) = \F((K_s))$ with $h = (h_s)=(\exp(t\xi_s))\in H^{(N)}$. This gives 
\[
\langle \nabla_s\F(K), \xi_{s+1}\rangle - \langle \nabla'_s\F(K), \xi_s\rangle = 0
\]
for all $\xi \in \h^{(N)}$. Which is the same as saying
\begin{equation}\label{streq2}
\T^{-1}\nabla_s\F - \nabla'_s\F \in \h^o = \g_1
\end{equation}
along $\K$. This property will determine the remaining entries of $\nabla_s \F(K)$.

\begin{theorem} Assume the Maurer-Cartan matrix $K$ is defined by (\ref{Ks}), and assume $\F$ is an extension of $f:\K\to \R$ to $\SL(n+1)^{(N)}$, constant on the gauge leaves of $H^{(N)}$. Assume further than the operator 
\[
\T+1+\T^{-1} + \dots + \T^{-(n-1)}
\]
is invertible. Then,  $\nabla \F = (\nabla_s\F)(K)$ is uniquely determined by (\ref{streq2}) as a function of the gradient of $f$ at $\kb = (\kb_s)$ and $\kb_s$.
\end{theorem}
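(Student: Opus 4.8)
The plan is to convert the constraint (\ref{streq2}) into an explicit linear system for the unknown entries of $\nabla_s\F$ and then to solve it site-by-site, following the same mechanism as the proof of Theorem \ref{Ndet}. By the preceding proposition the left gradient has the form
\[
\nabla_s\F(K) = \begin{pmatrix} Q_s & \frac{\partial f}{\partial\kb_s}\\ q_s^T & -\tr Q_s\end{pmatrix},
\]
so its top-right column is already pinned down by the gradient of $f$, while the block $Q_s$ and the covector $q_s^T$ (the corner being forced to equal $-\tr Q_s$ by tracelessness) are the only unknowns. First I would use the left--right gradient relation (\ref{lrgrad}), $\nabla_s'\F = K_s^{-1}(\nabla_s\F)K_s$, to rewrite (\ref{streq2}) as a closed constraint on the left gradient alone, namely
\[
\T^{-1}\nabla_s\F - K_s^{-1}(\nabla_s\F)K_s\in\g_1
\]
along $\K$. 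Since $K_s$ is the explicit companion-type matrix (\ref{Ks}), one computes $K_s^{-1}$ (again of companion type) in closed form, so the left-hand side is an explicit matrix whose entries are linear in $Q_s$, $q_s$ and their $\T$-shifts.

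Next I would impose that this matrix actually lie in $\g_1$, i.e. that every block other than the bottom row vanish: the top-left $n\times n$ block, the top-right column and the corner must all be zero, while the bottom row is left free (this is exactly the $\g_1$ target). Reading off these scalar equations column by column, precisely as in the extraction after (\ref{eq1})--(\ref{eq2}), the vanishing of the top-left block and of the top-right column should determine $q_s$ together with every entry of $Q_s$ in terms of entries at the shifted site and of the known data $\frac{\partial f}{\partial\kb}$ and $\kb$, producing a diagonal recursion of the same type as (\ref{rec}). As in Theorem \ref{Ndet}, this recursion fixes all entries except a single scalar per lattice site $c_s$ — the analogue of $a_{1,1}^s$ — which propagation alone cannot determine.

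The final step, and the only genuinely delicate one, is the corner equation coming from the traceless slot $-\tr Q_s$. Using the recursion to express the diagonal entries of $Q_s$ as successive $\T$-shifts of $c_s$, the corner condition collapses — exactly as in the passage leading to (\ref{eq5}) — into
\[
(\T+1+\T^{-1}+\dots+\T^{-(n-1)})\,c_s = G_s,
\]
where $G_s$ is an explicit function of $\frac{\partial f}{\partial\kb}$ and $\kb$. Invertibility of $\T+1+\T^{-1}+\dots+\T^{-(n-1)}$ (the hypothesis, guaranteed whenever $N$ and $n+1$ are coprime by Lemma \ref{invertible}) then solves uniquely for $c_s$, and back-substitution through the recursion yields all remaining entries of $\nabla_s\F$. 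The main obstacle is the bookkeeping of this recursion and the verification that the trace/corner condition produces precisely the operator $\T+1+\dots+\T^{-(n-1)}$; since this is structurally identical to Theorem \ref{Ndet}, I would carry out the computation by analogy rather than repeating it in full.
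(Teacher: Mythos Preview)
Your proposal is correct and follows essentially the same strategy as the paper: reduce the constraint (\ref{streq2}) to the machinery of Theorem \ref{Ndet}. The one refinement worth noting is that the paper avoids your ``main obstacle'' (the bookkeeping of redoing the recursion with conjugation by $K_s^{-1}$) by a small change of variable. Rather than working with $\nabla_s\F$ and the equation $\T^{-1}\nabla_s\F - K_s^{-1}(\nabla_s\F)K_s\in\g_1$, the paper shifts (\ref{streq2}) and uses $\nabla_s\F = K_s(\nabla_s'\F)K_s^{-1}$ to obtain
\[
\T\nabla_s'\F - K_s(\nabla_s'\F)K_s^{-1}\in\g_1,
\]
which is \emph{literally} the same equation as $\T N_s - K_sN_sK_s^{-1}\in\g_1$ from Theorem \ref{Ndet}, with $\nabla_s'\F$ in place of $N_s$ and $\T^{-1}\frac{\partial f}{\partial\kb_s}$ in place of $\v_s$ in the $\g_{-1}$ slot. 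So Theorem \ref{Ndet} applies by direct substitution rather than by analogy, and no computation needs to be repeated.
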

\begin{proof} The proof of this theorem is almost identical to the proof of theorem \ref{Ndet}. From (\ref{streq2}), along $\K$ we have
\[
\T\nabla'_s\F -  \nabla_s\F  \in \h^o = \g_1.
\]
Notice also that $  \frac{\partial f}{\partial \kb_s}$  is in the $\g_{-1}$ position in $\nabla_s\F(K)$, and hence the $\g_{-1}$ entries in $\nabla'_s\F(K)$ are $\T^{-1}  \frac{\partial f}{\partial \kb_s}$. So is $\v_s$ for $N_s$. Furthermore, since
$(K_s)_t K_s^{-1} \in \g_1$, the structure equations (\ref{streq}) imply
\[
\T N_s  - K_s N_sK_s^{-1} \in \h^o=\g_1.
\]
 It suffices to choose $\T\nabla'_s\F$ in place of $N_s$ in the proof of theorem \ref{Ndet} to obtain the proof for our current theorem for $\T\nabla'_s\F$ and hence for $\nabla_s\F$.
\end{proof}
Here we see both the advantages of choosing a Maurer-Cartan matrix of the form (\ref{Ks}) and the anticipated relationship between the invariants coefficients $\v_s$ and the modified gradient of the Hamiltonian $f$, $\frac{\partial f}{\partial \kb_s}$.

To illustrate the process we work out our $\RP^2$ example.  

\begin{example}\end{example} In the planar case, recall that $K_s$ is given by (\ref{Ks2}). If we choose
\[
V_s = \begin{pmatrix} 1&0&0\\0&1&0\\\ w_s&v_s&1\end{pmatrix} = \exp\begin{pmatrix} 0&0&0\\0&0&0\\ w_s&v_s&0\end{pmatrix}  \in H
\]
we see  that
\[
V_sK_s = \begin{pmatrix} 0&0&1\\ 1&0&0\\ a_s+v_s&1&b_s+w_s\end{pmatrix}
\]
and, since $\F$ is an extension of $f$, they satisfy

\begin{equation}\label{relation1}
\F(V_sK_s) = f(a_s+v_s, b_s+w_s).
\end{equation}
 Let us write $f^s_a = \frac{\partial f}{\partial a_s}$ and $f_b^s =  \frac{\partial f}{\partial b_s}$. Relationship  (\ref{relation1}) implies 
\[
\langle\nabla_s\F(K), \begin{pmatrix} 0&0&0\\0&0&0\\ w_s&v_s&0\end{pmatrix}\rangle = \ha v_s + \hb w_s
\]
and so
\begin{equation}\label{grad2}
\nabla_s\F(K) = \begin{pmatrix} A_s& B_s & \hb\\ C_s&D_s& \ha\\ E_s&F_s&-(A_s+D_s)\end{pmatrix},
\end{equation}
where $A_s, B_s, C_s, D_s, E_s, F_s$ are unknown. As we saw in (\ref{lrgrad}), the right gradient can be obtained through the relation $\nabla'_s\F(K) = K_s^{-1}\nabla_s\F(K) K_s$. 

 If we substitute (\ref{grad2}) in (\ref{streq2}) we get the following expression. (Notice that, as before, we have dropped the subindices and denote $A_{s+p} = \T^{p} A$, etc.)
\begin{eqnarray*}
&&\T^{-1} \begin{pmatrix} A& B& \hbn\\ C&D& \han\\ E&F&-(A+D)\end{pmatrix}
\\
&-&\begin{pmatrix}
{D+a\han}&{\han}&{b\han+C}\\\\ \begin{array}{c}F-bB\\-a(A+2D+b\hbn+a\han)\end{array}&\begin{array}{c}-A-D\\-b\hbn-a\han\end{array}&\begin{array}{c}E-aC\\-b(2A+D+b\hbn+a\han)\end{array}\\\\ {B+a\hbn}&{\hbn}&{A+b\hbn}\end{pmatrix}\in \h^o=\g_1.
\end{eqnarray*}
Since $\g_1$ is given as in (\ref{gradation}), we obtain the following equations for the entries of $\nabla \F$
\[
\begin{array}{ccc} \T^{-1} B = \han& \T^{-1} A = D+a\han&\T^{-1}\han = E-aC-b(2A+D+b\hbn+a\han)\\ \T^{-1}\hbn = b\han+C& -\T^{-1}(A+D) = A+b\hbn& \T^{-1}C = F-bB-a(A+2D+b\hbn+a\han).
\end{array}
\]
Assuming that $\T+1+\T^{-1}$ is invertible (which we know is true as far as $N$ is not a multiple of 3), and denoting $(\T+1+\T^{-1})^{-1}$ by $\Ro$, these equations determine the entries of $\nabla F$ to be 
\begin{equation}\label{data}
\begin{array}{cc} B = \T\han& \begin{array}{c} E = \T^{-1}\han + a\T^{-1} \hbn \\+ b\Ro\left((1-\T)a\han + (\T^{-1}-\T)b\hbn\right)\end{array}\\ \\
C = \T^{-1}\hbn-b\han& \begin{array}{c} F = \T^{-2}\hbn+(b\T-\T^{-1}b)\han\\-a\Ro\left((\T-\T^{-1})a\han+(1-\T^{-1})b\hbn\right)\end{array}\\\\ A = \Ro(a\han-\T b\hbn)&D = -\Ro\left((1+\T)a\han+b\hbn\right)
\end{array}
\end{equation} 
\vskip 2ex

 We are now ready to move to the investigation of the Hamiltonian picture.
\subsection{Projective Hamiltonian structure on $\K$}

In this section we finally aim to prove that the twisted discrete Poisson bracket described in our initial section can be reduced to $\K \cong \SL(n+1)^{(N)}/H^{(N)}$, defining a natural Poisson bracket. We will work with the classical $R$-matrix given in \cite{S3}  associated instead to the finest gradation of $\sl(n+1)$; that is, associated to the splitting $\g_+\oplus\h_c\oplus\g_{-}$, where $\g_+$ are lower triangular matrices, $\g_-$ upper triangular ones, and $\h_c$ is the Cartan subalgebra.  We will not choose the particular $\h_c$ perturbation in \cite{S2}, but merely a more general one - any such choices will end up vanishing in our quotient. Thus, if $\xi = \xi_++\xi_c+\xi_{-}$ according to the gradation above,  consider the $r$-matrix defined as
\begin{equation}\label{rmatrix}
r(\xi, \eta) = \langle \xi_{-}, \eta_{+}\rangle + \frac12\sum_{p = 1}^N \phi_p \langle \xi_c, \T^p\eta_c\rangle
\end{equation}
where $\phi_p + \phi_{-p} = 2\delta_0^p$ can be any choice that will make $r$ an $R$-matrix.  

\begin{theorem} \label{5.5} The twisted Poisson structure (\ref{twisted}) defined on $\SL(n+1)^{(N)}$ with $r$ as in (\ref{rmatrix}) is locally reducible to the quotient $\SL(n+1)^{(N)}/H^{(N)}$. The $\g_0$-term in $r$   vanishes in the reduction and the reduced bracket coincides with the reduction of the Sklyanin bracket (\ref{Sklyanin}) with tensor
\[
\hat r(\xi, \eta) = \langle \xi_{-1}, \eta_1\rangle.
\]
\end{theorem}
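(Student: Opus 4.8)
The plan is to perform a Poisson reduction of the twisted bracket (\ref{twisted}) along the $H^{(N)}$-gauge action and then identify the reduced object, relying throughout on the explicit description of the gradients of extensions $\F,\G$ (of functions $f,g$ on $\K$) that are constant on the gauge leaves, i.e.\ the analogue for such $\F$ of Theorem \ref{Ndet} proved above. I would first record reducibility. Conceptually this is the admissibility of $H^{(N)}$: the annihilator of $\h=\g_1\oplus\g_0$ under the trace form is exactly $\g_1$, and since $\g_1\subset\g_+$ is abelian with $R$ acting as the identity on it, the dual bracket satisfies $[\cdot,\cdot]_\ast\equiv 0$ on $\g_1$, so $\g_1$ is a subalgebra and $H^{(N)}$ is admissible in the sense of Proposition \ref{admissibleprop} (adapted to the twisted $N$-fold gauge setting). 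In practice reducibility is visible directly from the gradient formula together with the characterization (\ref{streq2}): $\{\F,\G\}$ restricted to $\K$ can be expressed purely through $\partial f/\partial\kb$, $\partial g/\partial\kb$ and the invariants $\kb$, hence is independent of the chosen extensions; and because (\ref{twisted}) is already Poisson, the reduced bracket is automatically Poisson.

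Next I would use the linearity of (\ref{twisted}) in $r$ to split $r=r_1+r_0$, where $r_1(\xi,\eta)=\langle\xi_-,\eta_+\rangle$ and $r_0(\xi,\eta)=\frac12\sum_p\phi_p\langle\xi_c,\T^p\eta_c\rangle$ is the Cartan perturbation, and treat the two pieces separately. To show that $r_0$ contributes nothing, project (\ref{streq2}) onto the diagonal part: since $\g_1$ has no Cartan component one gets $(\nabla'_s\F)_c=\T^{-1}(\nabla_s\F)_c$ along $\K$, and likewise for $\G$. Because $r_0$ pairs only Cartan components and both $r_0$ and the global sum $\sum_s$ are $\T$-invariant, substituting these shift relations into the four $r_0$-terms of (\ref{twisted}) turns the two cross terms, via their $(\T\otimes\mathrm{id})$-twist, into shifted copies of the first two; the resulting expression is independent of the choice of $\phi_p$ and cancels. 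This is precisely the announced statement that the $\g_0$-perturbation plays no role in the quotient.

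For the surviving $r_1$-part I would compute the reduced bracket explicitly and compare it with the reduction of the Sklyanin bracket (\ref{Sklyanin}) for $\hat r(\xi,\eta)=\langle\xi_{-1},\eta_1\rangle$. The Sklyanin side is transparent: $\hat r$ pairs only the $\g_{-1}$-components of the gradients, which by the gradient formula are exactly $\partial f/\partial\kb_s$ (with their right counterparts given by (\ref{lrgrad})), so it is manifestly bilinear in $\partial f/\partial\kb$, $\partial g/\partial\kb$ and $\kb$. The twisted $r_1$ side, by contrast, pairs all of $\g_-$ with $\g_+$, so the content to be controlled is the contribution of the off-diagonal parts lying outside $\g_{\pm1}$. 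Using (\ref{lrgrad}), $\nabla'_s\F=K_s^{-1}\nabla_s\F K_s$, together with the companion form (\ref{Ks}) of $K_s$ and the recursion determining the remaining entries of $\nabla_s\F$, conjugation by $K_s$ telescopes these higher pairings in $s$, so that only the $\g_{-1}$--$\g_1$ pairing survives and the two reductions agree. The planar instance, where both reductions must reproduce the tensor $\P$ of (\ref{P}), serves as the consistency check.

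\textbf{The main obstacle} I expect to be the matching in the third step: proving that the full $\g_-$--$\g_+$ pairing carried by $r_1$ collapses, along $\K$, to the single $\g_{-1}$--$\g_1$ pairing $\hat r$. This is where the detailed recursion for the entries of $\nabla_s\F$ (the transplant of Theorem \ref{Ndet}) and the explicit conjugation by the companion matrix $K_s$ become unavoidable, and where one must check that every cancellation holds identically in the extensions rather than merely pointwise on one leaf, so that the coincidence of the two reduced brackets is genuine and independent both of $\phi$ and of the chosen extension.
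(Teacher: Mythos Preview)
Your overall architecture---reducibility via admissibility, then splitting $r=r_1+r_0$, killing $r_0$ using $(\nabla'\F)_c=\T^{-1}(\nabla\F)_c$, then matching the $r_1$-reduction with the $\hat r$-Sklyanin reduction---is the same as the paper's. Your treatment of the Cartan piece is essentially identical to theirs. For admissibility you take a cleaner route than the paper: you observe that $\h^0=\g_1$ is abelian and lies in $\g_+$, on which $R$ acts as the identity, so $[\,\cdot\,,\,\cdot\,]_\ast\big|_{\g_1}=[\,\cdot\,,\,\cdot\,]\big|_{\g_1}=0$. The paper instead verifies the subalgebra condition by an explicit computation: it writes down functions $\phi^i_s$ whose differentials at $e$ span $\g_1$, computes their gradients via the projective action, and checks directly that $d_e\{\phi^i,\phi^j\}\in\h^0$ for the twisted bracket. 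Your argument is shorter, but you should be explicit that the relevant dual bracket is the one coming from the Sklyanin (Poisson--Lie) structure on $G^{(N)}$, since that is the structure for which the gauge action is Poisson and Proposition~\ref{admissibleprop} applies; the paper is somewhat ambiguous on this point.

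Where you diverge substantively is in your ``main obstacle.'' You propose to collapse the full $\g_-$--$\g_+$ pairing to $\g_{-1}$--$\g_1$ by invoking the companion form of $K_s$, the recursion for the entries of $\nabla_s\F$, and a telescoping in $s$. That would be laborious and is not how the paper proceeds. The paper's observation is that (\ref{streq2}) already says $\nabla\F-\T\nabla'\F\in\g_1$; hence $(\nabla\F)_-=\T(\nabla'\F)_-$ and, crucially, $(\nabla\F)_+-\T(\nabla'\F)_+$ lies \emph{entirely} in $\g_1$. Substituting the first relation into the cross terms of (\ref{twisted}) converts the twisted bracket with $r_1$ into
\[
\tfrac12\langle(\nabla\HH)_-,(\nabla\F)_+-\T(\nabla'\F)_+\rangle-\tfrac12\langle(\nabla\F)_-,(\nabla\HH)_+-\T(\nabla'\HH)_+\rangle,
\]
and since the second factor in each pairing sits in $\g_1$, only the $\g_{-1}$-component of the first factor contributes. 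One then reverses the same substitution to recognise the result as the $\hat r$-Sklyanin bracket. No explicit knowledge of $K_s$ or of the individual entries of $\nabla\F$ is needed, and nothing telescopes; the collapse is immediate from (\ref{streq2}). Replacing your third step with this argument removes precisely the obstacle you flagged.
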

Notice that $\hat r$ is not an $R$-matrix. 
\begin{proof} In order to prove this theorem we will apply the reduction theorem in \cite{MR}, reformulated according to our situation and notation. Their reduction theorem can be simplified as:
 
{\it Poisson reduction theorem}: Let $M$ be a Poisson manifold and assume that $E\subset TM$ is an integral and regular Hamiltonian subbundle (an integrable subbundle of the bundle defining the symplectic foliation). Assume $M/E$ is a manifold.  Then, if  the Poisson bracket preserves $E$ - that is, if the bracket of two functions constant on the leaves of $E$ is constant on the leaves of $E$ - the Poisson bracket can be reduced to the quotient $M/E$. The reduction is given by the formula
 \begin{equation}\label{relation}
 \{f, h\}_{M/E}([p]) = \{\F, \HH\}(p)
 \end{equation}
 where $\F, \HH: M\to\R$ are any extensions of $f, h:M/E\to \R$, constant on the leaves of $E$.
 \vskip 1ex
 In our case $M = G^{(N)}$ and the Poisson bracket is the twisted quotient bracket. The subbundle $E$ is the subbundle tangent to the $H$-gauge leaves, which is a Hamiltonian subbundle since the gauge leaves are Poisson submanifolds. This theorem is local and we know that $G^{(N)}/H^{(N)}$ is locally defined by $\K$, a manifold, and $E$ is regular when restricted to generic polygons. Therefore, to prove our theorem we simply need to check that $E$ preserves the twisted Poisson bracket. 
 
On the other hand, $E$ is defined by gauge orbits - a Poisson map for the twisted bracket according to \cite{S1} - and so it preserves the bracket whenever $H$ is admissible (see definition (\ref{admissible})). According to proposition \ref{admissibleprop}, this is true whenever $\h^0$ is a Lie subalgebra of $\g^\ast$, which is finally the only condition we need to check to prove the theorem.

Recall that the Lie bracket in $\g^\ast$ is defined by the linearization of the twisted Poisson bracket at the identity $e\in G$. That is
\[
[d_e\phi, d_e\varphi]_\ast = d_e\{\phi, \varphi\} \in \g^\ast.
\]
Since $\h^0 = \g_1$ we will look for functions $\phi^i_s$ such that $d_e\phi^i_s$ generate $\g_1$. Indeed, let $L\in G^{(N)}$ be close enough to $e\in G^{(N)}$ so that $L = (L_s)$ can be factored as
\[
L_s = \begin{pmatrix} I_n & \ell_s\\ {\bf 0}^T & 1\end{pmatrix} \begin{pmatrix} \Theta_s & {\bf 0} \\ {\bf 0}^T & \theta_s\end{pmatrix}\begin{pmatrix} I_n & {\bf 0}\\ q_s^T & 1\end{pmatrix}
\]
according to the gradation of the algebra. We define $\phi_s^i(L) = \ell^i_s$, where $i$ marks the $i$th entry. Instead of calculating $d_e\phi^i_s$ we will directly calculate the left gradient at $L$. Notice that if $g\in \SL(n+1)$,
\[
g \begin{pmatrix} I_n & \ell\\ {\bf 0}^T & 1\end{pmatrix} = \begin{pmatrix} I_n & g\cdot \ell\\ {\bf 0}^T & 1\end{pmatrix} g_H
\]
where $g\cdot \ell$ is the projective action of $\PSL(n+1)$ in $\R^n$ and $g_H\in H$. Therefore, 
\[
\phi^i_s (e^{\epsilon\xi} L) = (e^{\epsilon\xi}\cdot \ell_s)^i.
\]
We can now analyze each one of the factors. If $\xi \in \g_{-1}$ the action is simply a translation and so
\[
\nabla \phi_s^i(L) = \begin{pmatrix}\ast&\ast\\ e_i^T & \ast\end{pmatrix}.
\]
If $\xi\in \g_0$, the projective action of $e^{\epsilon\xi}$ is linear. Therefore 
\[
\nabla \phi_s^i(L) = \begin{pmatrix}A^i_s&\ast\\ e_i^T & -\ell_s^i\end{pmatrix}
\]
for $A_i^r = \begin{pmatrix} {\bf 0}&\dots&{\bf 0}&\ell_s&{\bf 0}&\dots&{\bf 0}\end{pmatrix}$, with the nonzero column located in the $i$th place. If $\xi\in \g_1$, then the infinitesimal projective action is quadratic and straightforward calculations show that
\[
\nabla \phi_s^i(L) = \begin{pmatrix}A^i_s&- \ell_s^i \ell_s\\ e_i^T & -\ell_s^i\end{pmatrix}.
\]
Whenever $L = e$, we have that $\ell_s = {\bf 0}$ and $d_e \phi_s^i = E_{n+1, i}$, generating $\g_1$. We know calculate $d_e\{\phi^i, \phi^j\}$ where $\{, \}$ is the twisted bracket (\ref{twisted}) with the $r$-matrix given by (\ref{rmatrix}). We want to show that $d_e\{\phi^i, \phi^j\} \in \h^0$ and so we need to show that $\frac d{d\epsilon}|_{\epsilon=0}\{\phi^i_s, \phi^j_s\}(e^{\epsilon\xi}) = 0$ whenever $\xi \in \h=\g_1\oplus\g_0$.

Notice that $(d_e\phi_s^i)_{-1} = 0$ and $(\nabla\phi_s^i(L))_{-1}$ is quadratic in $L$. Therefore we also have $\frac d{d\epsilon}|_{\epsilon=0}\left(\nabla\phi_s^i(e^{\epsilon\xi})\right)_{-1} = 0$.

Also, $\nabla' \phi_s^i(e^{\epsilon\xi}) = e^{-\epsilon\xi}\nabla \phi_s^i(e^{\epsilon\xi})e^{\epsilon\xi}$, and so
\[
\frac d{d\epsilon}|_{\epsilon=0} \nabla' \phi_s^i(e^{\epsilon\xi}) = [d_e\phi_s^i, \xi_s] + \frac d{d\epsilon}|_{\epsilon=0}\nabla\phi_s^i(e^{\epsilon\xi}).
\]
Since $d_e\phi_s^i \in \g_1$, whenever $\xi\in\h$ we have that $\frac d{d\epsilon}|_{\epsilon=0} \left(\nabla' \phi_s^i(e^{\epsilon\xi})\right)_{-1} = 0$. Furthermore, $\left(d_e\phi_s^i\right)_0 = 0$ also.
From here, and given that 
\[
\langle \nabla_+ \phi_s^i, \nabla_-\phi_s^i\rangle = \langle \nabla_1 \phi_s^i, \nabla_{-1}\phi_s^i\rangle+\langle \nabla_+^0 \phi_s^i, \nabla_-^0\phi_s^i\rangle,
\]
where $\nabla_+^0 \phi_s^i$ is the portion of $\nabla_+ \phi_s^i$ in $\g_0$, and similarly with the others, we get that $\frac d{d\epsilon}|_{\epsilon=0}\{\phi^i_s, \phi^j_s\}(e^{\epsilon\xi}) = 0$.

Next, we will show that the $\h_c$ portion of $r$ vanishes when reduced, and that the reduction of (\ref{twisted}) coincide with that of (\ref{Sklyanin}) associated to $\hat r$. From the definition, the terms in the bracket involving $\h_c$ terms of the gradients are given by a multiple of
\begin{eqnarray*}
&&\frac12\sum_{p=0}^{N-1} \phi_p(\langle (\nabla\F)_c, \T^p(\nabla\HH)_c\rangle - \langle (\nabla\HH)_c, \T^p(\nabla\F)_c\rangle \\&+& \langle (\nabla'\F)_c, \T^p(\nabla'\HH)_c\rangle - \langle (\nabla'\HH)_c, \T^p(\nabla'\F)_c\rangle)
\\&-&\sum_{p=0}^{N-1}\phi_p(\langle \T(\nabla'\F)_c, \T^p(\nabla\HH)_c\rangle - \langle \T(\nabla'\HH)_c, \T^p(\nabla\F)_c\rangle).
\end{eqnarray*}
Since extensions $\F$ and $\HH$ satisfy $\T^{-1}\nabla\F - \nabla'\F \in \g_1$, we have that $\T(\nabla'\F)_c = (\nabla \F)_c$ so that the above becomes
\begin{eqnarray*}
&&\frac12\sum_{p=0}^{N-1} \phi_p(-\langle (\nabla\F)_c, \T^p(\nabla\HH)_c\rangle + \langle (\nabla\HH)_c, \T^p(\nabla\F)_c\rangle \\&+& \langle (\nabla'\F)_c, \T^p(\nabla'\HH)_c\rangle - \langle (\nabla'\HH)_c, \T^p(\nabla'\F)_c\rangle)
\end{eqnarray*}
\begin{eqnarray*}
&=&\frac12\sum_{p=0}^{N-1} \phi_p(-\langle (\nabla\F)_c, \T^p(\nabla\HH)_c\rangle + \langle (\nabla\HH)_0, \T^p(\nabla\F)_c\rangle \\&+& \langle \T^{-1}(\nabla\F)_c, \T^{p-1}(\nabla\HH)_c\rangle - \langle \T^{-1}(\nabla\HH)_c, \T^{p-1}(\nabla\F)_c\rangle)=0.
\end{eqnarray*}
Finally, using the fact that $\T(\nabla'\F)_{-} = (\nabla\F)_{-}$ the reduced Poisson bracket can be expressed as
\begin{eqnarray*}
&&\{f, h\}({\bf k}) = \frac12 (\langle(\nabla \F)_{-}, (\nabla \HH)_+\rangle - \langle (\nabla \F)_+, (\nabla \HH)_{-}\rangle \\&+& \langle(\nabla' \F)_{-}, (\nabla' \HH)_+\rangle - \langle (\nabla' \F)_+, (\nabla' \HH)_{-}\rangle)
- \langle \tau(\nabla' \F)_{-}, (\nabla \HH)_+\rangle + \langle \tau(\nabla' \HH)_{-}, (\nabla \F)_+\rangle\\
 &=& \frac12\left(-\langle(\nabla \F)_{-}, (\nabla \HH)_+\rangle + \langle (\nabla \F)_+, (\nabla \HH)_{-}\rangle + \langle(\nabla' \F)_{-}, (\nabla' \HH)_+\rangle - \langle (\nabla' \F)_+, (\nabla' \HH)_{-}\rangle\right)
 \end{eqnarray*}
 \[
- \frac12\left(-\langle(\nabla \F)_{-}, (\nabla \HH)_+\rangle + \langle (\nabla \F)_+, (\nabla \HH)_{-}\rangle + \langle(\nabla' \F)_{-}, (\nabla' \HH)_+\rangle - \langle (\nabla' \F)_+, (\nabla' \HH)_{-}\rangle\right)\
 \]
\[
 = \frac12 \langle (\nabla \HH)_{-},  (\nabla \F)_+-\tau(\nabla'\F)_+ \rangle - \frac12 \langle (\nabla \F)_{-}, (\nabla \HH)_+ -\tau(\nabla'\HH)_+ \rangle.
\]
This is equal to
\[
 = \frac12 \langle (\nabla \HH)_{-1},  (\nabla \F)_{1}-\tau(\nabla'\F)_{1} \rangle - \frac12 \langle (\nabla \F)_{-1}, (\nabla \HH)_{1} -\tau(\nabla'\HH)_{1} \rangle.
\]
and from here we can go back to 
\[
- \frac12\left(-\langle(\nabla \F)_{-1}, (\nabla \HH)_1\rangle + \langle (\nabla \F)_1, (\nabla \HH)_{-1}\rangle + \langle(\nabla' \F)_{-1}, (\nabla' \HH)_1\rangle - \langle (\nabla' \F)_1, (\nabla' \HH)_{-1}\rangle\right),
 \]
which coincides with the reduction of (\ref{Sklyanin})associated to $\hat r$. Even more surprising, we will later show that the right bracket produces a Poisson bracket upon reduction, even though the original bracket is not Poisson!
 \end{proof}
Using (\ref{relation}) we can actually calculate explicitly the reduction of the twisted bracket to $\K$. We will illustrate it with our running $\RP^2$ example.
 \begin{example}\end{example} Assume $f:\K\to \R$ is a Hamiltonian function and let $\F:\SL(3)^{(N)}\to \R$ be an extension of $f$, constant on the gauge leaves of $H^{(N)}$. Recall that we have explicitly found the left gradient of such a extension along $\K$. It is given by (\ref{grad2}). We also know that the right and left gradients satisfy equation (\ref{streq2}), and so the $\g_1\oplus\g_0$ component of $\nabla'_s\F(K)$
 equals that of $\T^{-1}\nabla_s\F(K)$. We have
 \[
 \nabla \F(K) = \begin{pmatrix} A& B & \hbn\\ C&D& \han\\ E&F&-(A+D)\end{pmatrix},\hskip 2ex \nabla'\F(K) = \begin{pmatrix} \T^{-1}A& \T^{-1}B & \T^{-1}\hbn\\ \T^{-1}C&\T^{-1}D& \T^{-1}\han\\ B+a\hbn&\hbn&-\T^{-1}(A+D)\end{pmatrix},
 \]
 where we have again dropped the subindex to avoid over-cluttering, and where the values for $A, B, C, D, E, F$ were found in (\ref{data}). The reduced Poisson bracket can be obtained by simply substituting both gradients of the extensions in (\ref{twisted}). Consider  the now simplified tensor $\hat r$ 
 \[
\hat  r( P\otimes Q) =  \langle P_{-1}, Q_{1}\rangle.
 \]
Denote by $A^g$ the entry of $\nabla \G(K)$, likewise with other entries. As before, we will ignore subindices, and we will write
\[
\langle v, w \rangle = v^T w.
\]
Substituting in (\ref{twisted}) we obtain that the reduced bracket is given by
\begin{eqnarray*}
\{ f, g\}(\kb) &= &\frac12\langle\begin{pmatrix}\hbn\\ \han\end{pmatrix}, \begin{pmatrix} E^g \\ F^g\end{pmatrix}\rangle - \frac12 \langle\begin{pmatrix}g_b\\ g_a\end{pmatrix}, \begin{pmatrix} E^f \\ F^f\end{pmatrix}\rangle
\\&+&\frac12\langle\begin{pmatrix}\T^{-1}\hbn\\ \T^{-1}\han\end{pmatrix} , \begin{pmatrix} B^g + a g_b\\ g_b\end{pmatrix}\rangle - \frac12\langle\begin{pmatrix}\T^{-1}g_b\\ \T^{-1}g_a\end{pmatrix} , \begin{pmatrix} B^f + a \hbn\\ \hbn\end{pmatrix}\rangle\\ &-&\langle\begin{pmatrix}\hbn\\ \han\end{pmatrix}, \begin{pmatrix} E^g \\ F^g\end{pmatrix}\rangle +  \langle\begin{pmatrix}g_b\\ g_a\end{pmatrix}, \begin{pmatrix} E^f \\ F^f\end{pmatrix}\rangle
\end{eqnarray*}
\[
= \frac12\langle\begin{pmatrix}\hbn\\ \han\end{pmatrix}, \begin{pmatrix}\T B^g + \T ag_b- E^g \\ \T g_b-F^g\end{pmatrix}\rangle - \frac12 \langle\begin{pmatrix}g_b\\ g_a\end{pmatrix}, \begin{pmatrix} \T B^f+\T a\hbn -E^f \\ \T \hbn-F^f\end{pmatrix}\rangle
\]
\[
=\langle \begin{pmatrix} \han \\ \hbn \end{pmatrix},  \P \begin{pmatrix} g_a\\ g_b\end{pmatrix}\rangle
\]
where $\P$ is given as in (\ref{P}). 

\section{Hamiltonian evolutions of twisted polygons}\label{sec6}
By now it is clear that there is a very close relationship between the evolution induced on the invariants by invariant evolutions of polygons, and the Hamiltonian evolution associated to the reduced bracket obtained from the twisted bracket (\ref{twisted}). Indeed, we have seen in our example that they are equal under some identifications. In our final section we will prove that if we choose as invariant coefficients $\T \v_s = \frac{\partial f}{\partial \kb_s}$, then the evolution of the projective polygons defined by $\v_s$ induces a Hamiltonian evolution on $\kb_s$, with Hamiltonian function $f$. The result implies that {\it any $n$-dimensional reduced Hamiltonian evolution is induced on $\kb$ by some invariant evolution of projective polygons in $\RP^n$}.

\begin{theorem}\label{th61}
Assume an invariant evolution of twisted $N$-polygons in $\RP^n$ lifts to an evolution of the form (\ref{lift}). Furthermore, assume that
\begin{equation}\label{compa}
\T \v_s =  \frac{\partial f}{\partial \kb_s}
\end{equation}
for some function $f:\K\to \R$, where 
 \(
 \frac{\partial f}{\partial \kb_s} = ( (-1)^n\frac{\partial f}{\partial k^1_s},  \frac{\partial f}{\partial k^2_s},\dots,  \frac{\partial f}{\partial k^n_s})^T\). Then, the evolution induced on the invariants $k_s^i$ is the reduced Hamiltonian evolution associated to the Hamiltonian function $f$.
 \end{theorem}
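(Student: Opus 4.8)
The plan is to show that the evolution of the Maurer--Cartan matrix induced by the polygon flow and the reduced Hamiltonian flow of $f$ both take the same Lax form $(K_s)_t = \widehat N_{s+1} K_s - K_s \widehat N_s$, and that under hypothesis (\ref{compa}) the two matrices $\widehat N_s$ literally coincide. Half of this is already in hand on the geometric side: by Theorem \ref{structeq} the lifted evolution (\ref{lift}) induces $(K_s)_t = N_{s+1}K_s - K_s N_s$ with $N_s$ of the shape (\ref{Ns}), and since $(K_s)_tK_s^{-1}\in\g_1$ this reads $\T N_s - K_s N_s K_s^{-1}\in\g_1$. By Theorem \ref{Ndet}, $N_s$ is the \emph{unique} element of $\sl(n+1)$ of that shape, satisfying this linear constraint, whose $\g_{-1}$-component is the prescribed $\v_s$. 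Thus all the geometric data are encoded in one linear system with $\g_{-1}$-seed $\v_s$, and since $K_s$ carries exactly the invariants $k_s^i$, matching $(K_s)_t$ will match $(k_s^i)_t$.

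First I would identify the matrix governing the Hamiltonian side as the right gradient $\nabla'_s\F$ of an extension $\F$ of $f$ constant on the $H^{(N)}$-gauge leaves. Two facts single out $\nabla'_s\F$ as a candidate for $N_s$. Its $\g_{-1}$-component is forced: matching $\g_{-1}$- and $\g_0$-parts in (\ref{streq2}) gives $(\nabla'_s\F)_{-1}=\T^{-1}\tfrac{\partial f}{\partial\kb_s}$, while the Proposition computing the left gradient fixes $(\nabla_s\F)_{-1}=\tfrac{\partial f}{\partial\kb_s}$. Moreover $\nabla'_s\F$ obeys the very same linear constraint as $N_s$: by (\ref{lrgrad}) one has $K_s\nabla'_s\F K_s^{-1}=\nabla_s\F$, so $\T\nabla'_s\F - K_s\nabla'_s\F K_s^{-1}=\T\nabla'_s\F-\nabla_s\F$, which lies in $\g_1$ by the $\T$-shift of (\ref{streq2}). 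Since $\nabla'_s\F$ has the general shape (\ref{Ns}), Theorem \ref{Ndet} applies verbatim and makes $\nabla'_s\F$ the unique solution of the same system with $\g_{-1}$-seed $\T^{-1}\tfrac{\partial f}{\partial\kb_s}$. Hypothesis (\ref{compa}), $\T\v_s=\tfrac{\partial f}{\partial\kb_s}$, equates the two seeds, whence $N_s=\nabla'_s\F$ for every $s$.

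The remaining, and main, step is to prove that the reduced Hamiltonian flow is itself $(K_s)_t=\nabla'_{s+1}\F\,K_s-K_s\nabla'_s\F$; that is, that the right gradient of the Hamiltonian plays exactly the role of the frame velocity $N_s=(\rho_s)_t\rho_s^{-1}$. Here I would invoke Theorem \ref{5.5}: the reduced bracket is the reduction of the Sklyanin bracket (\ref{Sklyanin}) for the simple tensor $\hat r(\xi,\eta)=\langle\xi_{-1},\eta_1\rangle$, and the reduction formula (\ref{relation}) lets me compute $\{f,k_s^i\}$ from extensions. Writing out the equations of motion of this Sklyanin-type bracket and projecting to $\K$, the projections built into $\hat r$ annihilate the $\g_0$- and $\g_1$-ambiguities of $\nabla\F$ and $\nabla'\F$ — the same mechanism by which the $\g_0$-perturbation dropped out in the proof of Theorem \ref{5.5} — leaving at the level of $K$ precisely the commutator-type expression $\nabla'_{s+1}\F\,K_s-K_s\nabla'_s\F$.

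I expect this last step to be the crux. The obstacles are that $\hat r$ is not an $R$-matrix, so the Hamiltonian vector field is not given by a textbook Lax formula and must be extracted by hand from (\ref{twisted})--(\ref{Sklyanin}) together with the shift $\T$; and that one must verify the $\g_0$/$\g_1$ components of the gradients, which (\ref{streq2}) only pins down up to $\g_1$, do not contribute to the flow of the invariants. Once the Lax form is established, matching it against the geometric Lax form through $N_s=\nabla'_s\F$ is immediate: both evolutions of $K$ agree, hence so do the induced evolutions of $\kb$, which is therefore the reduced Hamiltonian evolution of $f$. The $\RP^2$ computation already carried out, where both flows equal $\P$ applied to the matched gradient, is the prototype of this identification.
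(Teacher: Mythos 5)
Your skeleton coincides with the paper's: treat both flows as Lax-type deformations $(K_s)_t=\widehat N_{s+1}K_s-K_s\widehat N_s$, note that any $\widehat N_s$ of shape (\ref{Ns}) satisfying $\T\widehat N_s-K_s\widehat N_sK_s^{-1}\in\g_1$ is determined by its $\g_{-1}$ seed (Theorem \ref{Ndet}), observe via (\ref{lrgrad}) and (\ref{streq2}) that $\nabla'_s\F$ satisfies that same constraint with seed $\T^{-1}\frac{\partial f}{\partial\kb_s}$, and let (\ref{compa}) equate the seeds. All of that is indeed in the paper's proof. The genuine gap is exactly the step you defer as ``the crux'': you never prove that the reduced Hamiltonian flow of $f$ is a Lax flow with matrix $\pm\nabla'_s\F$, and without that the uniqueness argument has nothing to compare $N_s$ against --- this claim \emph{is} the analytic content of the theorem. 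The paper closes it not by ``writing out the equations of motion of the Sklyanin-type bracket and projecting,'' but by a pairing argument: (i) since the Hamiltonian flow keeps $K$ in the section $\K$ and $(K_s)_tK_s^{-1}\in\g_1$, the chain rule gives $\{f,g\}(\kb)=\xi_f(\kb)(g)=\langle K_tK^{-1},\nabla\G(K)\rangle=\langle K_tK^{-1},(\nabla\G)_{-1}\rangle$ for every $g$; (ii) a skew-symmetry manipulation of the reduced bracket --- using invariance of $\langle\,,\rangle$ under conjugation, the fact that $(\nabla\G)_1-\T(\nabla'\G)_1=\nabla\G-\T\nabla'\G\in\g_1$, and (\ref{+-rel}) --- collapses the two-term reduced formula into $\{f,g\}(\kb)=\langle(\nabla\F)_1-\T(\nabla'\F)_1,(\nabla\G)_{-1}\rangle$, so that all dependence on $g$ enters only through $(\nabla\G)_{-1}=\frac{\partial g}{\partial\kb}$; (iii) since $\frac{\partial g}{\partial\kb}$ is arbitrary and the pairing between $\g_1$ and $\g_{-1}$ is nondegenerate, $K_tK^{-1}=\nabla\F-\T\nabla'\F=K\nabla'\F K^{-1}-\T\nabla'\F$, which is precisely the desired Lax form, with matrix $-\nabla'\F$.

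Two corrections to your sketch of the crux. First, the mechanism is not that ``the projections built into $\hat r$ annihilate the $\g_0$- and $\g_1$-ambiguities of $\nabla\F$ and $\nabla'\F$'': by the theorem of Section 5.2, the gradients of an extension constant on the $H^{(N)}$-gauge leaves are \emph{uniquely} determined along $\K$ by $\frac{\partial f}{\partial\kb}$ and $\kb$, so there is no residual ambiguity in $\nabla\F$; what must be eliminated is the dependence of the bracket on the $\g_0,\g_1$ parts of $\nabla\G$, and that is achieved by the skew-symmetrization in (ii), not by $\hat r$ alone. Relatedly, your worry that $\hat r$ is not an $R$-matrix is moot: one never needs an unreduced Hamiltonian vector field on $G^{(N)}$, only the reduced bracket evaluated through (\ref{relation}) on proper extensions. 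Second, watch the sign: the paper concludes $N=-\nabla'\F$, consistent with $N_s^\m=-d\s(o)\v_s$ in (\ref{Ncond}), whereas your $N_s=+\nabla'_s\F$ follows the sign convention of (\ref{Ns}); the paper is itself inconsistent between these two, but your seed-matching must use a single convention throughout, or (\ref{compa}) comes out with the wrong sign.
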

 
We will refer to (\ref{compa}) as the {\it compatibility condition}. 

 \begin{proof} Assume $\kb$ evolves by an evolution that is Hamiltonian with respect to the reduced bracket on $\K$. Let's denote by $\xi_f$ the Hamiltonian vector field associated to a Hamiltonian function $f: \K \to \R$ such that $\kb_t = \xi_f(\kb)$. 
 
 Now, given  that $\kb_t$ appears in the $\g_1$ component of $K_t K^{-1}$ and $\frac{\partial f}{\partial \kb}$ is in the $\g_{-1}$ position of $\nabla \F(K)$ for any extension of $f$ constant on the leaves of $H$ as in (\ref{streq2}), we have that the reduced bracket of $f$ with any other function $g$ can be written as
 \[
\{f,g\}(\kb) =  \xi_f(\kb) (g) = \langle K_t K^{-1}, \nabla \G(K)\rangle
\]
where $\F$ and $\G$ are extensions of $f,g:\K\to \R$ as in (\ref{streq2}). Now, notice that if $\F$ and $\G$ satisfy (\ref{streq2}) we have
\begin{equation}\label{+-rel}
(\nabla \F)_{-1} = (\T\nabla'\F)_{-1}, \hskip 2ex\hbox{or}\hskip3ex (\T^{-1}\nabla \F)_{-1} = (\nabla' \F)_{-1}.
\end{equation}
We are assuming that all gradients are evaluated at $K$. Using this relation in the Sklyanin bracket  we obtain
\begin{eqnarray*}&&\{f, g\}(\kb) = \langle K_t K^{-1}, \nabla \G(K)\rangle\\&=& \frac12 \langle (\nabla \G)_{-1},  (\nabla \F)_1-\T(\nabla'\F)_1 \rangle - \frac12 \langle (\nabla \F)_{-1}, (\nabla \G)_1 -\T(\nabla'\G)_1 \rangle.
 \end{eqnarray*}
  But these expressions are skew-symmetric. Indeed, notice that  $(\nabla \G)_1-\T(\nabla'\G)_1 = \nabla \G -\T\nabla'\G$ since it belongs to $\g_1$ and, therefore,
  \[
   \langle (\nabla \F)_{-1}, (\nabla \G)_1 -\T(\nabla'\G)_1 \rangle =  \langle \nabla \F, \nabla \G -\T\nabla'\G \rangle = -\langle \nabla \F, \T\nabla'\G\rangle +\langle\T\nabla'\F, \T \nabla'\G\rangle
   \]
   where we have used that $\langle\nabla \F, \nabla \G\rangle = \langle\nabla' \F, \nabla' \G\rangle$ since $\langle, \rangle$ is invariant under the adjoint action. From here, the above equals
   \[
    \langle(\T\nabla'\G)_{-1}, (\T\nabla'\F)_1-(\nabla \F)_1\rangle =- \langle(\nabla \G)_{-1}, (\nabla \F)_1-(\T\nabla'\F)_1\rangle.
     \]
  Skew-symmetry tells us that
  \[
\{f, g\}(\kb) =  \langle K_t K^{-1}, \nabla \G(K)\rangle =      \langle (\nabla \F)_1-\T(\nabla'\F)_1 ,  (\nabla \G)_{-1}\rangle.
 \]
 We can now see the relation between $\T\v$ and $\frac{\partial f}{\partial \kb}$. Indeed, if $N$ is given as in (\ref{streq}), 
 \[
  \langle K_t K^{-1}, \nabla \G(K)\rangle =  \langle \T N - K N K^{-1}, \nabla \G(K)\rangle
  \]
 for any extension $\G$ as in (\ref{streq2}). Recall next that $T N - K N K^{-1}\in \g_1$, and both $\T N$ and $\nabla \F$ are determined by its $\g_{-1}$ component and this condition. Notice also that $\nabla \F = K \nabla' \F K^{-1}$. Thus, we can conclude that 
 \[
 N = -\nabla' \F.
 \]
 Since $\T N_{-1} =- (\T\nabla'\F)_{-1}  =- (\nabla \F)_{-1}$, and the $\g_{-1}$ component of $N$ is $-\v$, the theorem follows.
 \end{proof}
\subsection{Completely integrable evolutions of planar polygons}\label{sec61}
 In this section, we study in detail the integrable lattice that appears in the case of  planar polygons. Although this, and the next study, seem to separate themselves from invariant evolutions of polygons, we will return to connect them towards the end.

In Section \ref{sec5} we have shown that the operator $\P$ defined by (\ref{P}) is Hamiltonian, which naturally leads to Hamiltonian
evolutions for the invariants as stated in Theorem \ref{th61}. However, to obtain integrable systems we need biHamiltonian structures. 
To obtain one or more compatible structures, we shall introduce arbitrary constants in the operator $\P$ 
and study the conditions on the parameters to ensure that the operator is  still Hamiltonian. That is, we will analyze all possible Hamiltonian structures, compatible or not, within $\P$.
\begin{theorem}\label{th62} Consider the antisymmetric operator $\cH$ given by
\begin{eqnarray*}
&&\begin{pmatrix}\begin{array}{c}\la_1 (\T^{-1} b - b\T)\\+ \la_2 a(\T-\T^{-1})(\T+1+\T^{-1})^{-1}a \end{array}& 
\begin{array}{c}\la_4 \T-\la_3 \T^{-2}\\ +\la_5 a(1-\T^{-1})(\T+1+\T^{-1})^{-1} b\end{array}\\\\ 
\begin{array}{c}\la_3 \T^2-\la_4 \T^{-1}\\-\la_5 b(1-\T)(\T+1+\T^{-1})^{-1} a\end{array} 
&\begin{array}{c}\la_6 (\T a-a\T^{-1}) \\+ \la_7 b(\T-\T^{-1})(\T+1+\T^{-1})^{-1}b\end{array}\end{pmatrix},
\end{eqnarray*}
where $\la_i$, $i=1,\cdots, 7$ are constants. Then $\cH$ is Hamiltonian when one of the following three cases is satisfied
\begin{enumerate}
\item[(1).]  $ \la_1=\la_2=\la_5=\la_6=\la_7=0, \quad $ $\la_3$ and $\la_4$ are any constants;
\item[(2).] $ \la_1 \la_6=\la_2 \la_3,\ \la_3=\la_4,\ \la_2=\la_5=\la_7$ and at least one of $\la_1, \la_3, \la_6$ is nonzero;
\item[(3).] $\la_1=\la_3=\la_4=\la_6=0$, $\quad \la_2, \la_5$ and $ \la_7  $ are any constants.
\end{enumerate}
\end{theorem}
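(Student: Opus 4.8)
The operator $\cH$ is skew-symmetric for every value of the $\la_i$: the adjoint of $\T$ is $\T^{-1}$, multiplication by $a$ or $b$ is self-adjoint, and $\Ro=(\T+1+\T^{-1})^{-1}$ is self-adjoint and commutes with every power of $\T$. A direct check then gives $(\cH_{11})^\ast=-\cH_{11}$, $(\cH_{22})^\ast=-\cH_{22}$ and $(\cH_{21})^\ast=-\cH_{12}$, so bilinearity, skew-symmetry and the Leibniz rule of the bracket $\{f,g\}=\langle\delta f,\cH\,\delta g\rangle$ (with $\delta$ the variational gradient in the invariants $(a,b)$) hold automatically. Hence the entire content of the theorem is the Jacobi identity, which I would phrase as the vanishing of the Schouten bracket $[\cH,\cH]$ and test through the functional-bivector criterion of Olver and Dorfman adapted to difference operators on the periodic lattice: introducing a basic uni-vector $\theta=(\alpha,\beta)$ and the bivector $\Theta=\tfrac12\sum_s\theta_s\wedge(\cH\theta)_s$, the bracket is Poisson precisely when $\mathrm{pr}\,\mathbf v_{\cH\theta}(\Theta)=0$, where $\mathbf v_{\cH\theta}$ is the evolutionary field substituting $a\mapsto(\cH\theta)^a$, $b\mapsto(\cH\theta)^b$ into the explicit coefficients of $\Theta$.

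The plan is to decompose $\cH$ into its elementary skew building blocks, one per parameter: the local blocks $\T^{-1}b-b\T$ and $\T a-a\T^{-1}$ carrying $\la_1,\la_6$, the constant-coefficient blocks carrying $\la_3,\la_4$, and the nonlocal blocks $a(\T-\T^{-1})\Ro a$, the coupled off-diagonal $\la_5$-block, and $b(\T-\T^{-1})\Ro b$ carrying $\la_2,\la_5,\la_7$. Since the Schouten bracket is bilinear and symmetric, $[\cH,\cH]$ becomes a symmetric quadratic form in $(\la_1,\dots,\la_7)$ whose entries are the pairwise Schouten brackets of the blocks. The blocks carrying $\la_3,\la_4$ have constant coefficients, so their self-brackets vanish and each is Poisson on its own; this already yields the sufficiency of case (1), where only $\la_3,\la_4$ survive. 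These blocks are not inert in the mixed brackets, however: prolonging along their (nonzero) Hamiltonian fields substitutes shifts of $\alpha,\beta$ for $a,b$ inside the bivectors of the field-dependent blocks, and it is precisely these cross terms that constrain $\la_3,\la_4$ once the other blocks are present.

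For the remaining cases I would compute the nontrivial self- and mixed brackets, collect the resulting cubic alternating expressions, and use periodicity to move shifts off one factor (summation by parts), reducing every term to a normal form $\sum_s\gamma_s\,\T^i\alpha\wedge\T^j\alpha\wedge\T^k\beta$ and its analogues. Requiring all coefficients to vanish produces a small system of quadratic relations among the $\la_i$. I expect this to collapse on one branch to $\la_3=\la_4$, $\la_2=\la_5=\la_7$ and $\la_1\la_6=\la_2\la_3$ (case (2), which contains the operator $\P$ of (\ref{P}), for which all parameters equal $1$), and on the other branch to $\la_1=\la_3=\la_4=\la_6=0$, leaving the nonlocal blocks $\la_2,\la_5,\la_7$ free (case (3)). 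A final elementary analysis then shows the zero locus of the system is exactly the union of the three stated families.

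The hard part will be the nonlocal factor $\Ro=(\T+1+\T^{-1})^{-1}$ in the $\la_2,\la_5,\la_7$ blocks. Under the prolongation one substitutes $(\cH\theta)^a$ and $(\cH\theta)^b$ — which themselves contain $\Ro$ — inside expressions already carrying $\Ro$, producing compositions $\Ro\,(\cdot)\,\Ro$ that are genuinely nonlocal and do not simplify termwise. Controlling these requires clearing the denominator, for instance introducing auxiliary fields $c=\Ro(\cdots)$ subject to $(\T+1+\T^{-1})c=(\cdots)$, and then verifying that the nonlocal contributions cancel among themselves before the local constraints can be read off. That cancellation, together with the bookkeeping of the triple-wedge terms, is where essentially all the work lies; the surviving finite system in the $\la_i$ is straightforward to solve.
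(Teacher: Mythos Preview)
Your plan is the paper's own approach: verify Jacobi via the tri-vector criterion $\mathrm{pr}\,v_{\cH\theta}(\Theta)=0$, and handle the nonlocal $\Ro$ through auxiliary variables $P=\Ro(a\theta)$, $Q=\Ro(b\eta)$ with defining relations $a\theta=P_{1}+P+P_{-1}$ and $b\eta=Q_{1}+Q+Q_{-1}$. Two points of contact are worth noting. First, rather than splitting $\cH$ into seven parametrized blocks and computing all pairwise Schouten brackets, the paper organizes the full tri-vector by wedge type---pure $\theta\wedge\theta\wedge\theta$, pure $\eta\wedge\eta\wedge\eta$, and the two mixed types---each producing a short list of quadratic constraints (equations (\ref{cond1})--(\ref{cond4})); the union is then solved by a Gr\"obner-basis computation, which delivers exactly the three branches. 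Second, your worry about nested compositions $\Ro(\cdot)\Ro$ does not materialize: every nonlocal block has the symmetric shape $a\,D\,\Ro\,a$ or $b\,D\,\Ro\,b$, so the contribution from varying the inner field equals, after one integration by parts, the contribution from varying the outer one, and the two combine into a single term such as $\la_{2}\int\theta\wedge a'\wedge(P_{1}-P_{-1})$. The prolonged tri-vector therefore contains only \emph{products} of shifted $P$'s and $Q$'s, never $\Ro$ applied to an expression already carrying $\Ro$, and those products collapse via the defining relations. The computation is more mechanical than you anticipate.
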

\begin{proof}
We investigate the conditions needed for $\cH$ to be a Hamiltonian operator using Proposition 7.7 in \cite{mr94g:58260}. 
Although this theorem is formulated for differential operators, it is also valid for difference operators (\cite{mwx2}).

For the operator $\cH$, we have
\begin{eqnarray*}
 \cH (\bt)=\cH \begin{pmatrix} \th\\ \et\end{pmatrix}=\begin{pmatrix}\la_1( b_{-1} \th_{-1} - b\th_1) 
+ \la _2 a( P_1-P_{-1})
+\la_4 \et_1-\la_3 \et_{-2} + \la_5 a(Q-Q_{-1})\\ 
{\la_3 \th_2-\la_4 \th_{-1}-\la_5 b(P-P_{1})+
 \la_6(a_1 \et_1-a \et_{-1}) +\la_7 b(Q_1-Q_{-1})}\end{pmatrix}
\end{eqnarray*}
where we use notations
\begin{eqnarray*}
(\T+1+\T^{-1})^{-1}a \th =P \quad \mbox{and} \quad (\T+1+\T^{-1})^{-1}b \et =Q,
\end{eqnarray*}
that is,
\begin{eqnarray*}
a \th =P_1+P+P_{-1} \quad \mbox{and} \quad b \et =Q_1 +Q+Q_{-1}.
\end{eqnarray*}
 (The reader should not confused the notation here with our previous one. By $P_1$ we mean $\T P$ and not its $\g_1$ component.)
We now define the following tri-vector
\begin{eqnarray*}
 \Psi=\frac{1}{2} \int \bt \wedge \mbox{Pr}_{\cH(\bt)} \cH \wedge \bt \ .
\end{eqnarray*}
We know that an anti-symmetric operator $\cH$ is Hamiltonian if and only if the tri-vector $\Theta$ vanishes (\cite{mr94g:58260}).
We carry out this computation for the operator $\cH$ defined in the statement.
\begin{eqnarray*}
 &&\Psi=\int\left(-\la_1 \th \wedge b' \wedge \th_{1} +\la_2 \th \wedge a'\wedge ( P_1-P_{-1}) +\la_5 \th \wedge a'\wedge (Q-Q_{-1})\right.\\
 &&\qquad \left.-\la_5\et \wdg b' \wdg (P-P_{1})-\la_6 \et \wdg a' \wdg \et_{-1} +\la_7 \et\wdg b' \wdg (Q_1-Q_{-1})
 \right).
\end{eqnarray*}
Here $a'$ takes the value of the first entry of $\cH (\bt)$ and $b'$ takes the value of its second entry. We substitute them into the above expression.
Instead of computing the whole expression of $\Theta$, we compute its independent terms, which should all vanish. First let us look at $3$-forms involving only $\th$.
These terms are
\begin{eqnarray*}
&&\!\!\!\int\!\!\!\left(-\la_1 \th \wedge (\la_3 \th_2-\la_4 \th_{-1}-\la_5 b(P-P_{1})) \wedge \th_{1} +\la_2 \th \wedge 
\la_1( b_{-1} \th_{-1} - b\th_1)\wedge ( P_1-P_{-1})\right)
 \\
&=&\int \left( \la_1 \la_4 \th \wedge  \th_{-1}\wedge \th_{1}- \la_1 \la_3 \th \wedge  \th_{2} \wedge \th_{1}
+\la_1 \la_5 b \th \wedge (P-P_{1}) \wedge \th_{1}\right.\\
&&\left. +\la_2 \la_1 b_{-1} \th \wedge \th_{-1} \wedge ( P_1-P_{-1}) -\la_2 \la_1 b\th \wedge  \th_1\wedge ( P_1-P_{-1})\right)\\
&=&\!\!\int\!\! \left( -\la_1 (\la_4-\la_3) \th \wedge  \th_1\wedge \th_{-1}
  +\la_1 b \th \wedge \th_{1} \wedge (\la_2( P-P_{1}+P_{-1}-P_{2})-\la_5 (P-P_{1}))\right)\\
&=&\la_1 \int \left(  (\la_4-\la_3) \th \wedge  \th_1\wedge \th_{-1}
+(\la_2-\la_5)  b \th \wedge \th_{1} \wedge (P-P_{1})\right).
\end{eqnarray*}
Here we used the identities $a \th =P_1+P+P_{-1}$ and $a_{1} \th_{1} =P+P_{1}+P_{2}$.
This part vanishes when 
\begin{eqnarray}\label{cond1}
 \la_1 (\la_4-\la_3)=0, \qquad \la_1 (\la_2-\la_5)=0.
\end{eqnarray}
Next let us look at $3$-forms involving only $\et$. These terms are
\begin{eqnarray*}
&&\int\left(-\la_6 \et \wdg (\la_4 \et_1-\la_3 \et_{-2} + \la_5 a(Q-Q_{-1})) \wdg \et_{-1} \right.\\
&&\left.+\la_7 \et\wdg (\la_6(a_1 \et_1-a \et_{-1}) ) \wdg (Q_1-Q_{-1})
 \right)\\
&=&\la_6 \int \left( (\la_3-\la_4) \et \wedge \et_{1}\wedge \et_{-1}-\la_5 a \et\wdg \et_{-1}  \wdg  (Q_{-1}-Q)\right.\\
&&\left. -\la_7 a \et\wdg \et_{-1}  \wdg (Q_1-Q_{-1}+Q-Q_{-2}) \right)\\
&=&\la_6 \int \left( (\la_3-\la_4) \et \wedge \et_{-1}\wedge \et_1+(\la_7-\la_5) a \et\wdg \et_{-1}  \wdg  (Q_{-1}-Q)  \right).
\end{eqnarray*}
Here we used the identities $b \et =Q_1+Q+Q_{-1}$ and $b_{-1} \et_{-1} =Q+Q_{-1}+Q_{-2}$.
These terms vanish when 
\begin{eqnarray}\label{cond2}
 \la_6 (\la_4-\la_3)=0, \qquad \la_6 (\la_7-\la_5)=0.
\end{eqnarray}
We then look at the terms involving a 2-form in $\th$ and a one-form in $\et$.
\begin{eqnarray*}
 &&\int\left(-\la_1 \th \wedge (\la_6(a_1 \et_1-a \et_{-1}) +\la_7 b(Q_1-Q_{-1})) \wedge \th_{1}\right.\\
 &&+\la_2 \th \wedge (\la_4 \et_1-\la_3 \et_{-2} + \la_5 a(Q-Q_{-1}))\wedge ( P_1-P_{-1}) \\
 &&+\la_5 \th \wedge (\la_1( b_{-1} \th_{-1} - b\th_1) + \la _2 a( P_1-P_{-1}))\wedge (Q-Q_{-1})\\
 && \left.-\la_5\et \wdg (\la_3 \th_2-\la_4 \th_{-1}-\la_5 b(P-P_{1})) \wdg (P-P_{1})
 \right)\\
&=&\!\!\int\!\!\left(a \la_1 \la_6  \th \wedge \et\wedge \th_{-1}-a\la_1 \la_6\th_1 \wedge\et_{-1} \wedge \th
-b (\la_7-\la_5) \la_1 \th \wedge(Q_1-Q_{-1}) \wedge \th_{1} \right.\\
&&+\la_2 \la_3 \th_{1} \wedge  \et_{-1} \wedge ( P-P_{2}) 
-\la_4\la_2 \th_{-1} \wedge  \et \wedge ( P_{-2}-P)\\
&& \left.+\la_5 \la_4\et \wdg \th_{-1} \wdg (P-P_{1})
-\la_3\la_5\et_{-1} \wdg  \th_{1} \wdg (P_{-1}-P)\right)\\
&=&\!\!\int\!\!\left(a \la_1 \la_6  \th \wedge \et\wedge \th_{-1}-a\la_1 \la_6\th_1 \wedge\et_{-1} \wedge \th
-b (\la_7-\la_5) \la_1 \th \wedge(Q_1-Q_{-1}) \wedge \th_{1} \right.\\
&&\!\!+\la_2 \la_3 \th_{1} \wedge  \et_{-1} \wedge ( 2P+P_1+P_{-1}\!-\!P_{-1}) 
\!-\!\la_4\la_2 \th_{-1} \wedge  \et \wedge (P_1\!-\!P_1\! -\!P_{-1}\!-\!2 P)\\
&& \left.+\la_5 \la_4\et \wdg \th_{-1} \wdg (P-P_{1})
-\la_3\la_5\et_{-1} \wdg  \th_{1} \wdg (P_{-1}-P)\right)\\
&=&\int\left(a (\la_1 \la_6-\la_4\la_2)  \th \wedge \et\wedge \th_{-1}-a(\la_1 \la_6-\la_2 \la_3)\th_1 \wedge\et_{-1} \wedge \th \right.\\
&&-b (\la_7-\la_5) \la_1 \th \wedge(Q_1-Q_{-1}) \wedge \th_{1} \\
&&\left.+(\la_2-\la_5) \la_3 \th_{1} \wedge  \et_{-1} \wedge ( P-P_{-1}) 
-\la_4 (\la_2-\la_5) \th_{-1} \wedge  \et \wedge (P_1- P)\right).
\end{eqnarray*}
These terms vanish if
\begin{eqnarray}\label{cond3}
\begin{array}{lll}\la_1 \la_6-\la_2 \la_3=0, & \la_1 \la_6-\la_4\la_2=0, & (\la_7-\la_5) \la_1=0,\\
(\la_2-\la_5) \la_3=0, & \la_4(\la_2-\la_5)=0 . &\end{array}
\end{eqnarray}
Finally, we look at the terms involving a 2-form in $\et$ and a one-form in $\th$.
\begin{eqnarray*}
 &&\int\left(\la_5 \th \wedge (\la_4 \et_1-\la_3 \et_{-2} + \la_5 a(Q-Q_{-1}))\wedge (Q-Q_{-1})\right.\\
 &&-\la_5\et \wdg (\la_6(a_1 \et_1-a \et_{-1}) +\la_7 b(Q_1-Q_{-1})) \wdg (P-P_{1})\\
 &&-\la_6 \et \wdg (\la_1( b_{-1} \th_{-1} - b\th_1) + \la _2 a( P_1-P_{-1})) \wdg \et_{-1}\\
 &&\left.+\la_7 \et\wdg (\la_3 \th_2-\la_4 \th_{-1}-\la_5 b(P-P_{1})) \wdg (Q_1-Q_{-1})
 \right)\\
&=&\!\!\int\!\!\left(\la_1 \la_6 b \et \wedge \th_{1}\wedge \et_{-1}\!-\!b\la_1 \la_6\et_1 \wedge\th \wedge \et
-a (\la_2-\la_5) \la_6 \et \wedge(P_1-P_{-1}) \wedge \et_{-1} \right.\\
&&+\la_5 \la_4 \th \wedge  \et_1 \wedge ( Q-Q_{-1}) 
-\la_5\la_3 \th_1 \wedge  \et_{-1} \wedge ( Q_1-Q)\\
&& \left.+\la_7 \la_3\et_{-1} \wdg \th_1 \wdg (Q-Q_{-2})
-\la_4\la_7\et_1 \wdg  \th \wdg (Q_2-Q)\right)\\
&=&\int\left( b( \la_1 \la_6 -\la_3\la_7) \et \wedge \th_{1}\wedge \et_{-1}-b (\la_1 \la_6-\la_4\la_7)\et_1 \wedge\th \wedge \et  \right.\\
&&-a (\la_2-\la_5) \la_6 \et \wedge(P_1-P_{-1}) \wedge \et_{-1} \\
&&\left.+(\la_5-\la_7) \la_4 \th \wedge  \et_1 \wedge ( Q-Q_{-1}) 
-(\la_5-\la_7)\la_3 \th_1 \wedge  \et_{-1} \wedge ( Q_1-Q)
\right).
\end{eqnarray*}
It vanishes when
\begin{eqnarray}\label{cond4}
\begin{array}{lll}\la_1 \la_6-\la_7 \la_3=0, & \la_1 \la_6-\la_4\la_7=0, &  (\la_2-\la_5) \la_6=0,\\
(\la_5-\la_7) \la_3=0, & (\la_7-\la_5) \la_4=0 . &\end{array}
\end{eqnarray}
From here we conclude that the operator $\cH$ is Hamiltonian if and only if the parameters satisfy all the conditions written out in
(\ref{cond1}), (\ref{cond2}), (\ref{cond3}) and (\ref{cond4}).
Using the Maple package Gr{\"o}bner to solve this algebraic system, we can sum up the solutions and obtain the three 
cases listed in the statement of the theorem.
\end{proof}
Clearly case (1) in Theorem \ref{th62} is not interesting in our search for integrable systems since the Hamiltonian pair is independent of dependent variables.
The same happens to the Hamiltonian operator in case (3) since we can rewrite it as 
\begin{eqnarray*}
 \begin{pmatrix} a&0\\0&b\end{pmatrix}\!\! \begin{pmatrix} \la_2 (\T-\T^{-1})(\T+1+\T^{-1})^{-1} \!\!&\! 
\la_5 (1-\T^{-1})(\T+1+\T^{-1})^{-1} \\
-\la_5 (1-\T)(\T+1+\T^{-1})^{-1}
\!\!&\! \la_7 (\T-\T^{-1})(\T+1+\T^{-1})^{-1}\end{pmatrix}\!\!
 \begin{pmatrix} a&0\\0&b\end{pmatrix}.
\end{eqnarray*}
We now look at case (2). Without losing generality, we can write down two Hamiltonian pairs as stated in the following theorem. 
For convenience, we write $a$ and $b$ as the dependent variables in one pair and $\tilde a$ and $\tilde b$ in another pair.
\begin{theorem}\label{th63}
Let 
$\la$ be an arbitrary constant. Then, the operators
\begin{eqnarray*}
&&\P_1=\begin{pmatrix}0 & \T- \T^{-2}\\  \T^2- \T^{-1}
&\la (\T a-a\T^{-1})\end{pmatrix}
\end{eqnarray*}
and
\begin{eqnarray*}
&&\P_2=\begin{pmatrix}\!\!\begin{array}{c}\T^{-1} b - b\T\\+ \la a(\T-\T^{-1})(\T+1+\T^{-1})^{-1}a \end{array}& 
\la a(1-\T^{-1})(\T+1+\T^{-1})^{-1} b\\\\ 
-\la b(1-\T)(\T+1+\T^{-1})^{-1} a 
& \la b(\T-\T^{-1})(\T+1+\T^{-1})^{-1}b\end{pmatrix}
\end{eqnarray*}
form a Hamiltonian pair, and the operators 
\begin{eqnarray*}
&&\Q_1(\tilde a, \tilde b)=\begin{pmatrix}\la(\T^{-1} \tilde b -\tilde b\T)& 
 \T- \T^{-2}\\ 
\T^2- \T^{-1}
&0\end{pmatrix}
\end{eqnarray*}
and
\begin{eqnarray*}
&&\!\!\Q_2(\tilde a, \tilde b)\!=\!\begin{pmatrix}\!\la \tilde a(\T\!-\!\T^{-1})(\T+1+\T^{-1}\!)^{-1} \tilde a\!\! & \!
\la \tilde a(1\!-\!\T^{-1})(\T+1+\T^{-1}\!)^{-1} \tilde b\\ \\
-\la \tilde b(1-\T)(\T+1+\T^{-1}\!)^{-1} \tilde a \!\!
&\!\begin{array}{c} \T \tilde a-\tilde a\T^{-1} \\+ \la \tilde b(\T\!-\!\T^{-1}\!)(\T+1+\T^{-1}\!)^{-1} \tilde b\end{array} 
\!\!\end{pmatrix}
\end{eqnarray*}
form another Hamiltonian pair. Moreover, the Hamiltonian pair $\P_1$ and $\P_2$ is related to the Hamiltonian pair $\Q_1$ and $\Q_2$ by the Miura transformation
$\tilde a= -b$ and $\tilde b=-a_1$, that is,
$$\Q_1(\tilde a, \tilde b)=D_{(\tilde a, \tilde b)} \P_1 D_{(\tilde a, \tilde b)}^\star 
\quad \mbox{and} \quad 
\Q_2(\tilde a, \tilde b)=D_{(\tilde a, \tilde b)} \P_2 D_{(\tilde a, \tilde b)}^\star .$$
This Miura transformation is induced by projective duality.
\end{theorem}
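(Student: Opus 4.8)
The plan is to obtain the pair $\P_1,\P_2$ almost for free from Theorem \ref{th62}, to establish the two Miura identities by a direct operator computation, and then to deduce the pair $\Q_1,\Q_2$ as a consequence rather than by a second independent verification. The starting observation is that $\P_1$ and $\P_2$ are both specializations of the seven-parameter operator $\cH$ of Theorem \ref{th62}: $\P_1$ corresponds to $(\la_1,\dots,\la_7)=(0,0,1,1,0,\la,0)$ and $\P_2$ to $(1,\la,0,0,\la,0,\la)$, and each parameter vector satisfies the conditions of case (2). The decisive point is that the whole pencil does as well: for $\alpha\P_1+\beta\P_2$ the parameters are $(\beta,\beta\la,\alpha,\alpha,\beta\la,\alpha\la,\beta\la)$, which still obey $\la_1\la_6=\la_2\la_3$, $\la_3=\la_4$ and $\la_2=\la_5=\la_7$. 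Hence every member of the pencil is Hamiltonian by Theorem \ref{th62}, and this is precisely the assertion that $\P_1$ and $\P_2$ form a Hamiltonian pair.

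Next I would compute the Fréchet derivative of the substitution $\Phi\colon(a,b)\mapsto(\tilde a,\tilde b)=(-b,-a_1)$. Since $\tilde a=-b$ and $\tilde b=-\T a$, this is the matrix difference operator $D=\left(\begin{smallmatrix}0&-1\\-\T&0\end{smallmatrix}\right)$ with formal adjoint $D^\star=\left(\begin{smallmatrix}0&-\T^{-1}\\-1&0\end{smallmatrix}\right)$. The identities $\Q_i=D\,\P_i\,D^\star$ are then pure operator identities to be checked entry by entry, using $\Phi$ to rewrite the coefficients: as multiplication operators $a=-\T^{-1}\tilde b$ and $b=-\tilde a$, so that for instance $\T a-a\T^{-1}=\T^{-1}\tilde b-\tilde b\T$. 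For $\P_1$ the computation is short and directly produces the off-diagonal blocks $\T-\T^{-2}$ and $\T^2-\T^{-1}$ together with the corner $\la(\T^{-1}\tilde b-\tilde b\T)$ of $\Q_1$. For $\P_2$ the same manipulation must be carried through the nonlocal kernel $(\T+1+\T^{-1})^{-1}$; what makes this manageable is that this kernel has constant coefficients and therefore commutes with the pure shifts appearing in $D$ and $D^\star$, so conjugation by $D$ only reshuffles the surrounding multiplication operators $a,b$ into expressions in $\tilde a,\tilde b$ and applies shifts, leaving the nonlocal factor intact.

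Once the two identities are verified, the pair $\Q_1,\Q_2$ needs no separate verification of the Jacobi condition. The map $\Phi$ is a local invertible change of variables, its inverse being $b=-\tilde a$, $a=-\T^{-1}\tilde b$, and $\Q_i=D\P_i D^\star$ is exactly the push-forward of the Poisson bivector $\P_i$ under $\Phi$. Push-forward by a local diffeomorphism preserves both the Hamiltonian property and the compatibility of a pencil, so $\Q_1,\Q_2$ inherit from $\P_1,\P_2$ the property of being a Hamiltonian pair. The identification of $\Phi$ with projective duality is then the interpretive content: the edges (osculating lines) of a planar projective polygon are the vertices of the dual polygon in $(\RP^2)^\ast$, and expressing the invariants $(\tilde a,\tilde b)$ of the dual polygon in terms of those $(a,b)$ of the original, via the lift (\ref{V}) and the frame (\ref{finalframe}), reproduces exactly $\tilde a=-b$, $\tilde b=-a_1$.

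I expect the main obstacle to be the $\P_2$ identity: carrying the substitution $a=-\T^{-1}\tilde b$, $b=-\tilde a$ through the nonlocal operator and checking that, in each of the four entries of $\Q_2$, the surviving shifts reassemble $(\T+1+\T^{-1})^{-1}$ in its correct position with no residual local terms. All the remaining steps are either immediate (the pencil computation) or formal (the push-forward argument).
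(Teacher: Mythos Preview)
Your proposal is correct and follows essentially the same line as the paper. Both arguments extract $\P_1,\P_2$ (and their pencil) as instances of case~(2) in Theorem~\ref{th62}, and both treat the Miura identities $\Q_i=D\,\P_i\,D^\star$ as a direct operator check. Two small differences are worth noting. First, the paper establishes the pair $\Q_1,\Q_2$ by a second direct appeal to case~(2) of Theorem~\ref{th62} (the parameter vectors $(\la,0,1,1,0,0,0)$ and $(0,\la,0,0,\la,1,\la)$, and their pencil, again satisfy those constraints), whereas you deduce it from the push-forward argument; your route is slightly more economical, the paper's is more symmetric. Second, for the projective-duality identification the paper does not recompute the dual invariants from the frame as you sketch, but simply quotes the map $\alpha(x_i)=\overline{x_ix_{i+1}}$ from \cite{OST} and its known effect $\alpha^\ast(\hat a_i)=-\hat b_{i+1}$, $\alpha^\ast(\hat b_i)=-\hat a_i$, then translates via $\hat a_i=-a_{i+1}$, $\hat b_i=-b_i$ to obtain $\tilde a=-b$, $\tilde b=-a_1$. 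If you want your argument to be self-contained at that step, you would need to actually carry out the dual-frame computation you allude to; otherwise citing \cite{OST} as the paper does is the cleaner option.
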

\begin{proof} The first part of the proof of this statement is straightforward. 
They are two Hamiltonian pairs as a direct result of case (2) in Theorem \ref{th62} and the relation between them can be
checked by a simple calculation. To see that the Miura transformation is induced by projective duality, we will use the map $\alpha$ defined in \cite{OST}.  (Recall that $(a,b)$ in \cite{OST} is in fact $(\hat a, \hat b)$ here,  a difference created by the gauge transformation we initially introduced and the choice of right Maurer-Cartan matrix rather than the left one \cite{OST} uses.) Define $\alpha:\RP^2\to (\RP^2)^\ast$ as
\[
\alpha(x_i)  = \overline{x_i x_{i+1}},
\]
where $\overline{x_i x_{i+1}}$ is the line joining $x_i$ and $x_{i+1}$. The authors of \cite{OST} showed that $\alpha^\ast(\hat a_i) = -\hat b_{i+1}$ and $\alpha^\ast(\hat b_i) = -\hat a_i$ where if $(\hat a_i, \hat b_i)$ are the invariants for $x_i$,
$(\alpha^\ast(\hat a_i), \alpha^\ast(\hat b_i))$ are the projective invariants associated to the polygon whose vertices are the points dual to $\alpha(x_i)$. Given that
\[ 
\hat a_i = -a_{i+1}, \hskip .5in \hat b_i = -b_i
\]
we have
\[
\alpha^\ast(a_i) = -b_i, \hskip 2ex \alpha^\ast(b_i) = -a_{i+1}.
\]
 \end{proof}

From here it follows that, to study integrable systems associated to biHamiltonian structures included in $\P$, we only need to derive biHamiltonian integrable systems for the Hamiltonian pair $\P_1$ and $\P_2$.
Based on a difference analogue to the Adler Residue Theorem (\cite{mr80i:58026,mwx2}), we conclude that the first Hamiltonian associated to this pair is 
$\ln b$. This leads to the following bi-Hamiltonian system
\begin{eqnarray}\label{eqc2}
&&\left\{\begin{array}{l} a_t=\frac{1}{b_1}-\frac{1}{b_{-2}} \\ b_t =\la \left(\frac{a_1}{b_1}-\frac{a}{b_{-1}}\right)
\end{array}\right.
\end{eqnarray}
with
\begin{eqnarray}\label{bihc2}
\begin{pmatrix} a \\ b\end{pmatrix}_t= \P_1\delta f =  \P_2 \delta g, \quad 
\mbox{where} \quad f=\ln b \quad \mbox{and} \quad g=-\frac{a_1}{b b_1}.
%\P_2 \begin{pmatrix}  \frac{-1}{b_{-1} b}\\\frac{a_1}{b^2 b_1}+\frac{a}{b_{-1}b^2}\end{pmatrix}
\end{eqnarray}
Notice that the constant $\la$ in (\ref{eqc2}) can be scaled away by a simple scaling transformation $a\mapsto \frac{a}{\la^{1/3}}$
and $b\mapsto \la^{1/3} b$ if $\la \neq 0$. We will keep it instead of taking $\la=1$.

Let us introduce the following transformation (\cite{hiin97})
\begin{eqnarray}\label{hiin}
u=\frac{1}{b b_1 b_2}, \qquad v=-\frac{a_1}{b b_1}.
\end{eqnarray}
Their Fr{\'e}chet derivatives with respect to $a$ and $b$ are given by
\begin{eqnarray*}
&&D_{(u,v)}=\begin{pmatrix}  0
& -u (1+\T+\T^2)\frac{1}{b}\\
v \T \frac{1}{a}& -v (1+\T) \frac{1}{b}\end{pmatrix}
=\begin{pmatrix}  0
& u (1+\T+\T^2)\\
v \T & v (1+\T) \end{pmatrix} 
\begin{pmatrix}  \frac{1}{a}
&0\\
0&  -\frac{1}{b}\end{pmatrix}.
\end{eqnarray*}
Therefore, when we change the variable the Hamiltonian operator $\P_1$ transforms into
\begin{eqnarray}
&&\tilde \P_1=D_{(u,v)} \P_1 D_{(u,v)}^\star\nonumber\\
&&=\begin{pmatrix}  0\!\!
& u (1+\T+\T^2)\\
v \T \!\!& v (1+\T) \end{pmatrix} 
\begin{pmatrix}  0 \!\!
& \frac{u_{-1}}{v_{-1}}\T\!-\!\T^{-2} \frac{u}{v_{1}}\\
\frac{u}{v_1}\T^2\!-\!\T^{-1} \frac{u_{-1}}{v_{-1}} \!\!&\la (\T^{-1} v\!-\!v\T) \end{pmatrix} \nonumber\\
&&\qquad \begin{pmatrix}  0 \!\!& \T^{-1} v\\ (1+\T^{-1}\!+\!\T^{-2}) u
\!\! & (1+\T^{-1}\!)v \end{pmatrix} 
\nonumber\\
&&=\begin{pmatrix}0
& u(1+\T+\T^2) \left( u \T -\T^{-2}u \right)\\
 \left(u \T^{2} -\T^{-1} u \right) (1+\T^{-1}+\T^{-2}) u & \begin{array}{c}
v(1+\T) (u \T-\T^{-2} u)\\+(u\T^2-\T^{-1}u)(1+\T^{-1})v\end{array}
\end{pmatrix}\nonumber\\
&&+\la\! \begin{pmatrix}\!  u (1+\T+\T^2)\!\!\!\!\!\!\!\!\!\!
& 0\\
0 \!\!\!\!\!\!\!\!\!\!& v (1+\T)\! \end{pmatrix} (\T^{-1}\! v\!-\!v \T) \begin{pmatrix}\!  1\!\!\!
& 1\\
1\!\!\!& 1 \!\end{pmatrix} 
\begin{pmatrix}\!  (1+\T^{-1}\!+\T^{-2}\!)u\!\!\!\!\!\!\!\!\!\!
& 0\\
0 \!\!\!\!\!\!\!\!\!\!&  (1+\T^{-1}\!)v\! \end{pmatrix}.
\nonumber
%&&+\la \begin{pmatrix}u (1+\T+\T^2) (\T^{-1} v-v \T) (1+\T^{-1}+\T^{-2}) u
%& u(1+\T+\T^2) (\T^{-1}v -v \T) (1+\T^{-1})v \\
% v (1+\T) (\T^{-1}v -v \T)  (1+\T^{-1}+\T^{-2}) u & 
% v (1+\T)(\T^{-1} v-v \T) (1+\T^{-1})v
%\end{pmatrix} .\nonumber
\end{eqnarray}
Similarly, the Hamiltonian operator $\P_2$ changes into
\begin{eqnarray*}
&&\tilde \P_2=\begin{pmatrix} \la u (\T-\T^{-1}) (\T+1+\T^{-1}) u & \la u (\T-1)(\T+1+\T^{-1}) v\\
\la v (1-\T^{-1}) (\T+1+\T^{-1}) u & \la v(\T-\T^{-1}) v +\T^{-1} u-u\T
\end{pmatrix}.\nonumber
\end{eqnarray*}
Notice that {\it under this transformation both operators are local}.  This Hamiltonian pair can be found in \cite{Belov}.
The biHamiltonian system (\ref{eqc2}) becomes
\begin{eqnarray}\label{equv}
\begin{pmatrix} u \\ v\end{pmatrix}_t=\begin{pmatrix} \la u (v_2-v_{-1}) \\ u_{-1}-u +\la v (v_1-v_{-1})\end{pmatrix}
=\tilde \P_1\delta f = \tilde \P_2 \delta g,
\end{eqnarray}
where $f=-\frac{1}{3} \ln u$ and $g=v.$ When $\la=1$ (as mentioned before, we can scale $\la$ to $1$ when $\la\neq 0$), 
this system has appeared in \cite{hiin97}, where the authors
studied the integrable systems related the lattice $W$-algebras. It is the Boussinesq lattice related to the
lattice $W_3$-algebra.

Notice that $\P_1\big{|}_{\la=1}+\P_2\big{|}_{\la=1}=\P$, where the operator $\P$ is defined by (\ref{P}). Notice also that $\P_2(\ln b) = 0$. Therefore, 
for the evolution induced on the invariants by invariant evolutions of planar polygons, we have the following 
result:
\begin{theorem} Let  $f(a,b) = \ln b$ and let $\pi: \R^3\to \RP^2$ be the projection associated to the lift $x\to \begin{pmatrix} x\\ 1\end{pmatrix}$. Then,
the evolution induced on the invariants $a$ and $b$ by the invariant evolution of planar polygons 
\[
(x_s)_t = \pi(\frac {1} {b_s} V_{s+2}+\frac {a_s}{b_s} V_{s+1}+ V_s)
\]
is the biHamiltonian equation
\begin{eqnarray}\label{evolab}
&&\begin{pmatrix} a \\ b\end{pmatrix}_t=
\begin{pmatrix} \frac{1}{b_1}-\frac{1}{b_{-2}} \\  \frac{a_1}{b_1}-\frac{a}{b_{-1}}\end{pmatrix}
= \P\delta \ln b =  \P_2\big{|}_{\la=1} \delta \left(-\frac{a_1}{b b_1}\right) .
\end{eqnarray}
Under the Miura transformation (\ref{hiin}), it is transformed into the Boussinesq lattice related to the
lattice $W_3$-algebra
\begin{eqnarray*}
\begin{pmatrix} u \\ v\end{pmatrix}_t=\begin{pmatrix}  u (v_2-v_{-1}) \\ u_{-1}-u +v (v_1-v_{-1})\end{pmatrix} .
\end{eqnarray*}
\end{theorem}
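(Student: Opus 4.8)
The plan is to assemble the statement from results already in place, in three stages: use Theorem \ref{th61} to convert the given polygon flow into an evolution of the invariants $(a,b)$, recognize that evolution as the biHamiltonian system of Theorem \ref{th63}, and finally transport it through the Miura transformation (\ref{hiin}) to the Boussinesq lattice.

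First I would fix the invariant vector $\v_s$ using the compatibility condition (\ref{compa}). For $f=\ln b$ one has $\frac{\partial f}{\partial \kb_s}=(\frac{1}{b_s},0)^T$, so (\ref{compa}) reads $\T\v_s=(\frac{1}{b_s},0)^T$ and determines $\v$; comparing the resulting lifted flow (\ref{lift})--(\ref{planarev}) in the frame $\rho$ with the statement (the $V_s$-term is radial and dies under $d\pi$, so only the $\v$-coefficients matter, up to the index bookkeeping in $\T$) shows it is the displayed evolution of $(x_s)$. Theorem \ref{th61} then yields at once that the induced evolution of $(a,b)$ is the reduced Hamiltonian flow $\P\,\delta\ln b$, with $\P$ the tensor (\ref{P}).

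Next I would make this explicit. Since $f=\ln b$ depends on $b$ only, $\delta\ln b=(0,1/b)^T$, and I would apply $\P$ to it. The point to check is that the nonlocal blocks carrying $\Ro=(\T+1+\T^{-1})^{-1}$ drop out: each applies $\Ro$ to $b\cdot\frac{1}{b}=1$, and since $\Ro(1)=\frac{1}{3}$ is constant while the surrounding factors $(1-\T^{-1})$ and $(\T-\T^{-1})$ annihilate constants, every nonlocal term vanishes, leaving the local expression $(\frac{1}{b_1}-\frac{1}{b_{-2}},\,\frac{a_1}{b_1}-\frac{a}{b_{-1}})^T$. Equivalently, writing $\P=\P_1\big|_{\la=1}+\P_2\big|_{\la=1}$ and noting $\P_2\,\delta\ln b=0$, we get $\P\,\delta\ln b=\P_1\big|_{\la=1}\delta\ln b$; the biHamiltonian identity (\ref{bihc2}) of Theorem \ref{th63} then rewrites the same flow as $\P_2\big|_{\la=1}\,\delta(-\frac{a_1}{bb_1})$, which supplies the second Hamiltonian form with no extra work.

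The final and most substantial stage is the Miura map $u=\frac{1}{bb_1b_2}$, $v=-\frac{a_1}{bb_1}$ of (\ref{hiin}). For $u_t$ I would use $\ln u=-(\ln b+\ln b_1+\ln b_2)$ so that $\frac{u_t}{u}=-\sum_{k=0}^{2}\frac{(b_k)_t}{b_k}$; substituting $b_t=\frac{a_1}{b_1}-\frac{a}{b_{-1}}$ and its shifts makes this sum telescope to $\frac{a}{bb_{-1}}-\frac{a_3}{b_2b_3}=v_2-v_{-1}$, giving $u_t=u(v_2-v_{-1})$. For $v_t$ I would differentiate $v=-a_1b^{-1}b_1^{-1}$ directly, insert $a_t=\frac{1}{b_1}-\frac{1}{b_{-2}}$ together with the shifts of $b_t$, and regroup the resulting monomials as $u_{-1}-u+v(v_1-v_{-1})$. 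I expect this to be the main obstacle: not conceptually hard, but an honest difference-algebra computation in which the indices under $\T$ must be tracked carefully, and where one must confirm that the transformed operators $\tilde\P_1,\tilde\P_2$ appearing just before the statement are genuinely local. Carrying these substitutions through delivers the stated Boussinesq lattice associated with the lattice $W_3$-algebra, completing the proof.
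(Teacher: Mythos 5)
Your proposal is correct, and its first two stages are exactly how the paper itself obtains the result: the paper attaches no standalone proof to this theorem, but assembles it from Theorem \ref{th61}, from the decomposition $\P=\P_1\big|_{\la=1}+\P_2\big|_{\la=1}$ together with $\P_2\,\delta\ln b=0$ (your explanation of why the nonlocal blocks vanish --- $\Ro(1)=\tfrac13$ is constant and is annihilated by $(1-\T^{-1})$ and $(\T-\T^{-1})$ --- is precisely the right reason), and from the biHamiltonian identity (\ref{bihc2}) for the second Hamiltonian form. The one place you genuinely depart from the paper is the Miura step. The paper never substitutes the flow into (\ref{hiin}); it conjugates the operators by the Fr\'echet derivative, $\tilde\P_i=D_{(u,v)}\P_i D_{(u,v)}^\star$, transforms the Hamiltonians to $-\tfrac13\ln u$ and $v$, and reads off the transformed system (\ref{equv}), which the theorem then quotes at $\la=1$. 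Your direct chain-rule computation is more elementary and it does close: $\ln u=-(\ln b+\ln b_1+\ln b_2)$ telescopes to $u_t=u(v_2-v_{-1})$, and in $v_t$ the two $a_1^2/(b^2b_1^2)$ terms cancel, leaving $u_{-1}-u+v(v_1-v_{-1})$. This also makes your worry about locality of $\tilde\P_1,\tilde\P_2$ moot: locality matters for the paper's later integrability arguments with the transformed pair, not for this theorem. What the paper's heavier route buys is precisely that transformed Hamiltonian pair, which it reuses afterwards for the master-symmetry construction.

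One caution on your stage 1: the ``index bookkeeping'' you wave at is a real shift, not a harmless relabeling. Applying (\ref{compa}) literally to $f=\ln b$ gives $\v_s=(1/b_{s-1},0)^T$, so the realized flow is $(V_s)_t=\frac1{b_{s-1}}(V_{s+2}+a_sV_{s+1})+v^0_sV_s$, whereas the theorem displays the coefficient $1/b_s$. This discrepancy originates in the paper itself --- its final $\RP^n$ realization theorem likewise uses the unshifted gradient $\v_s=\partial f/\partial\kb_s$, in tension with (\ref{compa}) --- so it does not undermine your overall argument, but a complete write-up should either carry the shift consistently (in which case the displayed coefficient becomes $1/b_{s-1}$) or state explicitly that the unshifted convention is adopted; as written, your stage 1 silently blends the two.
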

Based on this theorem, we can anticipate that integrable discretizations of $W_n$-algebras should be induced by invariant
evolutions of projective polygons in $\RP^n$. In our last section we produce the $\RP^n$ generalization and we prove it is completely integrable. But first we point at a further reduction  that produces yet another integrable system.

Indeed, the next symmetry flow of equation (\ref{evolab}) is
\begin{eqnarray*}
\begin{pmatrix} a\\b
\end{pmatrix}_{\tau}=\P_1\big{|}_{\la=1} \delta \left(-\frac{a_1}{b b_1}\right) =
\begin{pmatrix}
 \frac{a_2}{b_1^2 b_2}+\frac{a_1}{b b_1^2}-\frac{a_{-1}}{b_{-2}^2 b_{-1}}-\frac{a_{-2}}{b_{-2}^2 b_{-3}}\\
\frac{1}{b_{-1}b_{-2}}-\frac{1}{b_1 b_2}+\frac{a_1}{b_1^2} (\frac{a_2}{b^2}+\frac{a_1}{b})
-\frac{a}{b_{-1}^2} (\frac{a}{b}+\frac{a_{-1}}{b_{-2}})
\end{pmatrix}.
\end{eqnarray*}
Here we use time variable $\tau$ instead of $t$ to avoid the confusion.
It admits a reduction 
$$b_{\tau}=\frac{1}{b_{-1}b_{-2}}-\frac{1}{b_1 b_2},
$$
when $a=0$. Under the Miura transformation (\ref{hiin}), it becomes
$$u_{\tau}=u (u_1+u_2-u_{-1}-u_{-2}).$$
This is the Narita-Itoh-Bogoyavlensky lattice  \cite{Bog88}
$$u_{t}=u (\sum_{i=1}^p u_{i}-\sum_{i=1}^p u_{-i})
$$
with $p=2$. 
In general, the Narita-Itoh-Bogoyavlensky lattice 
is equivalent to the lattice $W_{p+1}$ algebra \cite{hiin97}. 
Recently, the biHamiltonian structures for the Narita-Itoh-Bogoyavlensky lattice
were constructed using the Lax representation \cite{w2012}. Notice that there is no reduction for the corresponding 
invariant evolution of planar polygons since
\begin{eqnarray*}
\P \delta \left(-\frac{a_1}{b b_1}\right) = (\P_1\big{|}_{\la=1}+\P_2\big{|}_{\la=1}) \delta \left(-\frac{a_1}{b b_1}\right) =\begin{pmatrix} a\\b
\end{pmatrix}_{\tau}+\begin{pmatrix} a\\b
\end{pmatrix}_{t}.
\end{eqnarray*}

Notice that the Poisson structure used in \cite{OST} is not compatible with the ones used here, a fact that was also pointed out in \cite{Ian}. Still, here we have a variety of brackets and one ideally would check the behavior of the pentagram map with respect to any of them.

\subsection{Hamiltonian pencils and completely integrable systems in $\RP^n$}

In our last section we wonder about the origins of the pencil in Theorem \ref{th63} and we try to generalize it, together with the planar integrable system. In particular, let us define the right bracket on $G^{(N)}$ as
\begin{equation}\label{rightbr}
\{\F, \G\}'(L) = \hat r\left(\nabla'\F(L)\wedge\nabla'\G(L)\right).
\end{equation}
Even if we were to use an $R$-matrix $r$ for its definition instead of $\hat r$, this (or the parallel left) bracket is in general not Poisson, and one can easily check  that, regardless of any $\h_c$ perturbation as in \cite{S2}, Jacobi's identity is not satisfied if $G = \SL(2)$, for example.  Still, we will next show that when it is reduced to $\K$, by evaluating it along the proper extensions, the resulting bracket is Poisson. In the particular case of $\RP^2$, we in fact obtain $\frac12 \P_1|_{\lambda = 1}$, which forms a Hamiltonian pencil with our original reduction. We work out this example first.
\begin{example} \end{example} Let $f, g:\K\to\R$ be two functions and let $\F, \G$ be extensions satisfying (\ref{streq2}). Then
\[
\{f,g\}_0(a,b) = \{\F, \G\}'(K) = \frac12\langle\begin{pmatrix} \T^{-1}f_b\\\T^{-1}f_a\end{pmatrix}, \begin{pmatrix} B_g+ag_b\\ g_b\end{pmatrix} \rangle - \frac12 \begin{pmatrix} \T^{-1}g_b\\\T^{-1}g_a\end{pmatrix}, \begin{pmatrix} B_f+af_b\\ f_b\end{pmatrix}
\]
\[
= \frac12\begin{pmatrix}f_a&f_b\end{pmatrix}\begin{pmatrix} 0&\T-\T^{-2}\\ \T^2-\T^{-1}& \T a-a\T^{-1}\end{pmatrix}\begin{pmatrix} g_a\\ g_b\end{pmatrix} = \frac12 \nabla f^T \P_1|_{\lambda = 1} \nabla g.
\]

Indeed, this evaluation produces a Poisson bracket for any dimension.
\begin{proposition}\label{red0} The reduction of (\ref{rightbr}) to $\K$ obtained through the formula
\[
\{f,g\}_0(\kb) = \{\F, \G\}'(K)
\]
is given explicitly by a $n\times n$ matrix whose $(i,j)$ entry is equal to: 
\begin{enumerate}
\item zero if $i+j > n+1$;
\item $\T^j - \T^{-i}$ if $i+j = n+1 $
\item $\T^j k^{i+j} - k^{i+j} \T^{-i}$ if $i+j < n+1$.
\end{enumerate} 
\end{proposition}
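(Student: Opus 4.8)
The plan is to compute the reduced bracket $\{f,g\}_0(\kb)=\{\F,\G\}'(K)$ head‑on from the definition (\ref{rightbr}), mirroring the preceding $\RP^2$ example but carrying a general $n$. Unwinding $\hat r$ and matching the normalization used in that example, the reduction is
\[
\{f,g\}_0=\tfrac12\sum_s\Big(\langle(\nabla'_s\F)_{-1},(\nabla'_s\G)_1\rangle-\langle(\nabla'_s\G)_{-1},(\nabla'_s\F)_1\rangle\Big),
\]
so everything reduces to identifying the $\g_{-1}$ and $\g_1$ blocks of the right gradients along $\K$. The $\g_{-1}$ block is immediate: by the constraint (\ref{streq2}) the $\g_{-1}$ and $\g_0$ parts of $\nabla'_s\F$ coincide with those of $\T^{-1}\nabla_s\F$, whence $(\nabla'_s\F)_{-1}=\T^{-1}\frac{\partial f}{\partial\kb_s}$, a shift of the modified gradient computed in (\ref{grad2}).

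For the $\g_1$ block I will use $\nabla'_s\F=K_s^{-1}\nabla_s\F K_s$ from (\ref{lrgrad}). Computing $K_s^{-1}$ from (\ref{Ks}), the only structural fact needed is that its last row is $((-1)^n,0,\dots,0)$; hence the last row of $K_s^{-1}\nabla_s\F$ is $(-1)^n$ times the first row of $\nabla_s\F$, and right multiplication by $K_s$ expresses $(\nabla'_s\F)_1$ through the top row $(Q_s)_{1,\bullet}$ of the $\g_0$‑block together with $\big(\tfrac{\partial f}{\partial\kb_s}\big)_1$. The decisive point is that these particular $\g_0$ entries are \emph{local} in the gradient: applying Theorem \ref{Ndet} with $\T\nabla'_s\F$ in place of $N_s$ (as in the theorem following (\ref{streq2})), the recursion there gives $(Q_s)_{1,m}=a^s_{1,m}$ as a finite combination of shifts of $\frac{\partial f}{\partial k^r}$ and products with the invariants — the nonlocal operator $(\T+1+\dots+\T^{-(n-1)})^{-1}$ enters only when closing the diagonal entries $a_{i,i}$, never in the first row. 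Concretely I will iterate the one–step relation $a^s_{p,m}=a^{s+1}_{p+1,m+1}+k^{m+1}_{s+1}(\partial_{p+1}f)_{s+1}$ down to the boundary value $a^s_{p,n}=(\partial_{p+1}f)_{s+1}$, obtaining an explicit local formula for $a^s_{1,m}$.

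Substituting into the bracket, shifting indices inside $\sum_s$, and antisymmetrizing in $f\leftrightarrow g$ collects each pairing of gradient components into a single difference operator. The boundary term yields the pure shift $\T^{\,j}-\T^{-i}$ precisely when the surviving indices satisfy $i+j=n+1$; the product terms yield $\T^{\,j}k^{i+j}-k^{i+j}\T^{-i}$ when $i+j<n+1$; and since $a_{1,i+1}$ involves only $\partial_r$ with $r\le n-i+1$, every pairing with $i+j>n+1$ is absent. This produces the three cases of the statement.

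I expect the real obstacle to be sign/normalization bookkeeping rather than the structural computation. The entries of $K_s^{-1}$, the factor $(-1)^n$ in front of $(Q_s)_{1,\bullet}$, and the sign $\epsilon_1=(-1)^n$ (with $\epsilon_i=1$ for $i\ge2$) carried by the first component of the modified gradient all interact; read naively in the plain gradient $(\partial_{k^i}f)$ they leave a spurious $(-1)^n$ on the diagonal entries. I will resolve this by recording that the matrix of the statement is the tensor acting on the \emph{modified} gradient $\frac{\partial f}{\partial\kb}$ of (\ref{grad2}) and Theorem \ref{th61}: writing the reduced bracket as $\tfrac{(-1)^n}{2}\langle\frac{\partial f}{\partial\kb},\cH_0\,\frac{\partial g}{\partial\kb}\rangle$, the factors $\epsilon_i\epsilon_j$ combine with the $(-1)^n$ coming from $K_s^{-1}$ and cancel uniformly across all entries, leaving exactly the stated operators. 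As a consistency check I will specialize to $n=2$, where $\frac{\partial f}{\partial\kb}$ agrees with the plain gradient up to ordering, and verify that the result coincides with $\tfrac12\P_1|_{\la=1}$ of the previous example after the index reversal $i\mapsto n+1-i$, thereby pinning down the overall constant.
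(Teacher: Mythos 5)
Your proposal is correct and follows essentially the same route as the paper's own proof: it identifies $(\nabla'_s\F)_{-1}=\T^{-1}\frac{\partial f}{\partial \kb_s}$ from (\ref{streq2}), extracts the $\g_1$ block of $\nabla'_s\F=K_s^{-1}\nabla_s\F K_s$ (the paper does this via the factorization $K=\left(\begin{smallmatrix}I_n&0\\ \kb^T&1\end{smallmatrix}\right)\Lambda$ and conjugation by $\Lambda$, which is exactly your last-row observation), solves the same local recursion for the first-row entries $q_{1j}$ with the same boundary value, and substitutes into the bracket. If anything, your treatment of the sign and normalization bookkeeping (reading the tensor against the modified gradient and checking $n=2$ against $\tfrac12\P_1|_{\la=1}$) is more explicit than the paper, which passes from its formula (\ref{formred0}) to the stated operator without comment on the overall $\tfrac{(-1)^n}{2}$ factor.
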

\begin{proof} This reduction can, in fact, be found explicitly. Assume
\begin{equation}\label{FQ}
\nabla_s \F(K) = \begin{pmatrix} Q_s &\frac{\partial f}{\partial \kb_s}\\ \q_s^T & -\tr(Q_s)\end{pmatrix}
\end{equation}
and assume $Q_s = (q^s_{ij})$ and $\q_s = (q_i^s)$. Let us call $\f = \frac{\partial f}{\partial \kb_s}:= (f^{i})$ and $\varpi = -\tr(Q_s)$. As before, we will drop the subindex $s$ to avoid cluttering. With this notation, and noticing that 
\[
K = \begin{pmatrix} I_n& 0\\ \kb^T& 1\end{pmatrix}\Lambda
\]
where
\[
\Lambda = \begin{pmatrix} 0&0&\dots &0&(-1)^n\\ 1&0&\dots&0&0\\ 0&1&\dots&0&0\\ \vdots&\ddots&\ddots&\ddots&\vdots\\ 0&\dots&0&1&0\end{pmatrix}
\]
We can write
\begin{equation}\label{FprimeQ}
\nabla'\F = \Lambda^{-1}\begin{pmatrix} Q+\f \kb^T & \f\\ \q^T-\kb^TQ + (\varpi-\kb^T \f)\kb^T&\varpi-\kb^T\f\end{pmatrix} \Lambda.
\end{equation}
Next we notice that conjugation by $\Lambda$ shifts rows once up and columns once to the left (with the first ones moving to last position, after multiplication by $(-1)^n$). That means the reduction of (\ref{rightbr}) is explicitly given by
\begin{equation}\label{formred0}
\{f, g\}_0(\kb) = \frac{(-1)^n}2\langle\T^{-1}\f, \begin{pmatrix}q^g_{12}\\\vdots\\ q^g_{1n} \\ g^1\end{pmatrix} 
+ g^1\begin{pmatrix}k^{2}\\\vdots\\ k^{n}\\0\end{pmatrix}\rangle - \frac{(-1)^n}2\langle\T^{-1}{\bf g}, 
\begin{pmatrix}q^f_{12}\\\vdots\\ q^f_{1n} \\ f^1\end{pmatrix} + f^1\begin{pmatrix}k^{2}\\\vdots\\ k^{n}\\0\end{pmatrix}
\rangle
\end{equation}
We will know the bracket explicitly once we find $q_{1j}$, $j = 2, \dots n$. These can be found without too much trouble.  Equation (\ref{streq2}) implies that the first $n$ rows of (\ref{streq2}) vanish. In our notation this can be written as
\begin{eqnarray*}
\T^{-1} q^f_{i j} &=& q^f_{i+1\ j+1} + f^{i+1} k^{j+1}, \hskip .1in j=1,\dots,n-1,\\
\T^{-1}q^f_{in} &=& f^{i+1}, \hskip .1in i=1,\dots,n-1,\\
\T^{-1}f_i &=& (-1)^n (q_{i+1\ 1}+f^{i+1} k^{i}).
\end{eqnarray*}
From here we have $q_{1n} = \T f^2$ and
\begin{equation}\label{q1j}
q_{1 j} = \T^{n-j+1}f^{n-j+2}+\sum_{p=2}^{n-j+1} \T^{p-1}(f^p k^{j-1+p})
\end{equation}
for any $j = 2,3,\dots, n-1$. Substituting  these values in (\ref{formred0}) we get the expression in the statement of the proposition. 
\end{proof}
\begin{theorem} The structure defined in proposition \ref{red0} is a Poisson structure.
\end{theorem}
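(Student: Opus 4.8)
The plan is to check the two defining properties of a Poisson bracket for the explicit operator $\hD=(\hD_{ij})$ of Proposition \ref{red0} — antisymmetry and the Jacobi identity — exactly in the spirit of the planar verification in Theorem \ref{th62}, but organized so that the general-$n$ combinatorics stays under control. Antisymmetry is the easy half. Reading off the entries, $\hD_{ij}=\T^j k^{i+j}-k^{i+j}\T^{-i}$ when $i+j<n+1$, $\hD_{ij}=\T^j-\T^{-i}$ when $i+j=n+1$, and $\hD_{ij}=0$ otherwise. Since the formal adjoint obeys $(\T^j)^\star=\T^{-j}$, multiplication by an invariant is self-adjoint, and $(AB)^\star=B^\star A^\star$, one gets $\hD_{ij}^\star=k^{i+j}\T^{-j}-\T^{i}k^{i+j}=-\hD_{ji}$ in the linear regime and $\hD_{ij}^\star=\T^{-j}-\T^{i}=-\hD_{ji}$ on the anti-diagonal, so $\hD^\star=-\hD$ and the bracket $\{f,g\}_0(\kb)=\langle \partial f/\partial\kb,\ \hD\,\partial g/\partial\kb\rangle$ is antisymmetric. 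I note in passing that one cannot simply borrow the Jacobi identity from the already-Poisson reduced Sklyanin bracket of Theorem \ref{5.5}, since $\{\,,\,\}_0$ is only one of its two halves and a difference of Poisson tensors need not be Poisson; a direct argument is needed.

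The observation that makes the general case tractable is that $\hD$ is affine in the invariants: write $\hD=\hD^{(1)}+\hD^{(0)}$, where $\hD^{(1)}$ gathers the entries with $i+j<n+1$, each linear in a single invariant $k^{i+j}$, and $\hD^{(0)}$ gathers the entries with $i+j=n+1$, which are independent of $\kb$. Thus $\{\,,\,\}_0$ is a Lie--Poisson bracket twisted by a constant $2$-cocycle: evaluation on coordinate functions yields brackets of the schematic form $\{k^a,k^b\}_0\sim k^{a+b}$ for $a+b\le n$, together with $\kb$-independent cocycle terms coming from $\hD^{(0)}$ on the anti-diagonal $a+b=n+1$. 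I would then invoke the Hamiltonian criterion of Proposition 7.7 in \cite{mr94g:58260}, valid for difference operators by \cite{mwx2}: $\hD$ is Hamiltonian if and only if the functional tri-vector $\Psi=\tfrac12\int \bt\wedge \mathrm{Pr}_{\hD(\bt)}\hD\wedge\bt$ vanishes.

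First I would exploit the affine structure to split $\Psi$ by degree in $\kb$. Because $\hD^{(0)}$ is constant its prolongation drops out, so $\mathrm{Pr}_{\hD(\bt)}\hD=\mathrm{Pr}_{\hD(\bt)}\hD^{(1)}$, and substituting $\hD(\bt)=\hD^{(1)}(\bt)+\hD^{(0)}(\bt)$ splits the prolonged coefficient into a part linear in $\kb$ and a part independent of $\kb$. Hence $\Psi=\Psi^{(1)}+\Psi^{(0)}$, with $\Psi^{(1)}$ linear in $\kb$ and $\Psi^{(0)}$ constant, and the two must vanish separately. The vanishing of $\Psi^{(1)}$ is precisely the Jacobi identity for the structure constants of the linear part: because the superscript addition $k^a\ast k^b\rightsquigarrow k^{a+b}$ is associative and merely truncates once $a+b$ exceeds $n$, this says that the $k^a_s$ span a graded nilpotent lattice Lie algebra, which I would verify on triples $k^a_s,k^b_t,k^c_u$ by tracking the admissible shift alignments and their partners. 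The vanishing of $\Psi^{(0)}$ is the statement that the constant piece $\hD^{(0)}$ is a $2$-cocycle for that Lie bracket.

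The hard part will be the mixed, $\kb$-independent terms $\Psi^{(0)}$, namely the cocycle condition living on the anti-diagonal $i+j=n+1$. This is exactly where the linear entries of $\hD^{(1)}$ feed, under one further shift, into the constant entries $\T^j-\T^{-i}$ of $\hD^{(0)}$ just before the tensor is truncated to zero for $i+j>n+1$; the cancellations there couple the boundary of the nilpotent grading with the cocycle and are the source of the phenomenon the introduction calls ``puzzling.'' I expect the cleanest bookkeeping to come from carrying the explicit solutions (\ref{q1j}) for the entries $q_{1j}$ and the gauge relation (\ref{streq2}) through the reduction formula (\ref{formred0}), so that every term of $\Psi$ is written through $\partial f/\partial\kb$ and the shifts alone, and then matching the degree-one and degree-zero parts against the two required identities. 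As a consistency check, specializing to $n=2$ must reproduce $\tfrac12\,\P_1|_{\lambda=1}$ and hence the already-verified planar computation behind Theorem \ref{th62}.
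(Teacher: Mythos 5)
Your setup is correct as far as it goes: the antisymmetry verification is right, the criterion you invoke (vanishing of the functional trivector, Proposition 7.7 of \cite{mr94g:58260} extended to difference operators via \cite{mwx2}) is exactly the one the paper uses, and the observation that the operator is affine in $\kb$, so that $\Psi=\Psi^{(1)}+\Psi^{(0)}$ must vanish in each homogeneous degree separately, is legitimate. It is in fact a conceptual repackaging of what happens inside the paper's own computation: there $\mathrm{Pr}_{\cH(\theta)}\Theta$ splits into terms independent of $\kb$ and terms linear in $k^{i+j+l}$, and the two groups are cancelled separately by changes of dummy indices --- precisely your ``Lie--Poisson bracket twisted by a constant $2$-cocycle'' picture. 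Your remark that one cannot inherit Jacobi from the reduced Sklyanin bracket of the previous section is also correct.

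The gap is that the proposal stops exactly where the proof has to begin: neither $\Psi^{(1)}=0$ nor $\Psi^{(0)}=0$ is actually established; both are deferred in the future tense (``which I would verify\dots'', ``I expect the cleanest bookkeeping\dots''). Moreover, the one argument you do sketch for $\Psi^{(1)}=0$ --- that Jacobi holds because superscript addition $k^a\ast k^b\rightsquigarrow k^{a+b}$ is associative and truncates --- is not sound as stated. The structure constants of the linear part are not scalars; they carry the shifts in $\T^j k^{i+j}-k^{i+j}\T^{-i}$, and the Jacobi identity holds only because of the precise alignment of those shifts: after substitution the linear terms of the trivector come in pairs $k^{i+j+l}\bigl(\theta^{l}\wdg\theta^{i}_{-j-l}\wdg\theta^{j}_{-l}-\theta^{l}_{-i-j}\wdg\theta^{i}_{-j}\wdg\theta^{j}\bigr)$, whose triple sum over $(i,j,l)$ vanishes only under a specific relabeling of dummy indices; an operator with the same associative superscript pattern but different shift exponents would fail the identity. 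The same applies to the cocycle half: one must show that the $\kb$-independent terms $\bigl(\theta^{n+1-i-j}_{n+1-i-j}-\theta^{n+1-i-j}_{-i-j}\bigr)\wdg\theta^{i}_{-j}\wdg\theta^{j}$ cancel over the double sum. These two cancellations are the entire content of the theorem, and the paper proves it by writing $\cH(\theta)^{i}$ explicitly, forming $\Theta=\sum_i\int\theta^{i}\wdg\theta^{n+1-i}_{n+1-i}+\sum_{i,j}\int k^{i+j}\,\theta^{i}_{-j}\wdg\theta^{j}$, and carrying out exactly this index bookkeeping. Finally, routing the verification through (\ref{q1j}) and (\ref{formred0}) is an unnecessary detour: Proposition \ref{red0} already gives the operator explicitly, and the check proceeds directly from its entries.
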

\begin{proof}
We will prove that $\cH$ is a Hamiltonian operator using a theorem analogous to Theorem 7.8 in \cite{mr94g:58260} for difference operators
and following a process similar to the proof of Theorem \ref{th62}. Let ${\bf \theta}$ be a column vector with entry $\theta^{i}$. For the operator
$\cH$, we have $ \cH({\bf \theta})$ to be a column vector with entry
\begin{eqnarray*}
 {\cH({\bf \theta})}^{i}=\theta^{n+1-i}_{n+1-i}-\theta^{n+1-i}_{-i}
+\sum_{j=1}^{n-i} \left( (k^{i+j} \theta^{j})_j-k^{i+j} \theta_{-i}^{j} \right) .
\end{eqnarray*}
We now define the bi-vectors
\begin{eqnarray*}
&&\quad \Theta=\frac{1}{2} \int {\bf \theta} \wdg \cH ({\bf \theta})\\
&&=\frac{1}{2} \sum_{i=1}^n\!\! \int\!\!\! \left(\theta^{i}\wdg (\theta^{n+1-i}_{n+1-i}\!-\!\theta^{n+1-i}_{-i})
+\theta^{i} \wdg \sum_{j=1}^{n-i} \left( (k^{i+j} \theta^{j})_j \!-\!k^{i+j} \theta_{-i}^{j} \right) \right)\\
&&=\sum_{i=1}^n\!\! \int\!\!\! \theta^{i}\wdg \theta^{n+1-i}_{n+1-i}
+\sum_{i=1}^{n-1} \sum_{j=1}^{n-i} \int k^{i+j}  \theta^{i}_{-j} \wdg  \theta^{j} .
\end{eqnarray*}
We know that an anti-symmetric operator $\cH$ is Hamiltonian if and only if the tri-vector $\mbox{Pr}_{\cH({\bf \theta})}\Theta$ vanishes (see \cite{mr94g:58260}). We now show this is
the case for the given $\cH$. Indeed, we have
\begin{eqnarray*}
&&\quad \mbox{Pr}_{\cH({\bf \theta})}\Theta = \sum_{i=1}^{n-1} \sum_{j=1}^{n-i} \int {\cH({\bf \theta})}^{i+j}  
\theta^{i}_{-j} \wdg  \theta^{j}\\
&&= \sum_{i=1}^{n-1} \sum_{j=1}^{n-i} \int \left(\theta^{n+1-i-j}_{n+1-i-j}-\theta^{n+1-i-j}_{-i-j} \right) \wdg
\theta^{i}_{-j} \wdg  \theta^{j}\\
&&+ \sum_{i=1}^{n-1} \sum_{j=1}^{n-i} \sum_{l=1}^{n-i-j} \int \left( (k^{i+j+l} \theta^{l})_l
-k^{i+j+l} \theta_{-i-j}^{l} \right) \wdg \theta^{i}_{-j} \wdg  \theta^{j}\\
&&= \sum_{i=1}^{n-1} \sum_{j=1}^{n-i} \int \left(\theta^{n+1-i-j}_{n+1-i-j} \wdg \theta^{i}_{-j} \wdg  \theta^{j}
- \theta^{n+1-i-j}_{-i}  \wdg \theta^{i} \wdg  \theta^{j}_j \right)\\
&&+ \sum_{i=1}^{n-1} \sum_{j=1}^{n-i} \sum_{l=1}^{n-i-j}  k^{i+j+l} \int \left(\theta^{l}
 \wdg \theta^{i}_{-j-l} \wdg  \theta^{j}_{-l}-  \theta_{-i-j}^{l} 
 \wdg \theta^{i}_{-j} \wdg  \theta^{j}\right)\\
&&=0 
\end{eqnarray*}
by changing dummy variables. Thus we proved the statement.
\end{proof}
Using the general reduced operator we would like to write down an $n$-component integrable system. Notice that we have not proved in general that the two $n$-dimensional reduced brackets form a pencil. Nevertheless, we will be able to find an integrable system induced on $\kb$ by an invariant evolution of polygons in $\RP^n$.
First we notice the following fact.
\begin{proposition}
The two functionals $\ln k^{1}$ and 
$\frac{k^{2}_1}{k^{1} k^{1}_1}$ are in involution with respect to the Hamiltonian operator defined in proposition \ref{red0}; that is, $$\{ \ln k^{1},\   \frac{k^{2}_1}{k^{1} k^{1}_1}\}_0
=\int \left(\delta \ln k^{1}\right)^{T} \cH \delta \frac{k^{2}_1}{k^{1} k^{1}_1}=0.$$
\end{proposition}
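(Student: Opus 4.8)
The plan is to compute the two Euler (variational) derivatives explicitly, substitute them into the operator $\cH$ of proposition \ref{red0}, and then exploit the shift-invariance of the formal sum $\int$ to see that every surviving term cancels against a shifted copy of itself.

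First I would record the two gradients. Since $f=\ln k^1$ depends on a single invariant at a single site, $\delta f=\left(\frac{1}{k^1},0,\dots,0\right)^T$. For $g=\frac{k^2_1}{k^1 k^1_1}$ the Euler operator produces a vector supported only on its first two slots, $\delta g=(w^1,w^2,0,\dots,0)^T$, with $w^1=-\frac{1}{(k^1)^2}\!\left(\frac{k^2_1}{k^1_1}+\frac{k^2}{k^1_{-1}}\right)$ and $w^2=\frac{1}{k^1_{-1}k^1}$; the slots $i\ge 3$ vanish because $g$ contains neither $k^i$ for $i\ge 3$ nor any of their shifts. This is a routine differentiation.

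Next I would apply $\cH$, using the explicit action $\cH({\bf \theta})^i$ recorded in the Hamiltonicity proof just above. Because $\delta f$ has a single nonzero entry, only the first row matters and $\{f,g\}_0=\int\frac{1}{k^1}\,\cH(w)^1$. As $w$ is supported on slots $1$ and $2$, $\cH(w)^1$ reduces to the $j=1$ contribution $(k^2 w^1)_1-k^2 w^1_{-1}$ together with the $j=2$ contribution, which is $(k^3 w^2)_2-k^3 w^2_{-1}$ when $n\ge 3$ and the diagonal term $w^2_2-w^2_{-1}$ (the $i+j=n+1$ case, i.e.\ $k^3$ formally replaced by $1$) when $n=2$; the two regimes are then handled identically.

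The decisive and only subtle point is that $\{f,g\}_0$ does \emph{not} vanish pointwise --- the raw integrand genuinely depends on $k^2$ and $k^3$ --- so cancellation can only emerge after summation, through the identity $\int A\,B_m=\int A_{-m}\,B$. In the $j=2$ block I would push all shifts onto the coefficients and find that the two terms collapse to $\int\frac{k^3}{k^1_{-2}k^1_{-1}k^1}$ and its one-step shift, which coincide and cancel (identically for the $n=2$ diagonal term, with $k^3$ replaced by $1$). In the $j=1$ block the combination $w^1\!\left(\frac{k^2}{k^1_{-1}}-\frac{k^2_1}{k^1_1}\right)$ telescopes: the factorization $(X+Y)(Y-X)=Y^2-X^2$ with $X=\frac{k^2_1}{k^1_1}$ and $Y=\frac{k^2}{k^1_{-1}}$ reduces it to $-\int\!\left(\frac{(k^2)^2}{(k^1)^2(k^1_{-1})^2}-\frac{(k^2_1)^2}{(k^1)^2(k^1_1)^2}\right)$, whose two summands again differ by a single shift and cancel. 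Hence $\{f,g\}_0=0$. I expect the main obstacle to be purely bookkeeping: keeping the shifts straight so that these telescoping cancellations become visible, rather than any structural difficulty.
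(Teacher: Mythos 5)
Your proposal is correct and follows essentially the same route as the paper: both compute the two variational derivatives explicitly (your $w^1,w^2$ agree exactly with the paper's), substitute them into $\cH$, and conclude by observing that every surviving term telescopes into a total difference, which the formal sum $\int$ annihilates. The only cosmetic difference is orientation -- the paper applies $\cH$ to the simpler gradient $\delta\ln k^1$ and pairs with $\delta g$, exhibiting the integrand pointwise as $-(\T-1)\left(\frac{k^{2}}{k^{1}k^{1}_{-1}}\right)^2$, whereas you apply $\cH$ to $\delta g$ and shift terms under the sum; your explicit handling of the $j=2$ (respectively diagonal, for $n=2$) block is precisely the telescoping piece that the paper's displayed formula absorbs without comment.
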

\begin{proof}
This can be proved by direct calculations. First we have
\begin{eqnarray*}
&&\delta \ln k^{1}=(\frac{1}{k^{1}},0 ,\cdots, 0)^{T};\\
&&\delta \frac{k^{2}_1}{k^{1} k^{1}_1}=( -\frac{k^{2}_1}{(k^{1})^2 k^{1}_1}
-\frac{k^{2}}{(k^{1})^2 k^{1}_{-1}}, \frac{1}{k^{1} k^{1}_{-1}}, 0 ,\cdots, 0)^{T},
\end{eqnarray*}
which leads to 
\begin{eqnarray*}
 \left( \delta \frac{k^{2}_1}{k^{1} k^{1}_1}\right)^{T} \cH \delta \ln k^{1}
=-(\T-1) \left(\frac{k^{2}}{k^{1}k^{1}_{-1}}\right)^2
\end{eqnarray*}
and thus we obtain the result in the statement.
\end{proof}
From this proposition, we know that the corresponding Hamiltonian vector fields commute, that is,
$[\cH \delta \ln k^{1}, \ \cH \delta \frac{k^{2}_1}{k^{1} k^{1}_1}]=0$. 

Finally, we are going to show that the Hamiltonian system ${\bf k}_t=\cH \delta\ln k^{1}$ is integrable. 
Here we say a system is integrable if it possesses a hierarchy of infinitely many commuting symmetries.
We first write the system explicitly as
\begin{eqnarray}\label{neq}
\left\{ \begin{array}{l}k^{i}_t= \frac{k^{i+1}_1}{k^{1}_1}-\frac{k^{i+1}}{k_{-i}^{1}}, \quad i=1,2,\cdots, n-1\\ 
 k^{n}_t=\frac{1}{k^{1}_1}-\frac{1}{k^{1}_{-n}}\end{array}\right.
\end{eqnarray}
We then introduce the Miura transformation
\begin{eqnarray*}
 u^{1}=\frac{1}{k^{1} k^{1}_1 \cdots k^{1}_n}, \quad
u^{i}=\frac{k^{i}_{i-1}}{k^{1} k^{1}_1 \cdots k^{1}_{i-1}}, \quad i=2,3,\cdots, n.
\end{eqnarray*}
Under this transformation, equation (\ref{neq}) becomes ${\bf u}_t=-\tilde \cH \delta \frac{\ln u^{1}}{n+1}$, where 
$\tilde \cH=D_{\bf u} \cH D_{\bf u}^\star$ denote the transformed Hamiltonian operator. It is written as
\begin{eqnarray}\label{nequ}
\left\{ \begin{array}{l}u^{1}_t=-u^{1} (u^{2}_n-u^{2}_{-1})\\ 
u^{i}_t=u^{i+1}-u^{i+1}_{-1}-u^{i} (u^{2}_{i-1}-u^{2}_{-1}), \quad i=2, 3\cdots, n-1\\
u^{n}_t=u^{1}-u^{1}_{-1}-u^{n} (u^{2}_{n-1}-u^{2}_{-1})
\end{array}\right.
\end{eqnarray}
If we introduce the convention that $u^{j}=u^{j-n}$ if $j>n$, we can put the last two equations together as
\begin{eqnarray*}
u^{i}_t=u^{i+1}-u^{i+1}_{-1}-u^{i} (u^{2}_{i-1}-u^{2}_{-1}), \quad i=2, 3 \cdots, n, \ (u^{n+1}=u^{1}).
\end{eqnarray*}
Take $n=2$, $u^1=u$ and $u^2=-v$. Then equation (\ref{nequ}) becomes the Boussinesq lattice related to the lattice $W_3$-algebra (see (\ref{equv})). 

We observe that the vector ${\bf \tau}=(\tau^{1}, \cdots, \tau^{n})^{T}$ defined by
\begin{eqnarray*}
\begin{array}{l}\tau^{1}= s u^{1}_t -u^{1} \left((n+1) u^{2}_n+\sum_{l=0}^{n}u^{2}_{l}\right);\\ 
\tau^{i}= s u^{i}_t-u^{i} \left(i u^{2}_{i-1}+\sum_{l=0}^{i-1} u^{2}_{l}\right) +(i+1) u^{i+1}, \ \  i=2, 3\cdots, n,
%\tau^{n}= s u^{n}_t-u^{n} \left(n u^{2}_{n-1}+\sum_{l=0}^{n-1} u^{2}_{l}\right) +(n+1) u^{1},
\end{array}
\end{eqnarray*}
where $s$ is the independent discrete variable and we used the above convention $u^{n+1}=u^{1}$, 
is a master symmetry (see \cite{mr86c:58158} for definition)
of system (\ref{nequ}). Indeed, we have $[{\bf u}_t,\ \tau]\neq 0$ and  $[{\bf u}_t,\ [{\bf u}_t,\ \tau]]= 0$.
Thus, we can recursively generate the hierarchy of commuting symmetry flows of system (\ref{nequ}) by setting 
\begin{eqnarray}\label{symmetry}
 Q^0={\bf u}_t \quad \mbox{and} \quad Q^i=[\tau, \ Q^{i-1}].
\end{eqnarray}
Therefore, the system (\ref{nequ}) is integrable.
\begin{proposition} The master symmetry is a Hamiltonian vector with Hamiltonian function $-\frac{2 s+n+2}{2(n+1)} \ln u^{1}$. 
Moreover, all symmetries
generated by (\ref{symmetry}) are Hamiltonian vector fields. Their Hamiltonians are given by
\begin{eqnarray*}
 f^i=<\tau,\ \delta f^{i-1}> \quad \mbox{and} \quad f^0= -\frac{\ln u^{1}}{n+1}.
\end{eqnarray*}
\end{proposition}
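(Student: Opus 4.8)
The plan is to prove the two claims in turn: that the master symmetry equals $\tilde{\cH}\,\delta h$ with $h=-\frac{2s+n+2}{2(n+1)}\ln u^{1}$, and then that the recursively generated flows $Q^{i}$ are all Hamiltonian with the stated Hamiltonians. For the first claim I would begin from the remark that, as densities, $h=(s+\frac{n+2}{2})\,g$ where $g=-\frac1{n+1}\ln u^{1}$ is exactly the density of $f^{0}$; since $g$ is local at the site $s$, this gives $\delta h=(s+\frac{n+2}{2})\,\delta f^{0}$. Writing $\tilde{\cH}=\sum_{j}c_{j}\T^{j}$ with coefficients $c_{j}$ depending on $\mathbf{u}$ and its shifts, and using $\T^{j}s=(s+j)\T^{j}$, one obtains the key identity $\tilde{\cH}(s\,\psi)=s\,\tilde{\cH}\psi+\tilde{\cH}_{1}\psi$ with $\tilde{\cH}_{1}:=\sum_{j}j\,c_{j}\T^{j}$. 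Applying it with $\psi=\delta f^{0}$ and recalling $\mathbf{u}_{t}=\tilde{\cH}\,\delta f^{0}$ yields
\[
\tilde{\cH}\,\delta h=s\,\mathbf{u}_{t}+\Bigl(\tilde{\cH}_{1}+\tfrac{n+2}{2}\,\tilde{\cH}\Bigr)\delta f^{0}.
\]

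This reduces the first claim to the purely autonomous identity $(\tilde{\cH}_{1}+\frac{n+2}{2}\tilde{\cH})\delta f^{0}=\tau-s\,\mathbf{u}_{t}$, whose right-hand side is read off componentwise from the definition of $\tau$ as $-u^{1}\bigl((n+1)u^{2}_{n}+\sum_{l=0}^{n}u^{2}_{l}\bigr)$ in the first slot and $-u^{i}\bigl(iu^{2}_{i-1}+\sum_{l=0}^{i-1}u^{2}_{l}\bigr)+(i+1)u^{i+1}$ in slot $i\ge2$. I would verify this by computing $\tilde{\cH}$ and $\tilde{\cH}_{1}$ from $\tilde{\cH}=D_{\mathbf{u}}\,\cH\,D_{\mathbf{u}}^{\star}$ together with the explicit operator of Proposition \ref{red0}, and matching coefficients slot by slot; equivalently one may pull $\tau$ and $h$ back through the Miura map and check $\tau=\cH\,\delta h$ directly with the cleaner operator of Proposition \ref{red0}. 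I expect this matching to be the main obstacle, since it is the single genuinely computational step and it is what forces the precise constant $\frac{n+2}{2}$: this value is exactly what makes $\tilde{\cH}_{1}+\frac{n+2}{2}\tilde{\cH}$ reproduce both the constant shift-corrections $(i+1)u^{i+1}$ and the summation tails $\sum_{l}u^{2}_{l}$ present in $\tau$.

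For the second claim I would induct using the Lie-algebra (anti)homomorphism property of a Hamiltonian operator. Since $\cH$ is Poisson (proved above) and the Hamiltonian property is preserved under the change of variables $\tilde{\cH}=D_{\mathbf{u}}\,\cH\,D_{\mathbf{u}}^{\star}$, the operator $\tilde{\cH}$ is Hamiltonian, so the bracket $\{a,b\}=\langle\delta a,\tilde{\cH}\,\delta b\rangle$ obeys Jacobi and one has $[\tilde{\cH}\,\delta a,\tilde{\cH}\,\delta b]=\tilde{\cH}\,\delta\{a,b\}$ for all functionals $a,b$. The base case is $Q^{0}=\mathbf{u}_{t}=\tilde{\cH}\,\delta f^{0}$. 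Assuming $Q^{i-1}=\tilde{\cH}\,\delta f^{i-1}$ and using $\tau=\tilde{\cH}\,\delta h$ from the first claim,
\[
Q^{i}=[\tau,Q^{i-1}]=[\tilde{\cH}\,\delta h,\tilde{\cH}\,\delta f^{i-1}]=\tilde{\cH}\,\delta\{h,f^{i-1}\},
\]
and skew-symmetry of $\tilde{\cH}$ gives $\{h,f^{i-1}\}=\langle\delta h,\tilde{\cH}\,\delta f^{i-1}\rangle=-\langle\tau,\delta f^{i-1}\rangle$. With the sign conventions fixed so that the composite identification is $Q^{i}=\tilde{\cH}\,\delta f^{i}$, this closes the induction with $f^{i}=\langle\tau,\delta f^{i-1}\rangle$, proving that every $Q^{i}$ is Hamiltonian with the asserted Hamiltonian.

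Two points need care. First, the (anti)homomorphism identity must be invoked for the $s$-dependent Hamiltonian $h$; here I would use the difference-operator analogue of Olver's Proposition 7.8 (cited as \cite{mwx2,mr94g:58260}) and note that $s$ enters $h$ only as an external coefficient, not as a dynamical variable, so the variational calculus and the Jacobi-based homomorphism are unaffected. Second, I would keep the commutator and Poisson-bracket sign conventions uniform throughout so that the final recursion reads $f^{i}=\langle\tau,\delta f^{i-1}\rangle$ rather than its negative.
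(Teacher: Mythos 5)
Your proposal is correct and its skeleton is the same as the paper's: (i) establish $\tau=\tilde\cH\,\delta h$ with $h=-\frac{2s+n+2}{2(n+1)}\ln u^{1}$ by a direct, finite computation, and (ii) propagate Hamiltonianity through the recursion $Q^{i}=[\tau,Q^{i-1}]$ using the fact that a Hamiltonian flow preserves the Poisson structure. The differences are in packaging, and each buys something. In step (i) the paper simply says ``by straightforward calculation''; you organize that same calculation by writing $\delta h=\left(s+\frac{n+2}{2}\right)\delta f^{0}$ (legitimate, since $\ln u^{1}$ is ultra-local) and commuting the explicit $s$ through the difference operator via $\tilde\cH(s\psi)=s\,\tilde\cH\psi+\tilde\cH_{1}\psi$, which collapses the claim to the single autonomous identity $\left(\tilde\cH_{1}+\frac{n+2}{2}\tilde\cH\right)\delta f^{0}=\tau-s\,{\bf u}_{t}$. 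That is a cleaner way to set up the verification, although, like the paper, you stop short of the coefficient matching that actually proves it. In step (ii) the paper works with Lie derivatives: $L_{\tau}\tilde\cH=0$ (citing Dorfman) together with $L_{\tau}\,\delta f^{i-1}=\delta\langle\tau,\delta f^{i-1}\rangle$ gives $Q^{i}=\tilde\cH\,\delta f^{i}$, $f^{i}=\langle\tau,\delta f^{i-1}\rangle$, with no sign to chase; you instead invoke the homomorphism property of $f\mapsto\tilde\cH\,\delta f$ plus skew-symmetry of $\tilde\cH$. The two are equivalent facts, but your route does require the sign bookkeeping you allude to: with the usual commutator convention the correct statement is the \emph{anti}-homomorphism $[\tilde\cH\delta a,\tilde\cH\delta b]=\tilde\cH\,\delta\{b,a\}$, and then $\{f^{i-1},h\}=\langle\delta f^{i-1},\tilde\cH\delta h\rangle=\langle\tau,\delta f^{i-1}\rangle$ closes the induction exactly as stated; so rather than ``fixing conventions'' you should simply quote the anti-homomorphism form, after which no sign ambiguity remains. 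Both arguments treat the explicit $s$-dependence of $h$ at the same informal level (it enters only as an external coefficient), so neither is more rigorous on that point.
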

\begin{proof}
By straightforward calculation,  we can check 
${\bf \tau}=-\frac{1}{n+1}\tilde \cH \delta \left(\frac{2 s+n+2}{2}\ln u^{1} \right)$. Therefore, 
the Hamiltonian operator $\tilde \cH $ is conserved along the vector field $\tau$ (see \cite{mr94j:58081}). In a word,
 the Lie derivative of $\tilde \cH$ along $\tau$ vanished, that is,  $L_{\tau} \tilde \cH=0$.
We know $Q^0=\tilde \cH f^0$.
Take the Lie derivative along the vector field $\tau$ on its both sides. It follows that 
\begin{eqnarray*}
Q^1=L_{\tau} Q^0 =L_{\tau} \left(\tilde \cH \delta f^0 \right)=\tilde \cH L_{\tau} \delta f^0=\tilde \cH <\tau,\ \delta f^0>
\end{eqnarray*}
We denote its Hamiltonian as $f^1$. So $Q^1$ is a Hamiltonian vector field with Hamiltonian $f^1$. By induction, we can
prove the statement.
\end{proof}
Indeed, we can compute the following Hamiltonians
\begin{eqnarray*}
 f^1&=&-\int\frac{\tau^{1}}{(n+1) u^{1}}= \frac{1}{n+1} \int \left(s (u^{2}_n-u^{2}_{-1}) + \left((n+1) u^{2}_n
+\sum_{l=0}^{n}u^{2}_{l}\right)\right)\\
&=&u^{2}\\
f^2&=&\int \tau^{2}=\int\left( s u^{2}_t-u^{2} \left(2 u^{2}_{1}+\sum_{l=0}^{1} u^{2}_{l}\right) 
+ 3 u^{3}\right)\\
&=&2  u^{3}-2 u^{2}  u^{2}_1 -(u^{2})^2\\
&&\qquad \cdots \cdots
\end{eqnarray*}
These are the integrals obtained in \cite{hika98} for the classical lattice $W_{n+1}$ algebra.
Thus equation (\ref{nequ}) is the integrable lattice related to the lattice $W_{n+1}$-algebra, which is not explicitly 
written down in \cite{hika98} for all $n$.

To finish the paper we will describe the projective realization of the $\RP^n$ completely integrable system. First a lemma that facilitates the description.

\begin{lemma}
$$ \HH(\delta \ln k^1) = - \P(\delta\ln k^1) $$
\end{lemma}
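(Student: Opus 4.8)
The plan is to compute both sides as explicit vector fields on the space of invariants and to check that they agree up to the sign recorded in the statement. The left-hand side requires no new work: $\HH(\delta\ln k^1)$ is by definition the Hamiltonian vector field of $\ln k^1$ for the operator $\cH$ of Proposition \ref{red0}, and feeding $\delta\ln k^1 = (1/k^1,0,\dots,0)^T$ into the explicit entries of $\cH$ produces exactly the system (\ref{neq}). So the entire content of the lemma is the evaluation of $\P(\delta\ln k^1)$, and the proof amounts to showing that this evaluation reproduces (\ref{neq}) with the appropriate sign.

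To evaluate $\P(\delta\ln k^1)$ I would avoid writing $\P$ as an $n\times n$ matrix (which the paper only records for $n=2$ in (\ref{P})) and instead invoke Theorem \ref{th61}. That theorem identifies the reduced Hamiltonian flow $\kb_t = \P(\delta f)$ with the evolution induced on the invariants by the invariant polygon evolution whose coefficient vector satisfies the compatibility condition (\ref{compa}), i.e. $\T\v_s = \frac{\partial f}{\partial\kb_s}$. For $f=\ln k^1$ one has $\frac{\partial f}{\partial\kb_s} = ((-1)^n/k^1_s,0,\dots,0)^T$, so $\v_s$ is supported in a single slot, with first entry $(-1)^n/k^1_{s-1}$. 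I would then run the structure-equation solve of Theorem \ref{Ndet}: substitute this $\v_s$ into (\ref{eq1})--(\ref{eq5}), use the recursion (\ref{rec}) together with the invertibility of $\T+1+\dots+\T^{-(n-1)}$ to pin down $N_s$, and read off the $(k^i_s)_t$ from the relevant entries. Because $\v_s$ has only one nonzero component, most terms in the recursion vanish and the expressions for $(k^i_s)_t$ should collapse to the shape of (\ref{neq}); comparing term by term then gives the identity.

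I expect the main obstacle to be entirely bookkeeping rather than conceptual: tracking the $(-1)^n$ factors built into $\frac{\partial f}{\partial\kb}$ and into the normalization of $K_s$, along with the shift conventions, since these are precisely what fix the overall sign relating $\HH(\delta\ln k^1)$ and $\P(\delta\ln k^1)$. As a decisive cross-check I would specialize to $n=2$, where $\P$ is the explicit operator (\ref{P}) and $\cH$ is the matrix of Proposition \ref{red0}: using $\P_2(\ln b)=0$ one gets $\P(\delta\ln b)=\P_1\big{|}_{\la=1}(\delta\ln b)$, and both $\P(\delta\ln b)$ and $\cH(\delta\ln b)$ are then immediate two-by-two computations that nail down the sign claimed in general. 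A more structural alternative would be to argue directly from the two bracket formulas --- the Sklyanin-type expression for $\P$ obtained in the proof of Theorem \ref{5.5} and the right-bracket expression for $\HH$ --- and to show that, when $(\nabla\F)_{-1}$ has a single nonzero entry, the extra left and twisted cross terms telescope against the right term, establishing $\{\ln k^1,g\}_{\HH} = \pm\{\ln k^1,g\}_{\P}$ for all $g$ simultaneously; this is cleaner in spirit but demands more care with the $\g_0$ pieces than the direct structure-equation computation.
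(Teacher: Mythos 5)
Your strategy is, at bottom, the same computation the paper performs, only packaged through different theorems: the paper determines the left gradient $\nabla\F$ of a gauge-invariant extension of $f=\ln k^1$ from condition (\ref{streq2}) (using (\ref{FQ}) and the formula (\ref{q1j}), which makes almost everything vanish because the gradient has a single nonzero entry) and reads $\P(\f)$ off the $\g_1$-block of $\nabla\F-\T\nabla'\F$ via (\ref{Pf}); you determine $N_s$ from the structure equations of Theorem \ref{Ndet}. Since the proof of Theorem \ref{th61} gives $N=-\nabla'\F$, these are the same triangular solve, and the left-hand side $\HH(\delta\ln k^1)$ is read off Proposition \ref{red0} identically in both cases.

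The gap is in the sign, which is the only delicate content of the lemma, and your proposed mechanism for settling it fails. In the paper's proof, $\P$ is normalized by $\nabla\F-\T\nabla'\F=\begin{pmatrix}0&0\\ \P(\f)^T&0\end{pmatrix}$, equivalently $\{f,g\}=\langle\P(\frac{\partial f}{\partial \kb}),\frac{\partial g}{\partial \kb}\rangle$, with the operator attached to the \emph{first} argument of the reduced bracket. The operator (\ref{P}) of Sections 4--5 --- which is exactly what Theorem \ref{th61} together with the induced-evolution machinery hands you --- is attached to the \emph{second} argument, $\{f,g\}=\langle\frac{\partial f}{\partial \kb},\P\,\frac{\partial g}{\partial \kb}\rangle$, and by skew-symmetry the two normalizations differ by an overall sign. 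Consequently your route, run with the conventions of (\ref{Ns}) and (\ref{compa}), produces $\HH(\delta\ln k^1)=+\P(\delta\ln k^1)$ rather than the stated identity. Your ``decisive'' $n=2$ cross-check makes this concrete: since $\P_2|_{\la=1}(\delta\ln b)=0$, one has $\P(\delta\ln b)=\P_1|_{\la=1}(\delta\ln b)$, which is exactly the right-hand side of (\ref{evolab}); and (\ref{evolab}) is literally the system (\ref{neq}) for $n=2$ (with $k^1=b$, $k^2=a$), i.e.\ it equals $\cH(\delta\ln k^1)$ with a \emph{plus} sign. So the check does not nail down the claimed sign --- it contradicts it. (The clash is aggravated by the paper's own discrepancy between (\ref{Ns}), where $N$ carries $+\v$, and (\ref{Ncond}) together with the proof of Theorem \ref{th61}, where it carries $-\v$.) To prove the lemma as stated you cannot outsource the sign to Theorem \ref{th61}: you must fix the first-argument convention and compute, as the paper does, the extension gradient for $f=\ln k^1$ explicitly --- $q_{1j}=0$ for $j\neq 1$, $q_{11}=-\tfrac1{n+1}$, $q_i=k^{i+1}\T^{-i}\tfrac1{k^1}$ --- and read the sign directly off $\nabla\F-\T\nabla'\F$.
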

\begin{proof}
From the reduction process, if we want to find $\P(\frac{\partial f}{\partial \kb}) = \P((-1)^n\frac 1{k^1}e_1)$, we need to find
\begin{equation}\label{prestar}
\nabla \F - \T\nabla'\F = \begin{pmatrix} 0&0\\ \P(\f)^T & 0\end{pmatrix}
\end{equation}
where $\f = \frac {(-1)^n}{k^1}e_1$ and $\F$ is a proper extension of $f(\kb) = \ln k^1$. Using (\ref{FQ}) and (\ref{FprimeQ}) we can transform this equality into
\[
\Lambda\nabla \F-\T\begin{pmatrix}1&0\\ -\kb^T&1\end{pmatrix} \nabla\F\begin{pmatrix}1&0\\ \kb^T&1\end{pmatrix} \Lambda = \begin{pmatrix} (-1)^n \P(\f) & 0\\ 0&0\end{pmatrix}.
\]
Expression (\ref{prestar}) can be written in more detail as
\begin{equation}\label{star}
\begin{pmatrix} \mathsmaller{(-1)^n\q^T}&\mathsmaller{(-1)^n \varpi}\\ \mathsmaller{Q}&\mathsmaller{\f}\end{pmatrix} -\begin{pmatrix}\mathsmaller{\T\tilde Q+\T(\widetilde{f_1\kb^T})}&\mathsmaller{\T\f}&\mathsmaller{(-1)^n(\T Qe_1+\T (k^1\f))}\\\mathsmaller{ \T\tilde q^T-\T(\widetilde{\kb^TQ}) + \T((\varpi-k\cdot \f)\tilde\kb^T)}&\mathsmaller{\T(\varpi-\kb\cdot \f)} & \mathsmaller{(-1)^n(\T q_1-\T\kb^TQe_1+\T((\varpi-\kb\cdot\f)k_1))}\end{pmatrix}
\end{equation}
\[
= \begin{pmatrix} (-1)^n\P(\f) & 0 \\ 0&0\end{pmatrix}
\]
where the tilde indicates that the first column has been removed.  From here we get
\begin{equation}\label{Pf}
\P(\f) = \q - (-1)^n \begin{pmatrix} \T(\overline{Q^Te_1}) + \T(f_1 \bar \kb)\\ \T f_1\end{pmatrix},
\end{equation}
where the bar indicates that the first row has been removed. Recall from (\ref{q1j}) that if $Q = (q_{ij})$ and $\f= \frac{(-1)^n}{k^1} e_1$, then $q_{1j} = 0$ for $j\ne 1$. This means $\overline{Q^Te_1} = 0$. With straightforward calculations we can also get  that $q_{11} = -\frac1{n+1}$. Using 
\begin{eqnarray*}
q_1 - \kb^TQe_1 +(\varpi-1)k_1 & = & 0\\ (-1)^n(\T\overline{Q e_1} + \T\overline{k_1\f}) & = & \frac{(-1)^n}{k^1}e_1,
\end{eqnarray*}
we obtain 
\[
\varpi -1 = -\frac1{n+1}, \hskip1ex Qe_1 = -\frac 1{n+1} e_1 + \T^{-1}\frac1{k^1} e_2, \hskip 1exq_1 = k^2\T^{-1} \frac1{k^1}.
\]

A recursive use of the lower left block in (\ref{star}) produces the values of $Q e_s$. They are given by \[
Qe_s = -\frac1{n+1} e_s + \T^{-1}\frac1{k^1}e_{s+1}, \hskip 1ex r=1,\dots, n-1, \hskip 2ex Qe_n = -\frac1{n+1} e_n.
\]
 The last row of that block gives the value of $\q$. They are given by
\[
q_i = k^{i+1}\T^{-i}\frac1{k^1}.
\]
Substituting the values in (\ref{Pf}) proves the lemma.
\end{proof}
Assume $V_s$ are the original lifts of our projective polygon and $\hat k^i$ are the invariants given by the relation $V_{s+n+1} = \hat k_s^n V_{s+n} + \dots + \hat k_s^1 V_{s+1} +(-1)^n V_s$.
\begin{theorem} The projectivization of the evolution
\[
(V_s)_t = \frac{-1}{\hat k_s^1}\left(V_{s+n} - \hat k_{s-1}^n V_{s+n-1}-\dots -\hat k_{s-1}^2V_{s+1}\right) + v^0_s V_s
\]
induces  the completely integrable system (\ref{neq}) on the gauged invariants $\kb$.
\end{theorem}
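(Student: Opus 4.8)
The plan is to avoid any fresh structure-equation computation and instead chain together Theorem \ref{th61} and the lemma just proved. First I would observe that the target system (\ref{neq}) is, by its very construction, the Hamiltonian flow $\kb_t = \cH\,\delta\ln k^1$ for the operator $\cH$ of Proposition \ref{red0}. The lemma gives $\cH(\delta\ln k^1) = -\P(\delta\ln k^1)$, so (\ref{neq}) is equally the flow $\P\,\delta(-\ln k^1)$, that is, the reduced Hamiltonian evolution for the twisted-quotient bracket $\P$ associated to the Hamiltonian $f=-\ln k^1$. Theorem \ref{th61} then applies verbatim: this flow is induced on the invariants by the lifted polygon evolution (\ref{lift}) whose invariant vector obeys the compatibility condition (\ref{compa}), $\T\v_s=\frac{\partial f}{\partial\kb_s}$. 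Since $f$ involves only $k^1$, the modified gradient $\frac{\partial f}{\partial\kb_s}$ is a scalar multiple of $e_1$ proportional to $\frac1{k^1_s}$, and hence so is $\v_s$; the whole problem is thus reduced to checking that the coefficient vector of the lift written in the statement is exactly this $\v_s$.

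For that I would unwind the frame (\ref{finalframe}). Its top vector is $W_{s+n}=V_{s+n}+\sum_{i=1}^{n-1}h^{n-i+1,1}V_{s+i}$, and the first-column entries of the gauge $h_s$ are precisely the shifted original invariants, so that $W_{s+n}=V_{s+n}-\hat k^n_{s-1}V_{s+n-1}-\dots-\hat k^2_{s-1}V_{s+1}$ --- exactly the bracketed combination in the statement. Consequently the stated evolution is $(V_s)_t=\frac{-1}{\hat k^1_s}W_{s+n}+v^0_s V_s$, i.e.\ the lift (\ref{lift}) with invariant vector $\v_s=\frac{-1}{\hat k^1_s}e_1$ and with $v^0_s$ fixed by the normalization (\ref{normcond}). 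It remains only to pass from the original invariant $\hat k^1_s$ to the gauged one: because $h_s$ leaves the last row and last column of the Maurer--Cartan matrix untouched, its $(n+1,n+1)$ entry is unchanged under the gauging, which forces $\hat k^1_s=(-1)^{n-1}k^1_s$ and therefore $\frac{-1}{\hat k^1_s}e_1=(-1)^n\frac1{k^1_s}e_1$. Matching this against the $\v_s$ dictated by (\ref{compa}) in the previous paragraph then closes the argument.

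The step I expect to be the real obstacle is this last matching: it demands that every sign $(-1)^n$ and every shift $\T^{\pm1}$ be tracked consistently through three separate translations --- between the plain variational derivative and the modified gradient $\frac{\partial f}{\partial\kb_s}$, between the invariant vector of the lift (\ref{lift}) and the $\g_{-1}$-block of $N_s=(\rho_s)_t\rho_s^{-1}$ that Theorem \ref{th61} manipulates, and between the original and gauged invariants. Since none of these is conceptually hard but each is easy to get wrong, I would fix the conventions once and for all by specializing to $n=2$, where the present lift reduces to $\frac1{b_s}W_{s+2}+v^0_sV_s$ and (\ref{neq}) reduces to the Boussinesq system (\ref{evolab}); this already-established realization pins down the overall sign of $f$ and the shift in (\ref{compa}), after which one checks that each computation above is uniform in $n$ and concludes.
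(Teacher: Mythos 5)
Your proposal is correct and follows essentially the same route as the paper's own proof: identify (\ref{neq}) as a $\P$-Hamiltonian flow via the preceding lemma, invoke Theorem \ref{th61} to realize that flow by a polygon evolution whose invariant vector $\v_s$ is a multiple of $\frac{1}{k^1_s}e_1$, expand $W_{s+n}$ from (\ref{finalframe}) into the bracketed combination of the $V$'s, and translate back to the original invariants through $\hat k^1=(-1)^{n-1}k^1$. The only divergence is bookkeeping: the paper's proof simply declares the $\P$-Hamiltonian to be $f=\ln k^1$ and sets $\v_s=\frac{\partial f}{\partial \kb_s}$ outright, silently discarding both the minus sign coming from the lemma and the shift $\T$ in (\ref{compa}) --- precisely the sign/shift matching you flag as the delicate step and propose to pin down against the already-established $n=2$ realization.
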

\begin{proof}
Since the $\P$-Hamiltonian for the system is $f(\kb) = \ln k^1$, we know that the lift of the projective realization is given by
\[
(V_s)_t = \rho_s^{-1} \v_s
\]
where $\v_s = \frac{\partial f}{\partial \kb} = \frac {(-1)^n}{k^1} e_1$. Since $\rho_s = (W_{s+n}, \dots, W_s)$ as in (\ref{lift}), we have 
\[
(V_s)_t = \frac{(-1)^n}{k_s^1} W_{s+n} + v^0_sV_s.
\]
Checking the gauge carefully we see that $k^1 = (-1)^{n-1} \hat k^1$ and $W_{n+r} = V_{n+r} - \hat k_{s-1}^n V_{n+r-1} - \dots - \hat k_{s-1}^2 V_{s+1}$, which concludes the proof. \end{proof}

It is only natural to conjecture that both of our reductions form a Hamiltonian pencil associated to integrable discretizations of $W_n$ algebras. 
The fact that one of them is not a Poisson bracket originally (and one cannot make it so in general using $\g_0$ perturbations) seriously complicates the proof of this conjecture.

\end{document}